%% file: arxiv_main.tex
\documentclass[a4paper,11pt]{article}
\usepackage[margin=1in]{geometry}
\usepackage{amsmath,amssymb,amsthm,mathtools,mathrsfs}
\usepackage[backend=biber,style=alphabetic,sorting=nyt,giveninits=true,maxbibnames=10]{biblatex}
\usepackage[hidelinks]{hyperref}        
\usepackage{authblk}
\usepackage{placeins}
\newcommand{\keyword}[1]{\textbf{Keywords:} #1}

\input{packages}

\title{Timing Games: Probabilistic backrunning and spam}
\author[1]{Bruno Mazorra}
\author[1]{Christoph Schlegel}
\author[2]{Akaki Mamageishvili}

\affil[1]{Flashbots}
\affil[2]{Offchain Labs}
\begin{document}
\maketitle

\begin{abstract} There are $n$ players who compete by timing their actions. An opportunity appears randomly on a time interval. Whoever takes an action the fastest after the opportunity has arisen wins. The occurrence of the opportunity is observed only with a delay. Taking actions is costly. We characterize the unique symmetric equilibrium of this game and study worst-case inefficiency of equilibria.
Our main motivation is the study of ``probabilistic backrunning" on blockchains, where arbitrageurs want to place an order immediately after a trade that impacts the price on an exchange or after an oracle update. In this context, the number of actions taken can be interpreted as a measure of costly ``spam" generated to compete for the opportunity.
\end{abstract}
\keyword{Probabilistic MEV; First-come-first-serve; Spam.}
\input{paper_content}
\printbibliography
\newpage
\appendix
\input{appendix}
\end{document}

%% file: packages.tex
\usepackage{graphicx}
\usepackage{verbatim}

\usepackage{pgfplots}
\usepackage{pgfplotstable}
\usepgfplotslibrary{groupplots}
\pgfplotsset{compat=1.18}
\usepackage{tikz}
\usetikzlibrary{arrows.meta,decorations.pathreplacing}
\usetikzlibrary{patterns,patterns.meta,calc}

\usepackage{xcolor,xparse}
\usepackage{enumitem}
\usepackage{microtype}

\newtheoremstyle{normal}  
  {3pt}    
  {3pt}    
  {\normalfont}  
  {}       
  {\bfseries}  
  {.}      
  {0.5em}  
  {}       

\theoremstyle{normal}
\newtheorem{theorem}{Theorem}[section]
\newtheorem{lemma}[theorem]{Lemma}

\newtheorem{proposition}[theorem]{Proposition}

\theoremstyle{remark}
\newtheorem{remark}[theorem]{Remark}

\newlist{propenum}{enumerate}{1}
\setlist[propenum]{label=(\alph*), ref=\theproposition(\alph*), leftmargin=*}

\newcommand{\E}{\mathbb{E}}

\newcommand{\Succ}{\text{Succ}}
\newcommand{\supp}{\text{supp}}

%% file: paper_content.tex
\section{Introduction}
``Frontrunning" and ``backrunning" denote the practices of trying to place an order immediately before or after a target transaction to extract financial gains. Natural settings where such timing games arise in practice are financial markets, particularly on blockchains.  In many contexts, the opportunity to frontrun or backrun does not arrive at a known time, but rather probabilistically. There is a significant likelihood that an opportunity will materialize within a given time window, but its exact arrival time is uncertain. This occurs, for example, in blockchain systems with private mempools, where state updates or transactions become visible only with a delay~\cite{solmaz2025optimistic}.  
During the time window between two blocks or between the publishing of new state updates, an opportunity may materialize at an unpredictable moment (for example, an oracle update, or an arbitrage opportunity becoming available). A player who waits until the opportunity is \emph{certain} may be too late; as a result, players may instead send transactions randomly, hoping to be the first one positioned immediately after an opportunity appears. When many players do this, the outcome is not just a race for speed, but a race paid in redundant inspections—i.e., \emph{spam} that consumes blockspace. This phenomenon is most naturally enabled on \emph{Turing-complete} smart-contract platforms (including many L1s and L2 rollups), where opportunities are defined by rich onchain state and can be targeted by programmable conditional strategies.%

Abstracting from the blockchain setting, the defining feature of the strategic interactions we study is that players decide when to act \emph{between} two points in time at which new information is revealed. Thus, the game is one of strategic positioning in time, rather than one of quick reaction to newly observed information. Strategic location choice has been a popular topic in game theory, ever since the classical work of~\cite{hotbllino1929stability}. Questions of strategic timing choice are related but different, most fundamentally, because time is directed: in spatial competition, the players capture every opportunity that is closer to them than to their competitor. In a timing game, the players capture every opportunity that is ``to the left" (``to the right") to them and has no competitor located between them and the opportunity. This leads to very different strategic interaction and equilibrium behavior. Randomization is a key part of any equilibrium and there is a tension between optimal opportunity capture and the desire to be strategically unpredictable. The first gives an incentive to act as late as possible, the second gives an incentive to send a transaction with sufficient likelihood early enough to not be pre-empted by a competitor. We can illustrate these features with the following example, which is a minimal non-trivial instance of the kind of strategic interaction we study throughout the paper:

\paragraph{Motivating example}
Suppose that two arbitrageurs compete to place a transaction behind a target transaction that arrives uniformly at random over the unit time interval. The exact time at which the target transaction is revealed is only observed with a delay. We assume it is only revealed after the time interval has ended. Suppose the arbitrageurs can place transactions at a cost $c\geq 1/e$ in the time interval and placing the transaction directly behind the target transaction has an expected value of $1$.

As we show later in the paper, for the high cost case $c\geq 1/e$ the arbitrageurs will send at most one transaction in equilibrium (and for the low cost case they will send multiple transactions in equilibrium). Assuming that at most one transaction is sent it is easy to derive a mixed equilibrium:
Let $T$ denote the random arrival time of the target transaction. Let $X_2$ be the random time at which the second player sends their transaction (with $X_2>1$ in case they do not send a transaction).
Denote by $\sigma(t) := \Pr[X_2\leq t]$ the probability of the transaction being sent before time $t$. 
The payoff of the first player who sends an order at time $x\in [0,1]$ is
\begin{align*}
\Pr[X_2\notin(T,x)\text{ and }T\leq x]-c=\int_0^x\Pr[X_2\not\in(t,x)]dt-c
                                                  &=\int_0^x [1-\sigma(x)+\sigma(t)]dt-c
\end{align*}
In a symmetric zero-profit equilibrium it holds that 
\begin{equation*}
    \int_0^x [1-\sigma(x)+\sigma(t)]dt=c \text{ for all }x\in\text{supp}(\sigma).
\end{equation*}
Note that in a zero profit equilibrium the profit needs to be $0$ for all points in the support of $\sigma$ and that points $x<c$ lead to a loss so that $\sigma(x)=0$ for $x\leq c$. With the boundary condition $\sigma(c)=0$,
the integral equation has\footnote{This can be seen most easily by differentiating the equation to obtain an equivalent ODE.} the solution:$$\sigma(x)=\begin{cases}\log(x/c),\quad&\text{ if }x\geq c\\
0\quad &\text{ if } x\leq c.
\end{cases}$$
The equilibrium strategy for the case $c=1/e$ is depicted in Figure~\ref{figure:example}. It is also true, but much less straightforward to establish, that this equilibrium is almost surely unique.\qed

Characterizing the equilibria for an arbitrary number of players and low cost $c<1/e$ is the main purpose of our paper. In that case, it is still the case that the opportunity could be extracted with a single transaction placed at the end of the interval. However, with low cost, it will turn out that the players send many transactions to capture the opportunity in a wasteful competition. The shape of the distribution for timing the first of these many transactions however, will, for two players, still be log-uniform (but with a smaller support) following a similar logic as above, while the timing of subsequent transactions becomes more and more uniform. Equilibria in the general case will share another important feature with the previous example: in equilibrium players make zero payoff.

\input{plots/plottest}

The paper has the following outline:
In Section~\ref{section:model}, we describe the general version of the game, where $n$ risk-neutral players compete over a single opportunity, the arrival time distribution can be any absolutely continuous distribution, and the cost can be any number $c>0$. In Section \ref{section:analysis} we analyze symmetric mixed Nash equilibria of the game. Mixed strategies are distributions over finite subsets of $[0,1]$, which we interpret as random finite sets of points in time (equivalently, finite point processes). We derive first-order optimality conditions via local perturbations, and—using an additional Bellman/monotone-successor argument—prove that every symmetric equilibrium has zero expected payoff. We then characterize equilibria through random variable $X_i$ and a strictly increasing map
$\psi$, so that a player $i$'s actions are generated recursively as $\{\psi^{(k)}(X_i)\}_{k\geq 0}\cap[0,1]$. See Figure~\ref{figure:point_rep}. Using this characterization we obtain existence, uniqueness, and an explicit algorithm that computes the unique symmetric Nash equilibrium.

While we study the version of the game where players need to ``backrun", i.e. act first after the opportunity arrives, our analysis would equivalently work for the ``frontrunning" version of our game, where players want to be the last to act before the opportunity arrives.

In Section \ref{section:spam}, we derive a general lower bound on equilibrium ``spam" that applies to any number of players and depends only on the opportunity’s value and the cost per transaction. Spam here means the number of transactions sent in equilibrium in excess of what would be necessary to capture the opportunity.
More specifically, in the unique symmetric Nash equilibrium, the expected total spam is bounded above by the ratio of the expected value of the opportunity to the cost per transaction, and bounded below by the same quantity minus one. Finally, we prove that this lower bound is essentially tight: as the number of players tends to infinity, the total spam converges to this lower bound.

The most salient interpretation of our model is that of trading on a blockchain where traders operate with delayed access to the true state. In this context, time runs from the moment when for example the last block is published (or the latest state-diff is streamed) until the next block or state update is determined. During this window, each player can broadcast any number of transactions, each incurring the same fixed marginal cost. Transactions are processed in arrival order\footnote{This includes blockchains that sequence transactions strictly by arrival time (first-come, first-served), as well as blockchains that sequence by priority fee but use arrival time as the tie-breaker. In the latter case, however, the implicit assumption is that transactions that create an opportunity of backrunning (such as oracle updates) have a predictable priority fees that are matched (up to a tick) by the arbitrageurs. This is usually a reasonable assumption.}, so timing of delivering transactions matters. The opportunity is extracted by the player whose transaction is the \emph{first to arrive after} the opportunity appears; ties are broken uniformly at random. If no player has a transaction arriving after the opportunity appears, we assume the opportunity is instead captured by the block producer or another third party (e.g., via a top-of-block auction mechanism
) and is effectively unavailable to the players.

\input{plots/plot2.tex}

\subsection{Related Work}
A growing body of work studies maximal extractable value (MEV) arising from transaction ordering in blockchains. \cite{daian2019flash} document how decentralized exchange arbitrage and liquidation opportunities induce intense competition among searchers, leading to priority gas auctions, fee escalation, and socially wasteful equilibria in which profits are largely dissipated through transaction fees (“Flash Boys 2.0”). Much of this literature focuses on environments where profitable opportunities are immediately observable once they exist, and where competition is primarily about ordering after a known state change. In contrast, more recent empirical work—particularly on rollups—documents probabilistic MEV extraction, where agents submit transactions before they can be certain that an opportunity has materialized, ~\cite{solmaz2025optimistic}. These transactions function as probabilistic probes: most fail, but the few that succeed capture significant value. Our paper provides a theoretical foundation for this behavior.

Probabilistic front and backrunning has some similarities with, as well as differences from high frequency competition in financial markets~\cite{brogaard2014high,budish2015high,baldauf2020high}.  They are similar in the sense that they are (costly) competition for timely execution of a trade. However, high frequency trading wants to capitalize on a very short frequency information advantage, whereas in the probabilistic arbitrage world, traders are symmetrically informed and want to anticipate a noise trade. The effects on market participants share, however, some similarities: the competition induces wasteful infrastructure cost and regular traders might pay through worse execution of their trades.

From a game-theoretic perspective, our model is most closely related to work on timing games under partial observability. \cite{lotker2008game} study a “game of timing and visibility”, which corresponds to the ``frontrunning" version of our model in a restricted case: in their model the number of actions taken by the players is not endogenous and arbitrary; instead, each player only takes exactly one action. Our setting departs from this framework in a crucial way: players choose not only when to act, but also how many times to act. Allowing endogenous action counts leads to an infinite-dimensional strategy space and fundamentally different equilibrium analysis.

Our model is also loosely related to the model of competitive inspection of \cite{eilat2023opportunity}, in which (two) players repeatedly pay a cost to check whether an uncertain opportunity has arrived. Once inspected, the state becomes known, and strategic interaction proceeds accordingly. While this captures some aspects of costly probing, our model differs along two dimensions. First, in our setting the arrival of the opportunity may not be immediately or publicly observable even after an action is taken, inducing continued attempts. Second we consider a finite time horizon. This leads to very different equilibrium behavior, e.g. our equilibria are inherently non-Markovian in the sense that the timing of the first, second, third etc. action all follow distinct distributions. These differences make our model better suited to blockchain environments where transactions are irrevocable and information arrival is noisy or delayed and proceeds in discrete steps.

The competitive structure of our model resembles an all-pay contest: all players incur costs through their actions, but only the earliest successful action captures the prize. Classic all-pay contest models~\cite{siegel2009all} involve a one-dimensional effort choice, with equilibrium characterized by complete rent dissipation. In our model effort is multi-dimensional, consisting of both timing (which is itself a high-dimensional choice problem) and frequency of actions. This richer structure preserves the zero-expected-profit property of equilibrium while generating sharp predictions about transaction volume, spacing, and aggregate costs. 

Our work is also tangentially related to the literature on wars of attrition initiated by~\cite{WoA} where players incur costs over time until one concedes. While both frameworks involve continuous time, mixed strategies, and inefficiency, the strategic problem differs in kind. In attrition games, players choose when to exit; in our model, players choose when and how often to enter by submitting transactions. Nevertheless, both types of models highlight how time-dependent costs and strategic uncertainty generate inefficient equilibria.

\section{Model}\label{section:model}
The timing game $\mathcal G(n,c,G)$ is defined as follows:
\begin{enumerate}
    \item \textbf{The Environment.} An opportunity of value\footnote{The game also extends to the case where the opportunity value is random with a finite mean: it suffices to normalize by scaling payoffs so that the expected opportunity value is measured relative to the transaction cost. If the expected value is below the marginal cost, the game is trivial, in the unique equilibrium, no player submits a transaction.} $\$1$ appears on the time interval $[0,1]$ according to an absolutely continuous and strictly increasing cumulative distribution function (CDF) $G$ with density $g$. To compete for this opportunity, taking an action at time $t\in [0,1]$ has a fixed marginal cost $c\in (0,1)$.\footnote{In the blockchain context, we generically think of $c$ as the transaction fee (including tips).  However, it can also contain other costs of sending transactions, e.g. infrastructure cost. For example, some protocols rate limit nodes on the number of requests they can send to the sequencer, in which case players face a cost of opening new connections.}
    
    \item \textbf{Players and Strategies.} There is a set of players $[n]:=\{1,\dots, n\}$. The strategy space of each player $i\in [n]$, denoted by $\mathcal S$, consists of all finite subsets of $[0,1]$. We interpret a pure strategy $N_i\subseteq [0,1]$ as the set of times at which player $i$ takes an action (in the blockchain application, an action is a transaction). A mixed strategy for player $i$ is an element $\sigma_i \in \Delta(\mathcal S)$, the set of all probability measures over $\mathcal S$.
    
    \item \textbf{Payoffs.} For times $x,y\in [0,1]$, we define the directed distance $d_+:(\mathbb R_+\cup \{+\infty\})^2\rightarrow\mathbb R_+\cup \{+\infty\}$ as:
    \begin{equation*}
        d_+(x,y) = |x-y|1\{x\geq y\}+\infty 1\{x<y\}.
    \end{equation*}
    We extend this to a finite set $A\subset \mathbb{R}$ and time $t$ by $d_+(A,t) = \min\{x-t: x\in A, x\geq t\}$, with the convention $\min \emptyset = +\infty$. Given a pure strategy profile $N = (N_1,\dots,N_n)$, where each of $N_i$ is a finite subset of $[0,1]$, the set of winners for the opportunity at time $t$ is:
    \begin{equation*}
        \mathcal{W}_t(N) := \{i\in [n]: d_+(N_i,t)\leq  \min_{j\not =i}d_+(N_j,t) \text{ and }d_+(N_i,t)< +\infty\}.
    \end{equation*}
    The payoff of player $i$ under profile $N$ is the expected value of winning minus the total cost incurred:
    \begin{equation*}
        u_i(N) = \int^1_0 \frac{1\{i\in \mathcal{W}_t(N)\}}{|\mathcal{W}_t(N)|}g(t)dt - c\cdot|N_i|,
    \end{equation*}
    with the mathematical convention $\frac{1\{i\in \mathcal{W}_t(N)\}}{|\mathcal{W}_t(N)|}:=0$ when $\mathcal{W}_t(N)=\emptyset$ and is extended to mixed strategy profiles $\sigma = (\sigma_1,\dots,\sigma_n)\in \Delta(\mathcal S)^n$ by
    \begin{equation*}
        u_i(\sigma) = \mathbb E_{N\sim \sigma} [u_i(N)].
    \end{equation*}
\end{enumerate}
\begin{figure}[!h]
\centering
\begin{tikzpicture}[x=12cm,y=1cm,>=Latex]

  \def\T{0.47} 
  \def\redpoints{0.12,0.38,0.74}
  \def\bluepoints{0.25,0.56,0.83}

  \draw[->] (0,0) -- (1.05,0);
  \draw (0,0) -- ++(0,0.08) node[above] {$0$};
  \draw (1,0) -- ++(0,0.08) node[above] {$1$};

  \foreach \x [count=\i] in \redpoints {
    \fill[red] (\x,0) circle (1.7pt)
      node[above=4pt, text=black] {$tx_{\i}^A$};
  }

  \foreach \x [count=\i] in \bluepoints {
    \fill[blue] (\x,0) circle (1.7pt)
      node[above=4pt, text=black] {$tx_{\i}^B$};
  }

  \fill[black] (\T,0) circle (2pt) node[above=4pt] {$T$};

  \def\next{1.5} 
  \foreach \x in \redpoints {
    \pgfmathparse{(\x>\T) ? min(\x,\next) : \next}
    \xdef\next{\pgfmathresult}
  }
  \foreach \x in \bluepoints {
    \pgfmathparse{(\x>\T) ? min(\x,\next) : \next}
    \xdef\next{\pgfmathresult}
  }

  \ifdim \next pt < 1pt
    \draw[dashed] (\T,0) -- (\next,0);
    \draw[decorate,decoration={brace,mirror,amplitude=3pt}]
      (\T,-0.25) -- node[below=4pt, text=black]
      {$d_+(\{tx^B_1,tx^B_2,tx^B_3\},T)$} (\next,-0.25);
  \fi
\end{tikzpicture}
\caption{Game with players $A$ and $B$. Player $A$’s transactions are shown in red, and player $B$’s transactions in blue. $T$ denotes the arrival time of the opportunity. In this arrangement, both players place $3$ transactions and player $B$ wins the opportunity.}
\end{figure}
\paragraph{Equilibrium.} A strategy profile $N\in \mathcal S^n$ is a \textit{pure Nash equilibrium} if for every $i\in [n]$ and every $N_i'$ it holds that $u_i(N)\geq u_i(N_i',N_{-i})$. Similarly, a mixed strategy profile $\sigma\in \Delta(\mathcal S)^n$ is a \textit{Nash equilibrium} if for every $i\in[n]$ and every $\sigma_i'\in \Delta(\mathcal{S})$ it holds that $u_i(\sigma)\geq u_i(\sigma_i',\sigma_{-i})$. A Nash equilibrium is \textit{symmetric} if $\sigma_1=\cdots=\sigma_n$ (i.e., the $\sigma_i$ are identical distributions). We say that $N_i'$ is a \textit{profitable deviation} at a strategy profile $\sigma$ if $u_i(N_i',\sigma_{-i})>u_i(\sigma)$. Finally, for a Nash equilibrium $\sigma$ and a player $i$, we say that a strategy $M$ is a best reply if $M\in \text{argmax}_{N\in\mathcal S } u_i(N,\sigma_{-i})$.

Let $\beta=\left\lfloor \frac{1}{c}\right\rfloor$. We observe that any strategy $N_i$ with cardinality $|N_i| > \beta$ incurs a cost strictly greater than $1$. Since the opportunity pays $1$, such a strategy yields a strictly negative utility and is strictly dominated by the empty strategy $\emptyset$ (which yields payoff $0$). 
Consequently, we restrict, w.l.o.g., the strategy space to be the subsets of $[0,1]$ of at most $\beta$ elements. When clear from the context, we represent the strategy space as the compact set of ordered vectors:
\begin{equation*}
\mathcal S' = \{(s_1,\dots,s_\beta)\in  ([0,1]\cup\{+\infty\})^\beta: s_1\leq\dots\leq s_\beta\}.
\end{equation*}
In this formulation, choosing a time $t=+\infty$ represents the choice of not sending a transaction. Thus, we subsequently identify strategies with elements of $\Delta(\mathcal S')$.

\paragraph{Equilibrium Existence.}
One can prove that a mixed Nash equilibrium exists using results from \cite{allison2014verifying,reny1999existence}, see the appendix (Theorem \ref{theorem:existence}). We explicitly construct equilibria for all values of parameters $n$ and $c\in(0,1)$ in subsequent sections. For the moment, we point out that one natural guess for an equilibrium is not in fact an equilibrium:

\paragraph{Observation.} A first very natural guess of a Nash equilibrium of the game is to assume that players use a homogeneous Poisson process for some parameter $\lambda$. However, this cannot be a Nash equilibrium. The idea is very simple. Suppose a player sends a transaction at time $t$. Then, the player has no incentive to act again at any time $t+\delta$ for $\delta<c$, since the expected marginal payoff of adding that action is less than zero. On the other hand, for a Poisson process, with positive probability, the player acts in the interval $[t,t+c)$.

\paragraph{Social cost.} 
In the timing game, players take actions and competition induces redundant actions. Since each action carries a fixed marginal cost $c$, the total overhead of a strategy profile $\sigma$ is captured by the expected total number of actions (transactions) it induces:
\begin{equation*}
  \text{Spam}(\sigma)\;=\;\mathbb{E}_{N\sim \sigma}\left[\sum_{i=1}^n |N_i|\right]
\end{equation*}
In other words, in the blockchain case, $\text{Spam}(\sigma)$ is the expected total number of transactions broadcast under $\sigma$ and multiplying by $c$ gives the corresponding expected total cost. We define the {\it Cost-of-Spam} (CoS) as the maximum expected spam over all symmetric Nash equilibria, i.e.
\begin{equation*}
    \text{CoS} := \max_{\sigma\in\text{SNE}}\text{Spam}(\sigma).
\end{equation*}

\begin{remark} In the blockchain application, if costs are interpreted as transaction fees and those fees are collected by the sequencer, the expected revenue captured by the sequencer is proportional to the total spam. More specifically, for a Nash equilibrium $\sigma$, the revenue of the validator is
\begin{equation*}
\text{Rev}(\sigma)= c\cdot \mathbb E_{N\sim \sigma}\left[\sum_{i=1}^n|N_i|\right] = c\cdot \text{Spam}(\sigma).
\end{equation*}
In particular, holding $c$ fixed, equilibria with higher spam generate higher sequencer revenue.
However, handling more transactions might also be more costly to the sequencer. Cost can be the direct cost of handling the additional transaction load, as well as the indirect cost of crowding out or delaying some ``useful" user transactions.  Thus, cost to the sequencer increases with the amount of spam they receive. Whether increased spam is, therefore, beneficial to the sequencer in the sense of adding to their profit, generally depends on the structure of these costs.
    
\end{remark}

\subsection*{Simple observations about equilibria}
In the appendix, we prove two simple observations about equilibria. First, we prove that restricting our attention to uniform arrival times is without loss of generality by change of measure.

\begin{proposition}\label{prop:equivalence}
Assume that $G$ is absolutely continuous and strictly increasing on $[0,1]$. Then the games $\mathcal G(n,c,G)$ and $\mathcal G(n,c,\mathcal U[0,1])$ are strategically equivalent: there exists a bijection between (pure or mixed) strategy profiles of the two games that preserves each player's expected payoff. In particular, Nash equilibria of $\mathcal G(n,c,G)$ and $\mathcal G(n,c,\mathcal U[0,1])$ are in one-to-one correspondence.
\end{proposition}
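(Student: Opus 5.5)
The plan is to use the time change induced by $G$ itself. Since $G$ is continuous and strictly increasing on $[0,1]$ with $G(0)=0$ and $G(1)=1$ (as a CDF supported on $[0,1]$), it is a homeomorphism of $[0,1]$ onto itself and order preserving. Define $\phi:\mathcal S\to\mathcal S$ by $\phi(N_i)=G(N_i)=\{G(x):x\in N_i\}$. Then $\phi$ is a bijection on finite subsets of $[0,1]$ with inverse $N_i\mapsto G^{-1}(N_i)$, and it preserves cardinalities. On profiles we apply $\phi$ coordinatewise. On mixed strategies we use the pushforward $\sigma_i\mapsto \phi_\#\sigma_i$. The map $\phi$ is continuous on the vector representation $\mathcal S'$ (extend by $G(+\infty)=+\infty$), so it is Borel measurable and the pushforward is well defined and bijective, with inverse $(\phi^{-1})_\#$.

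The key step is to check that pure payoffs are preserved: $u_i^G(N)=u_i^{U}(\phi(N))$ for every pure profile $N$. The cost term $c|N_i|$ is unchanged because $|G(N_i)|=|N_i|$. For the benefit term, the only point needing care is that the winner set is defined through the distance $d_+$, which is not preserved by $G$.

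However, $\mathcal W_t(N)$ depends only on order information. Player $i$ wins at time $t$ exactly when the first action after $t$ belongs to $N_i$ (with ties), that is, when $\min\{x\in N_i: x\ge t\}\le \min\{x\in N_j:x\ge t\}$ for all $j$ and the minimum is finite. Since $G$ is strictly increasing, $x\ge t$ if and only if $G(x)\ge G(t)$, and minima are mapped to minima. Hence $\mathcal W_t(N)=\mathcal W_{G(t)}(\phi(N))$ for every $t$.

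Then the change of variables $s=G(t)$ gives
\begin{equation*}
\int_0^1 \frac{1\{i\in\mathcal W_t(N)\}}{|\mathcal W_t(N)|}g(t)\,dt=\int_0^1 \frac{1\{i\in\mathcal W_{s}(\phi(N))\}}{|\mathcal W_{s}(\phi(N))|}\,ds .
\end{equation*}
This change of variables is valid for absolutely continuous increasing $G$ applied to a bounded measurable integrand: $\int h(G(t))g(t)\,dt=\int h(s)\,ds$, which holds since $G_\#(g\,dt)$ is Lebesgue measure on $[0,1]$. The integrand is a step function in $t$ with finitely many breakpoints, so measurability is immediate.

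For mixed profiles, payoff preservation follows from the pure case by the change-of-variables formula for pushforward measures:
\begin{equation*}
u^U_i(\phi_\#\sigma)=\mathbb E_{M\sim\phi_\#\sigma}[u_i^U(M)]=\mathbb E_{N\sim\sigma}[u_i^U(\phi(N))]=\mathbb E_{N\sim\sigma}[u_i^G(N)]=u_i^G(\sigma).
\end{equation*}
Because the correspondence is a bijection on each player's strategy set, it maps unilateral deviations to unilateral deviations. A profile therefore admits a profitable deviation in one game if and only if its image does in the other, so Nash equilibria correspond one to one. Symmetry is also preserved, since the same map is applied to every player.

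The main obstacle is the order-invariance step, namely seeing that winners are determined by ordinal comparisons only, even though the payoff is written with the metric quantity $d_+$. A minor technical point is measurability of $\phi$ on $\mathcal S'$ and the handling of the $+\infty$ entries, which is settled by continuity.
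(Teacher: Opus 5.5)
Your proposal is correct and follows essentially the same route as the paper: the time change $N\mapsto G(N)$ with pushforward on mixed strategies, preservation of cardinalities, order-invariance of the winner sets $\mathcal W_t(N)=\mathcal W_{G(t)}(G(N))$, and the substitution $u=G(t)$. You are also somewhat more careful than the paper about measurability of the pushforward and the justification of the change of variables.
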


Henceforth, by Proposition~\ref{prop:equivalence}, we assume that $T$ follows a uniform distribution $\mathcal U[0,1]$. Consequently, all subsequent results apply to any absolutely continuous and strictly increasing opportunity distribution on $[0,1]$.

Second, we prove that the game admits no pure strategy equilibrium.
\begin{proposition}\label{prop:no_pure}
The timing game $\mathcal G(n,c,\mathcal U[0,1])$ admits no pure Nash equilibrium.
\end{proposition}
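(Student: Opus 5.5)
Suppose, for contradiction, that $N=(N_1,\dots,N_n)$ is a pure Nash equilibrium of $\mathcal G(n,c,\mathcal U[0,1])$, with each $N_i$ a finite subset of $[0,1]$. The plan is to exhibit a profitable deviation for some player in every case. Write $M=\bigcup_i N_i$ for the set of all action times. Under uniform arrival, a player whose action $x$ is the first point of $M$ strictly after its predecessor $p$ in $M$ (or after $0$ if $x=\min M$), and who is the sole owner of $x$, earns $x-p$ from that action. With ties at $x$ the payoff $x-p$ is split among the owners of $x$.

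\textbf{Case 1: $M=\emptyset$.} Every player gets $0$. Player $1$ can deviate to $\{1\}$ and earn $1-c>0$, since $c<1$. This is a profitable deviation.

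\textbf{Case 2: $M\neq\emptyset$.} Let $m=\max M<1$ or $m=1$, and consider two subcases.

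If $m<1$, any player can add the action $\{1\}$. This captures the interval $(m,1]$ fully, at marginal cost $c$, so it pays off whenever $1-m>c$. If instead $1-m\le c$, take a player $i$ owning $m$, ideally one who uniquely owns an action at $m$. That player can \emph{move} the action at $m$ to a time $m'\in(m,1]$. The move keeps everything the player previously captured with that action, because no other point lies in $(m,m']$ and the opportunity set captured is now $(p,m']\supseteq(p,m]$. It strictly gains $m'-m>0$ at no extra cost. If $m$ is shared by several players, moving slightly right makes the mover the sole winner of $(p,m']$ instead of sharing $(p,m]$, which is again strictly better. So in equilibrium $m=1$.

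Now consider a point $1\in N_i$. Suppose some other player $j$ also has $1\in N_j$. Then $j$ shares the interval $(p,1]$, where $p$ is the previous point of $M$ (or $0$). If $j$ moved its action just to the right of $p$ there would be nothing gained, so instead let $j$ place an action at $1-\varepsilon$. This makes $j$ the unique winner of $(p,1-\varepsilon]$, worth roughly $1-p$, whereas before $j$ received only $(1-p)/k$ with $k\ge2$. Moving an existing action (rather than adding one) keeps the cost fixed, so this is profitable provided $1-p>0$, which holds. So $1$ is owned by a unique player $i$.

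More generally, every point of $M$ must be uniquely owned, since a tied owner gains by shifting slightly left as just described. For the same reason the element of $N_i$ preceding $1$ cannot be absent when profitable. The key deviation is then for a \emph{different} player $j\neq i$ to move (or add) an action at $1-\varepsilon$, just before $1$. This steals the interval $(p,1-\varepsilon]$ from $i$, worth $1-p-\varepsilon$.

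If $j$ has an action somewhere that earns less than $1-p-\varepsilon$, relocating it to $1-\varepsilon$ is profitable. Relocating does lose what that action earned, and it may pass some earnings to a neighbour of $j$, but $j$'s own other intervals are unaffected except the one adjacent to the removed point, which can only grow for $j$ if $j$ owns the next point. If $j$ has no actions, adding one is profitable whenever $1-p>c$.

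The main obstacle is closing all of these subcases cleanly. I would handle it by considering the player $j\ne i$ and the leftward/undercutting deviation simultaneously with $i$'s incentive. Let $\ell=1-p$ be the gap before the last point. Then $i$ earns $\ell$ from the action at $1$, so $\ell\ge c$ (else $i$ drops it). For $j$ to not undercut by adding an action, we need $\ell-\varepsilon\le c$ for all $\varepsilon$, hence $\ell=c$. Then $i$'s payoff from the last action is exactly $c$, i.e. zero net.

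Iterating the same undercutting argument leftward over gaps shows that all gaps equal $c$, with alternating owners. The first gap $[0,\min M]$ must also be exactly $c$. This forces $1/c$ to be an integer with points at $c,2c,\dots,1$. At this rigid configuration I would then check one final deviation: a player moves an action from $kc$ to $kc+\varepsilon'$ when the next point is owned by someone else. That player gains $\varepsilon'$ and loses nothing they owned, contradicting equilibrium. Hence no pure equilibrium exists.
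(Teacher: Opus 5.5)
Your argument is correct, but it reaches the contradiction by a different route than the paper.

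\textbf{The paper's route.} The paper first rules out actions at $0$ and ties. It then uses one global deviation: shifting all of $N_i\cap[0,1)$ to the right by a small $\varepsilon$. This shows $N_i\in\{\emptyset,\{1\}\}$ whenever some opponent acts. The only remaining profile has one player playing $\{1\}$ and everyone else playing $\emptyset$. A rival then undercuts with $\{(1+c)/2\}$ and earns $(1-c)/2>0$.

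\textbf{Your route.} You derive two local necessary conditions at every gap $(p,x]$ of $M$:
\begin{enumerate}
\item The owner's removal deviation gives $x-p\ge c$.
\item A rival's added action at $x-\varepsilon$ gives $x-p\le c$.
\end{enumerate}
Together these pin $M$ down to the lattice $c,2c,\dots,1$, which you then rule out with a one-point right shift.

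Your ``alternating owners'' claim needs to be justified, and the justification is the same removal deviation. If $i$ owns both $x$ and its successor, deleting $x$ loses nothing and saves $c$. Only when the successor belongs to someone else is the owner's marginal value of $x$ equal to $x-p$.

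\textbf{What your route buys.} Every deviation is a single-point change whose effect is transparent. Your removal step also covers a case the paper's global shift glosses over. If $i$'s points form one consecutive block ending at $1$ (e.g.\ $N_i=\{0.5,1\}$, $N_j=\{0.3\}$), shifting does not enlarge $i$'s winning region, and one must delete the redundant point instead.

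\textbf{Where it is longer than needed.} Your final right-shift deviation is available as soon as ties are excluded and alternation is known. Since $\max M=1$, every $x<1$ has a successor owned by someone else, so its owner gains by shifting it right. This collapses $M$ to $\{1\}$ immediately, and there your condition $\ell=1\le c$ already contradicts $c<1$. So the gap-equality computation and the integrality of $1/c$ are not needed. The exploratory paragraphs about relocating actions, and the sentence about the element preceding $1$, can be cut.

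\textbf{Edge case.} A tied point at $0$ cannot be shifted left. It earns nothing, though, so deleting it is profitable.
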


\subsection{Notation \& Preliminaries} \label{section:preliminary}

In this subsection, we collect mathematical preliminaries and notation used throughout the paper. We begin by defining the terminology for mixed strategies and intensity measures, followed by the introduction of specific functionals such as the void probability and the successor variable. Finally, we recall essential properties from the theory of point processes, specifically the Palm distribution and the Campbell-Mecke equation.

Throughout, we denote by $\mu$ the Lebesgue measure.
For a set $A\subseteq [0,1]$, we denote by $x_k(A)$ the $k$-th smallest element of $A$. Let $\sigma$ be a mixed strategy-profile. That is, $\sigma$ is a distribution over finite subsets of $[0,1]$.
We denote by $\Lambda^{\sigma}$ the intensity measure associated to $\sigma$. That is, for a Borel set $B$, $\Lambda^{\sigma}(B) = \mathbb E_{N\sim \sigma}[|N\cap B|]$. We omit the superscript $\sigma$ when the distribution under consideration is clear from context. We say that a measure $\Lambda$ is atomless if $\Lambda(\{z\})=0$ for all $z$. By an abuse of notation, we let $x_j(N)$ denote the random variable representing the $j$-th element of the point process $N$ (i.e., the $j$-th action),  we define $\underline{c}_j(\sigma)=\min\{\text{supp}(x_j(N))\}$ and $\overline{c}_j(\sigma)=\max\{\text{supp}(x_j(N))\}$ with $N\sim \sigma$. 

For a profile $\sigma=(\sigma_1,\dots,\sigma_n)$,  let $N=(N_i)_{i\in[n]}$ be random sets sampled from $\sigma_i$ for each $i$. We write $N_{-i}:=\bigcup_{j\not=i} N_j$ and $\sigma_{-i}$ for its distribution. Similarly, we sometimes write $N=\bigcup N_j$ for the union of the individual sets instead of the family. We denote by $\Lambda_i :=\Lambda^{\sigma_i}$. Moreover, we define the opponents' void probability on the interval $[t,z]$ by
\begin{equation*}
V_i^{\sigma}(t,z)\;:=\;\Pr_{N_{-i}\sim\sigma_{-i}}\!\bigl[N_{-i}\cap[t,z]=\emptyset\bigr],\qquad 0\le t\le z\le 1.
\end{equation*}
We omit the superscript $\sigma$ when the strategy-profile under consideration is clear from context.
We denote the probability that $i$ wins  the opportunity without ties\footnote{As we will see later on, the random sets $N_i$ are atomless and so, for equilibrium distributions, this coincides with the winning probability.} under profile $\sigma$ by
\begin{equation*}
W_i(\sigma):=\Pr_{N\sim\sigma,T}\!\left[d_+(N_i,T)<d_+(N_{-i},T)\right].
\end{equation*}
 If $\sigma_i$ is a pure strategy, we sometimes abuse notation and also write $W_i(M_i,\sigma_{-i})\equiv W_i(\sigma)$ for the set $M_i$ with $\sigma_i[M_i]=1$.
 For a vector $y\in [0,1]^k$, with $y_1\leq\dots\leq y_k$, we write $W^{\sigma}_i(y):=W_i(\{y_1,\dots,y_k\},\sigma_{-i})$ for the probability in case $i$ plays the pure strategy $\{y_1,\dots, y_k\}$ and everyone else plays according to $\sigma$.

Finally, we introduce the successor random variable. Let $N \subset \mathbb{R}$ be a finite set. 
For each $x \in N$, define the \emph{successor} of $x$ in $N$ as
\begin{equation*}
\mathrm{Succ}(x, N) = \min \{ y \in N : y > x \}.
\end{equation*}
with the convention that  $\min\emptyset=+\infty$.

Let $\sigma$ be a distribution of a point process on $\mathbb{R}$. For any $x \in \mathrm{supp}(\Lambda)$, define the \emph{successor under the Palm distribution} as the random variable
\begin{equation*}
\mathrm{Succ}_{\sigma}(x) = \mathrm{Succ}(x, N),
\end{equation*}
where $N \sim \sigma^{x}$ and $\sigma^{x}$ denotes the \emph{Palm distribution} of $\sigma$ conditioned on $x\in N$.
Subsequently, we use one important fact about Palm distributions, the so-called \textit{Campbell-Mecke equation} (also known as \textit{Palm disintegration}, see \cite{kallenberg1997foundations} Chapter 31): Let $\mathcal S$ be the set of all finite subsets of $[0,1]$. For a point process $\sigma$ on $[0,1]$ supported on finite sets, with intensity measure $\Lambda$, let $\{\sigma^x\}_{x\in\supp(\Lambda)}$ denote the Palm family. Then, for every non-negative measurable function $f:[0,1]\times \mathcal S\to \mathbb R_+\cup\{+\infty\}$,
\begin{equation*}
\mathbb E_{N\sim\sigma}\Bigg[\sum_{x\in N} f(x,N)\Bigg]
=\int_{[0,1]} \mathbb E_{N\sim\sigma^x}\!\big[f(x,N)\big]\ \Lambda(dx).
\end{equation*}
Finally, for a function $\psi:\mathbb R\cup\{+\infty\}\rightarrow\mathbb R\cup\{+\infty\}$, we denote by $\psi^{(k)}(x):=\overbrace{\psi\circ\cdots\circ\psi}^k(x)$ and for a random variable $X\in [0,1]\cup\{+\infty\}$ and a function $\psi$, we denote by $\sigma(X,\psi)$ the distribution of the point process $\{\psi^{(k)}(X):k\geq0\}\cap[0,1]$.
\section{Equilibrium Analysis} \label{section:analysis}

In this section, we characterize the symmetric Nash equilibrium of the timing game for any $n\geq 2$ and $c\in (0,1)$. More specifically, we will first show that any symmetric Nash equilibrium has zero expected payoff. By doing so, we will derive necessary conditions that later on will allow us to characterize the Nash equilibrium and show that it is almost surely unique. More specifically, we will prove the following theorem.

\begin{theorem}[Main theorem] \label{theorem:main}
For $n\ge 2$ and $c\in(0,1)$:
\begin{enumerate}[label=(\roman*), ref=\thetheorem(\roman*)]
  \item\textbf{Unique symmetric equilibrium.}\label{theorem:main:unique} The timing game has an (almost surely) unique symmetric Nash equilibrium.
  \item\label{theorem:main:payoff}\textbf{Zero payoff in the symmetric equilibrium.}
  The symmetric Nash equilibrium has zero payoff for each player.
  \item \label{theorem:main:structure}\textbf{Symmetric equilibrium structure.} In the symmetric Nash equilibrium $\sigma=(\sigma_1,\dots,\sigma_n)$, each $\sigma_i$ is a distribution of a finite point process generated recursively from an initial draw and a successor map.  There exists i.i.d. random variables $X_i\in [c,1]\cup\{+\infty\}$ for $i=1,\ldots,n$, and a map $\psi:[c,1]\cup\{+\infty\}\rightarrow[c,1]\cup\{+\infty\}$ that is strictly increasing on $\psi^{-1}([c,1])$, such that player $i$'s action distribution is $\sigma(X_i,\psi)$.
\end{enumerate}
\end{theorem}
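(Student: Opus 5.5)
The plan is to proceed in four stages: regularity of equilibrium, zero payoff, the successor-map structure, and finally existence and uniqueness through a recursive construction. Throughout we work with $T\sim\mathcal U[0,1]$, which is justified by Proposition~\ref{prop:equivalence}. We fix a symmetric equilibrium $\sigma$ and write $V(t,z)=V_i^\sigma(t,z)$.

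\emph{Stage 1: regularity.} First we show that the intensity $\Lambda$ is atomless. If some time $z$ carried an atom, every player acting at $z$ would tie with positive probability. Shifting that action to $z+\varepsilon$ would then capture the whole tied mass at negligible loss, so the atom cannot survive in equilibrium. The same perturbation argument shows that no player ever acts before $c$: an isolated first action at $x<c$ wins with probability at most $x<c$, so it loses money. With atoms excluded, the marginal value of an action at $x$ with predecessor $p$ (one's own previous action, or $0$) is $\int_p^x V(t,x)\,dt$. Local first-order conditions, obtained by moving single actions of a best reply, then give us two things. Every action in the support earns marginal value exactly $c$ once it is conditioned on its predecessor. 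Moreover, the successor of $x$ is pinned down by the equation
\[\int_x^{y}V(t,y)\,dt=c.\]
Because the left side is continuous and strictly increasing in $y$ wherever $V>0$, this equation has at most one solution $y$. We define $\psi(x)$ to be that solution, and set $\psi(x)=+\infty$ when no solution exists in $[0,1]$.

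\emph{Stage 2: zero payoff.} Next we prove that equilibrium payoff is zero. We use a Bellman argument on the successor. Via the Campbell--Mecke equation, the continuation value after an action at $x$ (under the Palm distribution $\sigma^x$) is a function $v(x)$, and this function is nonincreasing. Suppose the payoff were positive. Consider $\underline{c}_1(\sigma)$, the lowest point in the support of the first action. An action at $\underline{c}_1$ that is the first action of the process has no opponent actions before it with probability one. Its marginal value is therefore exactly $\underline{c}_1$, and the first-order condition forces $\underline{c}_1=c$.

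Now take any $x$ in the support of the first action. By indifference, the marginal value at $x$ equals $c$ and the continuation values agree across the support. Since the total payoff equals $\int(\text{marginal value}-c)$, the surplus must vanish. This is the point where monotonicity of the successor is used: it allows us to compare two best replies that interleave, which yields a contradiction whenever the surplus is positive.

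\emph{Stage 3: structure.} Given zero payoff and the uniqueness of the root defining $\psi$, Palm disintegration shows that almost surely each player's $(k+1)$-th action equals $\psi$ applied to the $k$-th action. So a player's whole action set is determined by the first draw $X_i$, and symmetry makes the $X_i$ i.i.d. The map $\psi$ is strictly increasing on $\psi^{-1}([c,1])$ because the defining integral is increasing in $y$ and decreasing in $x$. This establishes parts \ref{theorem:main:structure} and \ref{theorem:main:payoff}.

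\emph{Stage 4: existence and uniqueness.} For part~\ref{theorem:main:unique} we reverse the logic and build $(F_X,\psi)$ by a forward recursion over successive intervals. On the first-action region $[c,\psi(c)]$ the void probability depends only on $F_X$. The zero-profit identity then becomes a Volterra-type integral equation, and differentiating it gives an ODE; for $n=2$ this reproduces the log-uniform law of the motivating example. The resulting function determines $\psi$ on that interval. That in turn fixes the law of the second actions, and we iterate interval by interval up to $1$. Each step is a well-posed ODE with a unique solution, so the equilibrium is unique almost surely. To confirm that the constructed profile really is an equilibrium, we verify global optimality: every finite set has marginal values at most $c$, which follows from monotonicity of $\psi$.

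The main obstacle is Stage 2, the zero-payoff argument. Perturbations give only local conditions, and ruling out positive-surplus equilibria requires a global comparison. That comparison is the monotone-successor and Bellman step, and it must be carried out carefully under Palm conditioning.
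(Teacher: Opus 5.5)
\textbf{There is a genuine gap.} It sits where you place the difficulty, in Stage 2, but it also undermines Stages 1 and 3.

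The load-bearing fact behind part (ii) is the non-overlap property $V_i(x,\psi(x))=0$: almost surely some opponent acts between any two consecutive actions of player $i$. Only with this property are the interval contributions $\int_p^x V_i(t,x)\,dt$ genuine marginal values that add up. Two of your steps presuppose it:
\begin{enumerate}
\item[(a)] the Stage 1 claim that every action ``earns marginal value exactly $c$ given its predecessor'';
\item[(b)] the Stage 2 step ``total payoff equals $\int(\text{marginal value}-c)$, so the surplus vanishes''.
\end{enumerate}
Neither follows from local perturbations. Moving a single action only gives the derivative condition $V_i(x_j,x_{j+1})=1+\int_{x_{j-1}}^{x_j}\partial_2V_i(t,x_j)\,dt$. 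Deleting an action only gives an inequality. Neither is a level condition. Your ``compare two best replies that interleave'' names no mechanism.

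The paper's mechanism runs as follows:
\begin{enumerate}
\item The set $K$ of first actions of best replies is an interval $[c,\bar k]$.
\item $\mathcal H$ has increasing differences, so the smallest optimal continuation $\psi$ is nondecreasing.
\item The first-order conditions plus a splicing argument (the set $(M'\cap[0,x))\cup M$ is again a best reply) give $\Succ_{\sigma_i}(x)=\psi(x)$ Palm-a.s.
\item By monotonicity of $\psi$, a gap $N_i\cap[x,\psi(x)]=\emptyset$ forces $\min N_i>\psi(x)\ge\psi(c)$. That event has probability $q=0$: otherwise deleting the action at $c$ from a best reply with first two actions $c,\psi(c)$ gains at least $cq^{n-1}>0$.
\item Only now do the first-order conditions give $\Gamma'(x)=0$ a.e.\ on $K$ for $\Gamma(x)=\int_0^xV_i(t,x)\,dt$; this also needs $\mu\ll\Lambda_i$ on $K$.
\item Hence $\Gamma\equiv c$ and $K=[c,1]$, and subadditivity of $W_i$ then bounds the payoff by $0$.
\end{enumerate}

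\textbf{Your defining equation for $\psi$ has no unique root, precisely at equilibrium.} Since $V_i(x,\psi(x))=0$ and $V_i(t,y)\le V_i(x,y)$ for $t\le x\le y$, you get $\int_0^xV_i(t,y)\,dt=0$. Therefore $\int_x^yV_i(t,y)\,dt=\Gamma(y)=c$ for \emph{every} $y\in[\psi(x),1]$. So your $\psi$ is not well defined. In Stage 3, ``Palm disintegration plus uniqueness of the root'' then cannot show that the realized successor equals $\psi(x)$. You need a separate argument that successors do not overshoot the first zero $\overline Z(x)$ of $V_i(x,\cdot)$; this is what Lemma~\ref{prop:no-overshoot-degenerate} supplies. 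For the same reason, strict monotonicity of $\psi$ cannot be read off the defining integral. The paper gets it from atomlessness: a level set of $\psi$ with positive intensity would create an atom at its image.

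\textbf{Two smaller slips.}
\begin{itemize}
\item In this backrunning game, a tie at $z$ is broken by moving the action \emph{earlier}, to $z-\varepsilon$. Moving it to $z+\varepsilon$ hands the whole tied mass to the opponent.
\item The verification in Stage 4 follows from subadditivity, $W_i(M)\le\sum_{x\in M}W_i(\{x\})=c|M|$, not from monotonicity of $\psi$.
\end{itemize}
The Stage 4 recursion itself matches the paper's layer-by-layer ODE construction.
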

In the case of two players, the distribution of the initial point $X_i$ is logarithmic following a similar logic as in the ``motivating example" in the introduction. The distributions of the subsequent points then ``flatten" and asymptotically approach a uniform distribution. See Figure~\ref{figure:computation} for the equilibrium in the case of two players and a cost of $c=0.1$. 
\input{plots/plot3}
\FloatBarrier

More generally the distribution of the initial point for $n$ players will be characterized by the solution of a differential equation:
$$f(x)=\frac{1}{(n-1)\int_0^x(1-F(x)+F(t))^{n-2}dt}.$$
See Figure~\ref{fig:initial}, for a plot of the distribution for different numbers of players.

\input{plots/plotmultn}

The successor map $\psi$ is the solution to a Bellman equation: For an (arbitrary) player $i$ at the equilibrium profile $\sigma$, define the continuation value after acting at time $y$ by
\begin{equation*}
J(y)
:=\max_{\substack{m\ge 0\\ y=y_0<y_1<\cdots<y_m\le 1}}
\ \sum_{\ell=0}^{m-1}\Bigg(\int_{y_\ell}^{y_{\ell+1}} V_i(t,y_{\ell+1})\,dt - c\Bigg),
\end{equation*}
with the convention that $m=0$ yields value $0$. Note that $J(y)$ includes the costs of actions sent strictly after time $y$ but does not include the cost of acting at time $y$. For $(x,y)\in\{(x,y)\in[0,1]\times [0,1]\cup\{+\infty\}:y\geq x\}$, we define
\begin{equation*}
\mathcal H(x,y)
:=\Big(\int_{0}^{x} V_i(t,x)\,dt - c\Big)\;+\;\Big(\int_{x}^{y} V_i(t,y)\,dt  - c\Big)1\{y<+\infty\}\;+\;J(y),
\end{equation*}
where the first term is the value generated from the first point $x$, the second term is the value generated from the second point $y$ and $J(y)$ is the optimal continuation value from $y$ onwards.
Define the smallest best reply
\begin{equation*}
\psi(x)\;:=\;\inf\text{argmax}_{y\in[x,1]\cup\{+\infty\}}\mathcal H(x,y)
\end{equation*}
which will turn out to be the successor map for the equilibrium.

The proof of our main theorem proceeds in three steps: In Section~\ref{sec:zero}, we show that every symmetric Nash equilibrium has zero payoff. On the way to proving this we establish intermediate results that allow us to show in Section~\ref{sec:rep} that every symmetric Nash equilibrium has a recursive $(X,\psi)$ representation as described before, and in Section~\ref{sec:unique}, we prove uniqueness of this representation.

Before we proceed with the main proof, we establish two elementary lemmas: first, for any symmetric Nash equilibrium $\sigma=(\sigma_1,\dots,\sigma_n)$, $\sigma_i$ is atomless (i.e., the probability that $x\in N\sim \sigma_i$ is zero for all $x$, or equivalently $\Lambda_i(\{x\})=0$), and second its intensity measure has support $[c,1]$.
\begin{lemma}\label{lemma:no_mass} The intensity measures $\Lambda_1,\dots,\Lambda_n$ of a symmetric Nash equilibrium $(\sigma_1,\dots,\sigma_n)$ are atomless. 
\end{lemma}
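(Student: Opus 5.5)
We argue by contradiction and use a standard "undercutting" deviation. Suppose that in a symmetric Nash equilibrium $\sigma=(\sigma_1,\dots,\sigma_n)$ the common intensity measure has an atom: there is $z\in[0,1]$ with $p:=\Pr_{N_i\sim\sigma_i}[z\in N_i]>0$ (an atom of $\Lambda_i$ at $z$ is the same thing, since $N_i$ is a finite set and $\Lambda_i(\{z\})=\Pr[z\in N_i]$). By symmetry every player puts an action at $z$ with probability $p$. By independence, with probability at least $p^2>0$ at least two players act exactly at $z$, so ties at $z$ occur with positive probability. The plan is to show that a player who plays a pure best reply containing $z$ gains strictly by shifting that action slightly, which contradicts the requirement that almost every realization of $\sigma_i$ is a best reply.

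\textbf{Step 1 (best replies contain $z$).} Since $\sigma$ is an equilibrium and $u_i$ is linear in $\sigma_i$, $\sigma_i$-almost every pure strategy $M$ satisfies $u_i(M,\sigma_{-i})=u_i(\sigma)=\max_{N}u_i(N,\sigma_{-i})$. Because the event $z\in N_i$ has probability $p>0$, we can pick such a best reply $M$ with $z\in M$. Let $w$ be the largest element of $M\cup\{0\}$ that lies strictly below $z$.

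\textbf{Step 2 (compare $M$ with a shifted deviation).} For small $\varepsilon>0$, let $M_\varepsilon$ be $M$ with $z$ replaced by $z+\varepsilon$. If $z=1$, use $z-\varepsilon$ instead; this case is handled symmetrically by undercutting from the right of $w$. Under $M$, the action at $z$ handles opportunities $T\in(w,z]$. On the event $E$ that some opponent also acts exactly at $z$ while no opponent acts in $(w,z)$, player $i$ wins each such $T$ only with the tie-splitting share $1/|\mathcal W_T|\le 1/2$.

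Moving the action to $z+\varepsilon$ has the following effects:
\begin{itemize}
\item[(a)] On $E$, the opponent acting at $z$ now wins outright, so player $i$ loses at most the tie share on $(w,z]$.
\item[(b)] On the event that no opponent acts in $[z,z+\varepsilon]$, player $i$ keeps winning $(w,z]$ and additionally wins the interval $(z,z+\varepsilon]$, which contributes only $O(\varepsilon)$.
\item[(c)] It is a direct check that on the event $\{z\in N_{-i}\}$ the shift is a loss.
\end{itemize}
So shifting right is the wrong direction. The gain from breaking a tie must come from moving earlier.

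\textbf{Step 2, corrected (move earlier).} Let $M^\varepsilon$ replace $z$ with $z-\varepsilon$, for $z>w$ and $\varepsilon$ small. On the event $\{z\in N_{-i}\}$, the action at $z-\varepsilon$ wins every $T\in(w,z-\varepsilon]$ outright, subject only to opponents acting in $(w,z-\varepsilon)$, where previously it won those $T$ with share at most $1/2$. The interval $(z-\varepsilon,z]$ is lost, at a cost of at most $\varepsilon$ in probability. Opponent atoms in $(z-\varepsilon,z)$ have vanishing probability as $\varepsilon\to0$ by continuity of measure. On the event $\{z\notin N_{-i}\}$, the loss is at most $\Pr[N_{-i}\cap(z-\varepsilon,z)\neq\emptyset]+\varepsilon$, which tends to $0$. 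The change in cost is zero. Hence
$$u_i(M^\varepsilon,\sigma_{-i})-u_i(M,\sigma_{-i})\ge \tfrac12\,\E\bigl[(z-w-\varepsilon)\,1\{z\in N_{-i},\,N_{-i}\cap(w,z)=\emptyset\}\bigr]-o(1).$$

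\textbf{Step 3 (positivity of the gain term; main obstacle).} The main obstacle is to show that the expectation above is strictly positive. Two things are needed.
\begin{itemize}
\item[(i)] $z>w$ with $z-w$ bounded away from $0$. This holds because $w<z$ is a fixed element of $M$.
\item[(ii)] The event $\{z\in N_{-i},\ N_{-i}\cap(w,z)=\emptyset\}$ has positive probability.
\end{itemize}
For (ii), choose the best reply $M$ whose element preceding $z$ is the last point before $z$ in a realization of $\sigma_i$. By symmetry, the opponents' configuration has the same law. With positive probability a single opponent $j$ has $z\in N_j$ and its previous point no later than $w$, while the other opponents avoid $(w,z)$. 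If that avoidance event had zero probability, then $(w,z]$ would always be pre-occupied by opponents. In that case the action at $z$ is redundant for player $i$ at cost $c>0$, and dropping it would be profitable, contradicting the optimality of $M$.

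Combining Steps 2 and 3, the deviation $M^\varepsilon$ is strictly profitable for small $\varepsilon$, contradicting the optimality of $M$. Hence $\Lambda_i(\{z\})=0$ for every $z$, and by symmetry this holds for all of $\Lambda_1,\dots,\Lambda_n$.
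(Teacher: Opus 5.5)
Your deviation is the same as the paper's: replace the atom $z$ in a best reply $M$ by $z-\varepsilon$, so that ties at $z$ are won outright, while the losses on $(z-\varepsilon,z]$ and from opponent points in $(z-\varepsilon,z)$ vanish as $\varepsilon\to 0$. The accounting in your corrected Step 2 is fine. Delete the abandoned rightward shift, and give $z=0$ a separate line, since no $w<z$ exists there; the paper notes that an action at $0$ wins only on the null event $\{T=0\}$, so $0\notin N_i$ a.s.

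The paper only asserts the strict inequality ``due to the tie-breaking rule''. You correctly isolate what is needed, namely $\Pr[z\in N_{-i},\ N_{-i}\cap(w,z)=\emptyset]>0$, but your argument for it in Step 3(ii) does not work, for two reasons.
\begin{itemize}
\item Your fallback is false. If opponents almost surely have a point in $(w,z)$, the action at $z$ is not redundant: it still wins every $T\in(y,z]$, where $y$ is the last opponent point before $z$. So no contradiction with the optimality of $M$ arises.
\item Positivity depends on which $M$ you take. Write $W_k:=\max(N_k\cap[0,z))$ (or $0$ if empty) and $q(w):=\Pr[z\in N_i,\ W_i\le w]$. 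If your $M$ has a predecessor $w$ with $q(w)=0$ (for instance, the left endpoint of a continuous conditional law of $W_i$), then by symmetry every opponent acting at $z$ has a point in $(w,z)$ almost surely. The tie term in your Step 2 bound is then zero.
\end{itemize}

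The repair is short and uses the symmetry you already invoke.
\begin{enumerate}
\item Since $q$ is nondecreasing, $\{w:q(w)=0\}$ is an initial segment of $[0,z)$, and continuity of measure gives $\Pr[z\in N_i,\ q(W_i)=0]=0$.
\item Almost every realization of $\sigma_i$ is a best reply, and $\Pr[z\in N_i]=p>0$. So you may choose a best reply $M\ni z$ whose predecessor $w$ satisfies $q(w)>0$.
\item We have $\{z\in N_k,\ W_k\le w\}\subseteq\{N_k\cap(w,z)=\emptyset\}$, and the $N_k$ are i.i.d. Fixing one opponent $j\ne i$ gives $\Pr[z\in N_{-i},\ N_{-i}\cap(w,z)=\emptyset]\ge q(w)\prod_{k\ne i,j}\Pr[N_k\cap(w,z)=\emptyset]\ge q(w)^{n-1}>0$.
\end{enumerate}
This is exactly the bound your Step 2 needs. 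With this choice of $M$ your proof is complete, and it fills in a detail the paper's proof leaves implicit.
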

\begin{proof} Towards contradiction let $x\in[0,1]$ be an atom of $\sigma_i$. W.l.o.g. we may assume that $x>0$ since $0\not\in N_i$ almost surely. Now, let $M_i\in\text{supp}(\sigma_i)$ be a best reply such that $x\in M_i$. Observe that due to the tie-breaking rule 
\begin{equation*}
    u_i(M_i,\sigma_{-i}) <\Pr_{N_{-i}\sim\sigma_{-i},T}[d_+(M_i,T)\leq d_+(N_{-i},T)\text{ and }d_+(M_i,T)<+\infty]- c\cdot |M_i|.
\end{equation*}
On the other hand, if we consider $M_i^k =M_i\setminus \{x\}\cup \{x-1/k\}$ we have
\begin{equation*}
    \lim_{k\rightarrow+\infty} u_i(M^k_i,\sigma_{-i}) = \Pr_{N_{-i}\sim\sigma_{-i},T}[d_+(M_i,T)\leq d_+(N_{-i},T)\text{ and }d_+(M_i,T)<+\infty]- c\cdot |M_i|,
\end{equation*}
therefore, there exists sufficiently large $k$ such that $M^k_i$ is a profitable deviation, leading to a contradiction.
\end{proof}
Since $\sigma$ is a Nash equilibrium, $\sigma_i$ is a best reply to $\sigma_{-i}$ in mixed strategies.
Because $u_i(\cdot,\sigma_{-i})$ is linear in $\sigma_i$, this implies that $\sigma_i$ assigns probability one to
the pure best--reply set
\begin{equation*}
\text{BR}_i(\sigma_{-i}) := \arg\max_{M\in\mathcal S} u_i(M,\sigma_{-i}).
\end{equation*}
By Lemma \ref{lemma:no_mass} it follows that $u_i(\cdot,\sigma_{-i})$ is continuous. Since $\mathcal S$ is compact and $\text{BR}_i(\sigma_{-i})$ is closed and we may assume w.l.o.g. that
\begin{equation*}
\text{supp}(\sigma_i)\subseteq \text{BR}_i(\sigma_{-i}),
\end{equation*}
so every $M\in\text{supp}(\sigma_i)$ is a (pure) best reply to $\sigma_{-i}$. 
Moreover, by continuity of $u_i$ we can write the payoff as
\begin{align*}
    u_i(\sigma_i,\sigma_{-i}) 
    &=W_i(\sigma)-c\cdot \mathbb E_{N_i\sim\sigma_i}[|N_i|].
\end{align*}

The second elementary lemma establishes that the support of the intensity measure associated to the symmetric equilibrium is the interval $[c,1]$ and moreover that in equilibrium players wait at least $c$ between taking actions in each pure strategy over which they mix. We defer the proof to the appendix \ref{section:lemma:support}
\begin{lemma}\label{lemma:support} Let $(\sigma_1,\dots,\sigma_n)$ be a symmetric mixed Nash equilibrium, then:
\begin{enumerate}
    \item $\underline{c}_1(\sigma_1)= \dots = \underline{c}_1(\sigma_n) = c$. In particular, $V_i(t,c)=1$ for all $t\in[0,c]$.
    \item For a (pure) best reply $M$ any two different elements $x<y$ in  $M$ have distance at least $c$.
    \item For every (non-degenerate) closed interval $I\subseteq [c,1]$, $\Pr_{N_i\sim\sigma_i}[I\cap N_i\not=\emptyset]>0$. In particular, $\text{supp}(\Lambda_i)=[c,1]$.

\end{enumerate}
\end{lemma}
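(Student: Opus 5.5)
The plan is to prove the three items in the order (2), (1), (3), using throughout that every $M\in\operatorname{supp}(\sigma_i)$ is a pure best reply (Lemma~\ref{lemma:no_mass} and the remarks after it) and that $u_i(\cdot,\sigma_{-i})$ is continuous. The basic tool is a marginal-value formula. For a pure strategy $M=\{y_1<\dots<y_k\}$, the action at $y_\ell$ is decisive exactly when $T\in(y_{\ell-1},y_\ell]$ and no opponent acts in $[T,y_\ell]$. Its marginal contribution is therefore
\begin{equation*}
\int_{y_{\ell-1}}^{y_\ell} V_i(t,y_\ell)\,dt - c,
\end{equation*}
with $y_0:=0$, and this is bounded above by $y_\ell-y_{\ell-1}-c$.

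For item (2), let $x<y$ be consecutive points of a best reply $M$ with $y-x<c$. Then the marginal value of $y$ is at most $y-x-c<0$. Removing $y$ and giving the interval $(x,y]$ back to the next point of $M$ (or to no one) does not increase the loss from $y$'s contribution, and the cost saving $c$ strictly exceeds the lost gain. So $M\setminus\{y\}$ is a strictly profitable deviation, a contradiction. The same argument with $y_0=0$ shows that no best reply contains a point $<c$, since its marginal value is at most $x-c<0$. Hence $\underline c_1(\sigma_i)\ge c$.

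For item (1), the reverse inequality is $\underline c_1\le c$. Suppose instead that all players have $\underline c_1=a>c$; by symmetry they share this value. Take a best reply $M$ whose first point $x_1$ is near $a$, and replace $x_1$ by $a'\in(c,a)$. The opponents never act in $[0,a)$, so the new first point's marginal value is $a'-c>0$. For the old first point to be optimal, its marginal value must be nonnegative. I would then argue that adding the point $a'$ to $M$ is strictly profitable: by item (2) the gap between $a'$ and $x_1$ can be made below $c$ only if $x_1-a'<c$, so I would choose $a'$ close enough that the combined effect is positive. An alternative argument uses the opponents directly. Every opponent also starts at or after $a$, so inserting a point at $a-\varepsilon$ wins all $T\le a-\varepsilon$ with certainty, for gain $a-\varepsilon$. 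The loss is only the part of $x_1$'s value attributable to $T\in[0,a-\varepsilon]$. Rerouting that value to the new point is neutral, so the net effect is removing $x_1$ while adding a point that beats everyone on $[0,a-\varepsilon]$. Making this precise is the subtle part of item (1). Once $\underline c_1=c$ is established, no opponent acts before $c$, which gives $V_i(t,c)=1$ for $t\le c$.

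For item (3), which I expect to be the main obstacle, suppose some interval $I=[a,b]\subseteq[c,1]$ is almost surely avoided by every player (by symmetry, if one player avoids it, all do). Then $V_i(t,z)=V_i(t,a)$ for all $z\in[a,b]$, and the marginal value of a point $z\in I$, given a previous point $y<a$, is
\begin{equation*}
\int_y^z V_i(t,z)\,dt - c,
\end{equation*}
which is strictly increasing in $z$ on $I$. So any best reply with a point just after $b$ (or before $a$) could profitably shift that point. Specifically, consider the first point of $M$ after $b$, call it $w$. Moving $w$ to $b$ loses no opponent competition on $[b,w]$ only if no one acts there, so the cleaner route is this: take $b$ maximal, meaning the gap $(a,b)$ is a maximal void of $\Lambda$. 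Then points arbitrarily close to $a$ from the left lie in some best reply $M$. Shifting such a point $y\approx a$ rightward to $b$ keeps it after the same $T$-values, faces no new opponent points in $(y,b)$, and captures $T\in(y,b]$ that previously went to the opponents' next point or to $M$'s next point. This strictly raises payoff, a contradiction. The delicate part is handling $a=c$ and the case where the void reaches $1$. In the latter case, shifting the last point to $1$ is strictly profitable because it wins all of $(y,1]$ with positive probability. The conclusion $\operatorname{supp}(\Lambda_i)=[c,1]$ then follows from item (3) together with the lower bound from item (2).
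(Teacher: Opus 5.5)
Your item (2), including the consequence that no best reply contains a point below $c$ (hence $\underline c_1\ge c$), is correct. It is slightly more economical than the paper, which proves the strict gap $y>x+c$ and needs item (3) for that.

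The genuine gap is the reverse inequality $\underline c_1\le c$ in item (1). Suppose opponents never act before $a:=\underline c_1>c$, and let $M$ be a best reply with first point $x_1$. Adding any $a'\in(c,a)$ changes outcomes only for $T\le a'$, where $x_1$ already won with probability $V_i(a,x_1)$. The net change is therefore $a'\bigl(1-V_i(a,x_1)\bigr)-c$. Your ``add $a'$'' argument and your ``alternative'' (gain $a-\varepsilon$ minus the part of $x_1$'s value on $[0,a-\varepsilon]$) both reduce to exactly this expression. With $x_1$ chosen near $a$ it tends to $-c<0$, since $V_i(a,x_1)\to 1$ by atomlessness, and no choice of $a'$ rescues it.

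What is missing is a best reply for which $V_i(a,x_1)=0$. The paper gets it in two steps:
\begin{enumerate}
\item $\Pr[N_j=\emptyset]=0$. Otherwise $\emptyset$ is a best reply, so $u_i(\sigma)=0$, while the singleton $\{a'\}$ earns $a'-c>0$.
\item Take $M\in\supp(\sigma_i)$ whose first point is the \emph{largest} possible one, $x_1(M)=\overline c_1(\sigma_i)$. By symmetry and step 1, every opponent then has a point in $[\underline c_1,\overline c_1]$ almost surely. So $M$ wins nothing for $T<\underline c_1$, and adding the point $\underline c_1$ gains exactly $\underline c_1-c>0$.
\end{enumerate}
You also need atomlessness at $c$ to get $V_i(t,c)=1$.

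Item (3) has the right idea (move a point from the left end of a maximal void to its right end), but two of your claims are false as stated.
\begin{itemize}
\item For $y<a$, the moved point does face opponent points in $(y,a)$. These occur with positive probability precisely because $a$ is a boundary point of the void. With $p$ the predecessor of $y$ in $M$, you therefore lose $\int_p^y\bigl(V_i(t,y)-V_i(t,a)\bigr)\,dt>0$ on $(p,y]$, and you do not control this loss.
\item The $T\in(a,b]$ that ``previously went to $M$'s next point'' $w$ are not a gain for player $i$. The gain on $(a,b]$ is $(b-a)\bigl(1-V_i(b,w)\bigr)$. It is positive if $w=+\infty$; otherwise its positivity comes from maximality of the void at its \emph{right} end, which forces opponents into $(b,w]$ with positive probability. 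You never invoke this.
\end{itemize}
The clean version uses closedness of $\supp(\sigma_i)$ to take a best reply containing $a$ itself. Then $V_i(t,b)=V_i(t,a)$ for $t\le a$, so nothing is lost on $(p,a]$.

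Finally, the case $a=c$ that you leave open is immediate once item (1) is in place. Since $\underline c_1=c$, there is positive probability of points in $[c,c+\varepsilon]$ for every $\varepsilon>0$, so no interval $[c,b]$ with $b>c$ can be avoided.
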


From Lemmas~\ref{lemma:no_mass} and \ref{lemma:support}, we deduce that for every $x\in[c,1]$ there exists a pure best-reply $M$ to $\sigma_{-i}$ such that $x\in M$. The proof\,\footnote{
If $x\in\supp(\Lambda_i)$ then  for every neighborhood of $x$ there is positive probability of existing a best-reply set $M$ with an action in the neighborhood, hence
$\supp(\sigma_i)$ contains strategies placing a point arbitrarily close to $x$, and by compactness of $\mathcal S$
and closedness of $\supp(\sigma_i)$ there is a subsequence of strategies that converge in to a set in $\supp(\sigma_i)$  that contains $x$.} follows immediately from the continuity of $u_i(\cdot, \sigma_{-i})$, the fact that $\text{supp}(\Lambda_i)=[c,1]$ and compactness of $\mathcal S$. We will use this fact throughout the paper. 

\paragraph{Observation} For a symmetric Nash equilibrium, since $\sigma_i$ for $i=1,\dots,n$ are atomless, we have $V_i(x,x)=1$ for all $x\in[0,1]$. So, conditioning on $T=t$ and using $V_i(x,x)=1$, we obtain
\begin{equation}\label{eq:H}
W_i^{\sigma}(y)\;=\;\sum_{\ell=0}^{k-1}\int_{y_\ell}^{y_{\ell+1}}V_i(t,y_{\ell+1})\,dt,
\qquad\text{for all ordered vectors }y \text{ and } y_0:=0.
\end{equation}
The previous expression is useful because it lets us derive the first-order condition for a best reply to $\sigma_{-i}$. To do so, however, we first need to show that it is differentiable; we prove this in Appendix~\ref{section:abs_cont_H}. Note that $W_i^\sigma$ is closely linked to the void probability $V_i$. The next technical lemma establishes a set of properties of $V_i$ that we will need subsequently; we defer its proof to Appendix~\ref{section:prop:a.c.}.

\begin{lemma}[Properties of $V_i$]\label{prop:a.c.} Let $\sigma = (\sigma_1,\dots,\sigma_n)$ be a symmetric Nash equilibrium, then the map $z\mapsto V_i(t,z)$ is $\mu-$a.e. absolutely continuous, nonincreasing and strictly decreasing on $\{z\in[t,1]:V_i(t,z)>0\}.$ In particular, for all $t$ and $\partial_2 V_i(t,z)$ exists for $\mu-$a.e. $(t,z)$ and
\begin{equation*}
\partial_2V_i(t,z)\le 0\qquad\text{for  $\mu-$a.e.\ }(t,z).
\end{equation*}
In particular, $\Lambda_i\ll \mu$. 
\end{lemma}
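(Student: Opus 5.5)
The plan is to reduce everything to a Lipschitz-type bound on the opponents' intensity, obtained from a one-sided perturbation of a best reply, and then read off the claimed properties of $V_i$ from independence. Since the $N_j$ are independent,
\[
V_i(t,z)=\prod_{j\neq i}\Pr\bigl[N_j\cap[t,z]=\emptyset\bigr].
\]
Each factor is nonincreasing in $z$, because the event shrinks as $z$ grows, so monotonicity is immediate. For $z<z'$ a union bound gives
\[
0\le V_i(t,z)-V_i(t,z')\le \sum_{j\ne i}\Pr\bigl[N_j\cap(z,z']\neq\emptyset\bigr]\le \sum_{j\neq i}\Lambda_j((z,z']).
\]
Hence absolute continuity of $z\mapsto V_i(t,z)$, uniformly in $t$, and the a.e.\ existence of $\partial_2V_i\le 0$ by Lebesgue's theorem, will both follow once I show that $\Lambda_j(I)\le K\mu(I)$ for every interval $I\subseteq[c,1]$, for some constant $K$. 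This bound also gives $\Lambda_i\ll\mu$ directly. By symmetry all $\Lambda_j$ coincide, so a single bound suffices.

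\textbf{Step 1: the intensity bound (main obstacle).} I will use the left-perturbation idea already used in Lemma~\ref{lemma:no_mass}. Fix an interval $I=(a,b)\subseteq[c,1]$, and let $M$ be a pure best reply containing a point $x\in I$. By the fact following Lemma~\ref{lemma:support}, such an $M$ exists for every $x\in[c,1]$. Let $p<x$ be the preceding point of $M$, or $p=0$ if there is none; by Lemma~\ref{lemma:support}(2), $x-p\ge c$.

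Compare $M$ with the deviation $M'$ that moves $x$ to $a$, which costs the same. Using \eqref{eq:H}, the deviation loses at most the opportunities with $T\in(a,x]$, that is, at most $\mu(I)$. It gains, for every $T\in(p,a]$, the event that the first opponent point after $T$ lies in $(a,x)$, because there the opponent previously beat $x$ and now $a$ wins. Optimality of $M$ therefore yields
\[
\int_p^a \bigl(V_i(t,a)-V_i(t,x)\bigr)\,dt\le \mu(I).
\]
Since the interval of integration has length at least $c-\mu(I)$, some $t\le a-c/2$ has $V_i(t,a)-V_i(t,x)\lesssim \mu(I)/c$. From this I want to recover $\Pr[N_{-i}\cap(a,x)\neq\emptyset]$.

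This is the delicate point. The difference $V_i(t,a)-V_i(t,x)$ only sees opponent points in $(a,x)$ when the window $[t,a]$ is empty. I would handle this in two ways. First, use the analogous right-perturbation and the lower bound $V_i(t,a)\ge V_i(a-c/2,a)$ on the relevant range. Second, bound this last quantity away from $0$ via the zero-profit-type indifference: a point at $x$ yields at least $c$ in gross winning probability, so the void probability over some window of length $\ge c$ before $x$ cannot be too small. If $V_i$ happens to be small there, I would instead work locally, on the set where $V_i(t,\cdot)>0$, which is all the lemma requires. Letting $x\uparrow b$ then yields $\Pr[N_{-i}\cap I\neq\emptyset]\le K\mu(I)$.

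To pass from this hitting probability to the intensity, I use that points of a single best reply are $c$-separated. So for $\mu(I)<c$ we have $|N_j\cap I|\le 1$ a.s., and hence $\Lambda_j(I)=\Pr[N_j\cap I\neq\emptyset]\le \Pr[N_{-i}\cap I\neq\emptyset]$.

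\textbf{Step 2: strict decrease.} Take $z<z'$ with $V_i(t,z)>0$, and assume, as in the statement, that $V_i(t,z')>0$ as well. Then
\[
V_i(t,z)-V_i(t,z')\ge \prod_{j\ne i,\,j\ne j_0}\Pr\bigl[N_j\cap[t,z']=\emptyset\bigr]\cdot\Pr\bigl[N_{j_0}\cap[t,z]=\emptyset,\ N_{j_0}\cap(z,z']\neq\emptyset\bigr].
\]
The product is positive because $V_i(t,z')>0$. The last factor is positive by Lemma~\ref{lemma:support}(3), applied to a closed subinterval of $(z,z']\cap[c,1]$, combined with the $c$-separation of points and the atomlessness from Lemma~\ref{lemma:no_mass}. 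This establishes that $V_i(t,\cdot)$ is strictly decreasing on $\{z\in[t,1]:V_i(t,z)>0\}$. Finally, a.e.\ differentiability and $\partial_2V_i\le 0$ hold jointly in $(t,z)$ by Fubini, using measurability of $V_i$.
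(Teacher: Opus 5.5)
\textbf{Verdict: genuine gap in Step~1, fixable.} Your plan differs from the paper's route. The paper uses the left perturbation to bound, for each $t$, the left Dini derivative of $-V_i(t,\cdot)$ at every point. It then uses the Lebesgue decomposition and Lemma~\ref{lemma:singular} to get $\nu_t\ll\mu$ for a.e.\ $t$, and only then obtains $\Lambda_i\ll\mu$ from $\nu_a(C)\ge p^{n-1}$. A uniform bound $\Lambda_j(I)\le K\mu(I)$ would be stronger: it makes $V_i(t,\cdot)$ Lipschitz for \emph{every} $t$ and gives a bounded intensity density.

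As written, however, Step~1 does not establish that bound. Your ``delicate point'' is exactly where the proof has to happen, and the fixes you sketch do not close it.
\begin{itemize}
\item The perturbation controls $h(t):=V_i(t,a)-V_i(t,x)=\Pr[N_{-i}\cap[t,a]=\emptyset,\ N_{-i}\cap(a,x]\neq\emptyset]$, which is a joint probability. A lower bound on $V_i(t,a)$ alone does not convert this into a bound on $\Pr[N_j\cap(a,x]\neq\emptyset]$: when $x-t\ge c$, an opponent's point in $(a,x]$ may always come together with a point in $[t,a]$.
\item The inequality $V_i(t,a)\ge V_i(a-c/2,a)$ holds only for $t\ge a-c/2$, since $V_i$ is nondecreasing in its first argument. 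Your averaging produces some $t\le a-c/2$, where it points the wrong way.
\item The usable lower bound is on windows \emph{shorter} than $c$. Windows of length $\ge c$ can have void probability zero (cf.\ $V_i(x,\psi(x))=0$ later in the paper).
\item Zero profit is not available at this stage, because the paper derives it from this lemma.
\end{itemize}

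The missing step can be supplied with tools you already cite.
\begin{itemize}
\item \emph{Setup.} Let $x\in[c,1]$ lie in a best reply $M$ with predecessor $p$, so $x-p\ge c$. Let $a\in[x-c/4,x)$ and fix $t:=x-3c/4\in(p,a)$, so that $a-t\ge c/2$.
\item \emph{Bound on $h$.} Since $h$ is nondecreasing in $t$, we have $(a-t)h(t)\le\int_p^a h\le x-a$, hence $h(t)\le 2(x-a)/c$.
\item \emph{Separation.} Since $x-t<c$, Lemma~\ref{lemma:support}(2) gives each $N_j$ at most one point in $[t,x]$. With $q_u:=\Pr[N_j\cap[t,u]=\emptyset]$ this yields $q_a-q_x=\Pr[N_j\cap(a,x]\neq\emptyset]=\Lambda_j((a,x])$.
\item \emph{Independence and symmetry.} These give $h(t)=q_a^{n-1}-q_x^{n-1}\ge q_a^{n-2}(q_a-q_x)$.
\item \emph{Lower bound on the void probability.} Deleting $x$ from $M$ must be unprofitable, so $\int_p^x V_i(s,x)\,ds\ge c$. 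Since $V_i(\cdot,x)\le 1$ is nondecreasing, this forces $V_i(x-s,x)\ge c-s$ for $s<c$. Hence $q_a^{n-1}=V_i(t,a)\ge V_i(t,x)\ge c/4$.
\end{itemize}
Altogether $\Lambda_j((a,x])\le \frac{2(x-a)}{c}\,(4/c)^{(n-2)/(n-1)}\le 8(x-a)/c^2$, and the rest of your argument then goes through.

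\textbf{Secondary gap in Step~2.} Your argument works only when $z-t<c$, by choosing the closed subinterval inside $(z,\min\{z',t+c\})$. If $z-t\ge c$, separation no longer rules out a point of $N_{j_0}$ in $[t,z]$, so Lemma~\ref{lemma:support}(3) alone does not make the last factor positive. Indeed, point processes satisfying Lemma~\ref{lemma:support} together with atomlessness can have a flat void probability on such long windows, so that case needs its own equilibrium argument. The paper's proof does not supply one either; it simply appeals to Lemma~\ref{lemma:support} there.
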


\subsection{Zero Payoffs}\label{sec:zero}
In this section, we show Theorem \ref{theorem:main:payoff}. That is, we will show that every symmetric Nash equilibrium of the timing game has zero payoffs.

The core idea is to analyze best replies through two complementary lenses. First,  we derive first-order optimality conditions. These conditions imply that once an initial point in time $x\in M$ is fixed, there is a canonical way to choose the time of the next action: among all optimal continuations after $x$, we must select the earliest optimal next time. Second, the minimal solution $\psi$ to the continuation problem formulated in the previous section is monotone non-decreasing in $x$. This monotone successor structure is then used to control the probability of gaps in the opponents’ process between consecutive optimal actions, which makes the gains generated by consecutive actions of a best reply strategy essentially non-overlapping and so additive. Finally, additivity of these gains together with the first-order conditions forces each action to break even against its cost, and hence the equilibrium payoff is zero.

\begin{proof}[Proof of Theorem \ref{theorem:main:payoff}]
Fix a symmetric mixed Nash equilibrium $\sigma$ and a player $i\in[n]$. In the following, whenever we talk about best replies, we refer to best replies of player $i$ with respect to this equilibrium. We define the set of earliest points (see the red-hatched area in Figure~\ref{figure:point_rep}) that are part of a best reply,
$$K :=\{x\in [c,1]:\exists M\in \text{BR}(\sigma_{-i})\text{ s.t. } x=\min(M)\}.$$
The set $K$ is a closed interval (starting at $c$) as we prove in the appendix. 
\begin{lemma} \label{lemma:dc} The set $K$ is a closed interval $[c,\bar{k}]$ for some $c\leq \bar{k}\leq 1$.
\end{lemma}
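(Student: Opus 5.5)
We need to show three things about $K=\{x\in[c,1]:\exists M\in \text{BR}(\sigma_{-i}),\ x=\min(M)\}$: it is closed, its minimum is $c$, and it has no gaps. We take them in that order.

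\emph{Closedness.} Let $x_k\in K$ with $x_k\to x$, and pick best replies $M_k$ with $\min M_k=x_k$. Viewed in the compact space $\mathcal S'$ (ordered vectors of length $\beta$), we can pass to a subsequence with $M_k\to M$. By Lemma~\ref{lemma:no_mass}, $u_i(\cdot,\sigma_{-i})$ is continuous, and $\text{BR}_i(\sigma_{-i})$ is closed, so $M$ is again a best reply. The first coordinate converges, so $\min M=x$, provided the limit is a genuine point and not $+\infty$; since $x_k\le 1$ for all $k$, this holds. Hence $x\in K$.

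\emph{The minimum is $c$.} By Lemma~\ref{lemma:support}(1), $\underline c_1(\sigma_i)=c$. So first actions occur arbitrarily close to $c$ with positive probability, and these come from best replies in $\text{supp}(\sigma_i)$. Closedness then gives $c\in K$. Moreover, every element of $K$ is at least $c$, because points below $c$ lie outside $\text{supp}(\Lambda_i)=[c,1]$. We set $\bar k:=\max K$.

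\emph{Connectedness (the main obstacle).} Suppose $K$ has a gap: there are $a<b$ in $K$ with $(a,b)\cap K=\emptyset$. By Lemma~\ref{lemma:support}(3), $\sigma_i$ places actions in $(a,b)$ with positive probability, and every such action belongs to a best reply. Since no best reply starts in $(a,b)$, each such point is a non-initial action, with a predecessor of the same best reply at distance at least $c$ (Lemma~\ref{lemma:support}(2)). I would try a first-action replacement argument. Take $y\in(a,b)$ and a best reply $M$ containing $y$, and delete all actions of $M$ before $y$ except possibly those contributing positively. The cleaner route is a comparison via the additive formula \eqref{eq:H}. Take the best reply $M^b$ starting at $b$, and modify its first point to $x\in(a,b)$. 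This changes only the first term $\int_0^x V_i(t,x)\,dt$, since the second term $\int_x^{y_1}V_i(t,y_1)\,dt$ is continuous in $x$. Because no opponent's first action lies in $(a,b)$, the opponents' point process in $(a,b)$ consists only of later points. I would then compute the derivative of $x\mapsto \int_0^xV_i(t,x)dt+\int_x^{y_1}V_i(t,y_1)dt$ over $(a,b)$. By symmetry and the absolute continuity in Lemma~\ref{prop:a.c.}, this derivative equals $V_i(x,x)-V_i(x,y_1)+\int_0^x\partial_2V_i(t,x)\,dt$.

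The aim is to show that this derivative has a constant strict sign on $(a,b)$, or vanishes identically. If the sign is constant and strict, then either $a$ or $b$ is not optimal as a first action. If it vanishes identically, interior points would also be optimal first actions, contradicting the gap. The delicate part is controlling $\partial_2V_i$ on the gap, where only non-first actions of opponents contribute. I expect to need the fact that those opponents' points come from predecessors at least $c$ earlier, so their intensity is determined by the structure on $[c,b-c]$. If this direct sign argument fails, the fallback is to work with the first-order conditions at $a$ and $b$ and exploit monotonicity of the minimal successor to reach a contradiction.
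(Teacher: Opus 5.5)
\textbf{Verdict: genuine gap in the connectedness step.} Your closedness argument and your proof that $c\in K$ are fine and agree with the paper. Connectedness, which you flag as the main obstacle, is only sketched.

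\textbf{Why the derivative argument does not work.} Its decisive claim is that $D(x)=1+\int_0^x\partial_2V_i(t,x)\,dt-V_i(x,y_1)$ has a constant strict sign on a gap $(a,b)$, or vanishes identically there. This is not justified, and there is no reason it should hold. $D$ depends on the particular successor $y_1$ of $b$ in $M^b$ and on $\partial_2V_i(\cdot,x)$ over the gap, and nothing you cite controls either.

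Even granting a constant sign, only $D<0$ gives a contradiction: then $\{a\}\cup(M^b\setminus\{b\})$ strictly beats the best reply $M^b$. If $D>0$, you only learn that $b$ beats $a$ inside this one-parameter family, which contradicts nothing. You also cannot transfer the sign to a modification of $M^a$. Its derivative involves $V_i(x,y_1^a)$ instead of $V_i(x,y_1)$, and its second point may lie below $b$.

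The fallback via first-order conditions and monotone successors is not developed. As written, connectedness is therefore not proved.

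\textbf{The missing idea.} No derivative analysis of a gap is needed. The paper proves downward closure ($x\in K$ and $y\in[c,x)$ imply $y\in K$) by an exchange argument:
\begin{enumerate}
\item Take a best reply $M$ with $\min M=x$, and a best reply $M'\ni y$ (which exists by Lemma~\ref{lemma:support} and compactness).
\item If $y>\min M'$, let $z$ be the predecessor of $y$ in $M'$, and let $w$ be the predecessor of $z$ (or $0$ if there is none).
\item Optimality of $M'$ against deleting $z$ gives $\int_w^z(V_i(t,z)-V_i(t,y))\,dt\ge c$.
\item The integrand is nonnegative, and $V_i(t,x)\le V_i(t,y)$ because $y\le x$. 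So this upgrades to $\int_0^z(V_i(t,z)-V_i(t,x))\,dt\ge c$. That inequality says exactly that $u_i(\{z\}\cup M,\sigma_{-i})\ge u_i(M,\sigma_{-i})$.
\item Optimality of $M$ forces equality throughout. Hence $u_i(M'\setminus\{z\},\sigma_{-i})=u_i(M',\sigma_{-i})$, so $M'\setminus\{z\}$ is again a best reply containing $y$.
\item Iterating removes all predecessors of $y$ and yields a best reply with minimum $y$.
\end{enumerate}
Together with $c\in K$ and closedness, this gives $K=[c,\bar k]$.
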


We also fix a particular best reply \begin{equation*}
M=\{x_1,x_2,\dots,x_k\}\in\mathrm{supp}(\sigma_i),
\qquad 0\leq x_1\leq x_2\leq \cdots\leq x_k\leq 1.
\end{equation*}

From Lemma \ref{lemma:abs_cont_H} in the appendix, we know that the probability of winning $W_i^\sigma$ is $\mu-$a.e. differentiable. Moreover, since $\sigma_i$ is atomless by Lemma \ref{lemma:no_mass} and $\Lambda_i\ll \mu$ by Lemma \ref{prop:a.c.} and the boundary of $\{0\leq x_1\leq...\leq x_k\leq1\}$ is $\mu$-null, the boundary is $\Lambda_i$-null. Therefore, it suffices to impose the first-order condition only at interior points. Differentiating the probability of winning, which can be represented in the form of Equation~\eqref{eq:H}, at the optimum $x$, yields a first order condition; for 
$j\in\{1,\dots,k-1\}$,
\begin{equation}\label{eq:FOC-chain}
\frac{\partial}{\partial x_j}W_i^{\sigma}(x)\;=\;
1+\int_{x_{j-1}}^{x_j}\partial_2V_i(t,x_j)\,dt \;-\; V_i(x_j,x_{j+1})=0,
\end{equation}
and for the last coordinate,
\begin{equation}\label{eq:FOC-last}
\frac{\partial}{\partial x_k}W_i^{\sigma}(x)\;=\;
1+\int_{x_{k-1}}^{x_k}\partial_2V_i(t,x_k)\,dt=0.
\end{equation}
Equation \eqref{eq:FOC-chain} shows that, for fixed $(x_{j-1},x_j)$, either $x_{j+1}$ is uniquely
determined (because $z\mapsto V_i(x_j,z)$ is strictly decreasing on $\{z\ge x_j:V_i(x_j,z)>0\}$) or
else $V_i(x_j,x_{j+1})=0$.

\begin{lemma}\label{lemma:Kzero}For $\mu-$a.e. $x\in K$, we have $V_i(x,\psi(x))=0$ or $\psi(x)=+\infty$ for the minimal best response $\psi$. 
\end{lemma}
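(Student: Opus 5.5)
We would prove Lemma~\ref{lemma:Kzero} by contradiction, combining the first-order conditions \eqref{eq:FOC-chain}--\eqref{eq:FOC-last} with the fact that, for $x\in K$ with $x<\bar k$, nearby points are also first points of best replies (Lemma~\ref{lemma:dc}). Suppose there is a set $E\subseteq K$ of positive Lebesgue measure on which $\psi(x)<+\infty$ and $V_i(x,\psi(x))>0$.

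\emph{Step 1: the successor of a first point is pinned down by the FOC.} For $x\in K$, let $M$ be a best reply with $\min M=x$; the one-step deviation argument shows its second point is optimal in $\mathcal H(x,\cdot)$, so $M$ may be taken to be $\{x,\psi(x),\dots\}$, since $\mathcal H(x,\psi(x))$ equals the value of the best reply starting at $x$. Because the boundary of the simplex is $\Lambda_i$-null, for a.e.\ $x$ we may apply \eqref{eq:FOC-chain} with $j=1$ and $x_0=0$, giving
\[
1+\int_0^x \partial_2V_i(t,x)\,dt = V_i(x,\psi(x)).
\]
The left-hand side is $\partial_x\int_0^x V_i(t,x)\,dt$, that is, the derivative of the one-shot winning probability $w(x):=\int_0^xV_i(t,x)\,dt$. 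On $E$ this gives $w'(x)=V_i(x,\psi(x))>0$.

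\emph{Step 2: exploit that every $x\in K$ starts a best reply.} Best replies attain the common equilibrium value $u^*$, so $\mathcal H(x,\psi(x))=u^*$ for all $x\in K$. Hence $\Phi(x):=\max_y\mathcal H(x,y)$ is constant on $K=[c,\bar k]$. By an envelope argument (Danskin, using the a.e.\ differentiability from Lemma~\ref{lemma:abs_cont_H}),
\[
0=\Phi'(x)=\partial_x\mathcal H(x,\psi(x))=w'(x)+\partial_x\int_x^{\psi(x)}V_i(t,\psi(x))\,dt=w'(x)-V_i(x,\psi(x)).
\]
This is just the FOC again, so constancy alone gives no contradiction. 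The extra ingredient has to be that $\psi$ is monotone non-decreasing; this is the paper's announced monotone-successor property, and we would prove it via a single-crossing/supermodularity property of $\mathcal H$ in $(x,y)$. The cross term $\int_x^yV_i(t,y)\,dt$ has increasing differences because $V_i(t,y)$ is nonincreasing in $t$.

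\emph{Step 3: the contradiction (main obstacle).} Take $x<x'$ in $E$ close together, with $x'\in(x,\psi(x))$ (possible since $\psi(x)\ge x+c$ by Lemma~\ref{lemma:support}). Compare the best reply starting at $x'$ with the strategy $\{x,x',\psi(x'),\dots\}$. Monotonicity gives $\psi(x')\ge\psi(x)$, so gains on $[x,x']$ and beyond are additive. Adding the point $x$ in front of a best reply changes the payoff by
\[
\int_0^xV_i(t,x)\,dt-c-\int_0^x\bigl(V_i(t,x')-V_i(t,x)\cdot 0\bigr)\,dt\,\ldots,
\]
that is, by the marginal value of the extra point. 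Separately, the best reply at $x$ shows that inserting $x'$ between $x$ and $\psi(x)$ is weakly unprofitable. Together these bound the quantity $\int_0^{x}V_i(t,x)\,dt - c$. Letting $x'\downarrow x$ and integrating $w'=V_i(x,\psi(x))>0$ over $E$, we expect this to force the expected value of the first action to be strictly negative near one endpoint. Alternatively, it makes $w$ strictly increasing on $E$ while the first-point value must stay at least $c$ at $x=c$ (where $w(c)=c$ by Lemma~\ref{lemma:support}). That would make $x=c$ strictly dominated by later first points, contradicting $c\in K$.

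The delicate part is making this comparison rigorous. It requires showing the opponents' void probabilities across $[x,\psi(x)]$ split multiplicatively or additively. We would first try to establish monotonicity of $\psi$ cleanly via increasing differences, and then run the insertion/deletion comparison with $x'\to x$.
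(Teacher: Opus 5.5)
\textbf{There is a genuine gap: Step~3 contains no working mechanism, and the one concrete contradiction you name does not hold.} Steps~1--2 are sound: you recover $w'(x)=V_i(x,\psi(x))$ for the first point. You also correctly observe that constancy of the best-reply value on $K$ is only the envelope form of that same condition.

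\textbf{Why Step~3 fails.}
\begin{itemize}
\item Your two comparisons carry no information. Prepending $x$ to the best reply starting at $x'$ changes the payoff by $\int_0^x\bigl(V_i(t,x)-V_i(t,x')\bigr)\,dt-c$. Inserting $x'$ into $\{x,\psi(x),\dots\}$ changes it by $\int_x^{x'}\bigl(V_i(t,x')-V_i(t,\psi(x))\bigr)\,dt-c$. Both tend to $-c$ as $x'\downarrow x$, so they are trivially unprofitable and say nothing about $w(x)-c$.
\item $w$ strictly increasing on $E$ does not make $c$ dominated. By your own Step~2, every $x\in K$ starts a best reply of the same value. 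So $w(x)-w(c)$ is exactly offset by a smaller continuation value after $x$.
\item More fundamentally, everything you do is single-agent optimization against a fixed $V_i$. You never use that the opponents' process is made of independent copies of $\sigma_i$ itself, and without that link nothing forces $V_i(x,\psi(x))=0$. You gesture at this in your last paragraph (``void probabilities split multiplicatively'') but do not carry it out, and it is essentially the whole proof.
\end{itemize}

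\textbf{The missing idea.} Turn the statement into one about player $i$'s own process.
\begin{enumerate}
\item By symmetry and independence, $V_i(x,z)=\Pr_{N_i\sim\sigma_i}[N_i\cap[x,z]=\emptyset]^{n-1}$. So it suffices to show $\Pr[N_i\cap[x,\psi(x)]=\emptyset]=0$ for $\Lambda_i$-a.e.\ $x\in K$ with $\psi(x)<+\infty$.
\item By Lemma~\ref{lemma:psi} and the Campbell--Mecke identity, almost surely every point $y\in N_i\cap K$ (off a $\Lambda_i$-null set) has successor exactly $\psi(y)$.
\item Since $K=[c,\bar k]$ and $\psi$ is nondecreasing, the last point $y<x$ of $N_i$ would have successor $\psi(y)\le\psi(x)$, hence a point in $[x,\psi(x)]$. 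So $N_i$ can miss $[x,\psi(x)]$ only if $\min N_i>\psi(x)\ge\psi(c)$.
\item Finally, $q:=\Pr[\min N_i>\psi(c)]$ must vanish. Otherwise, delete $c$ from a best reply $\{c,\psi(c),\dots\}$. This saves the cost $c$ and forfeits exactly $\int_0^c V_i(t,c)\,dt=c$. Meanwhile $\psi(c)$ now also wins on $[0,c]$ whenever all opponents start after $\psi(c)$, a strict gain of $cq^{n-1}$, contradicting best reply.
\end{enumerate}
This gives the claim $\Lambda_i$-a.e. To obtain it $\mu$-a.e.\ as stated, you also need Lemma~\ref{lemma:K_abs} ($\mu_{\mid K}\ll\Lambda_{i\mid K}$), which your proposal does not address.
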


We establish the lemma later. With this lemma the proof of the theorem follows:

 By the Lemma, for $\mu$-almost all $x\in K$, we have $V_i(x,\psi(x))=0$ or $\psi(x)=+\infty$. Thus, either case, by \eqref{eq:FOC-chain} and \eqref{eq:FOC-last} applied to $M=\{x,\psi(x),\dots,x_k\}$, we have $1+\int_0^x \partial_2 V_i(t,x)dt =0$.

By Lemma \ref{lemma:abs_cont_H} in the Appendix, $\Gamma(x)=\int_0^x V_i(t,x)dt$ is absolutely continuous and has almost everywhere zero derivative in $K$, with respect to the Lebesgue measure, since $\Gamma'(x) = 1+\int_0^x \partial_2 V_i(t,x)dt$. Since $\Gamma(c)=c$ by Lemma \ref{lemma:support}, $\Gamma$ is absolutely continuous by Lemma \ref{lemma:abs_cont_H} and $K$ is closed and connected by Lemma \ref{lemma:dc}, we deduce $\Gamma(x)=c$ for all $x\in K$. Therefore, the marginal net gain from acting at time $x\in K$ is zero. If $\psi(x)=+\infty$, then $M=\{x\}$ and $u_i(\{x\},\sigma_{-i})=\Gamma(x)-c=0$.

Otherwise, if $\psi(x)<+\infty$, for a player that plays $M$ with $x,\psi(x)\in M$, the set $M' = M\cap [\psi(x),1]$ has the same payoff as $M$ (since $x$ marginal net gain zero and $M$ is a best reply). Therefore, we deduce that $M'$ is also a best reply, so $\psi(x)\in K$. By Lemma \ref{lemma:support}, $\psi(x)> x+c$, we deduce\footnote{Suppose that $K=[c,\overline{k}]$ with $\overline{k}<1$. Let $y=\min\{\overline{k}+c/2,1\}$. By Lemma \ref{lemma:support}, there exists a best reply $M$ such that $y\in M$. Since $y\notin K$, there is a predecessor element $x\in M$ of $y$. By Lemma \ref{lemma:support}, $x<y-c$, and so $x\in K$. But then all elements in $M\cap [0,y)$ have zero marginal net gain, and so $M' = M\cap [y,1]$ is also a best reply, contradicting the fact that $y\notin K$.} $K=[c,1]$. Since $\Gamma(x)=c$ in $K$, we deduce that the net marginal gain of every singleton $\{x\}$ in $[c,1]$ is zero. Finally, for a best reply $M=\{x_1,\dots,x_k\}\subseteq [c,1]$,
\begin{equation*}
  u_i(M,\sigma_{-i}) = W^\sigma_i(M)-c\cdot |M|=\;\sum_{\ell=0}^{k-1}\int_{y_\ell}^{y_{\ell+1}}V^{\sigma}_i(t,y_{\ell+1})\,dt-c\cdot |M|\leq \sum_{\ell=1}^{k} \Gamma (x_\ell) -c\cdot |M|=0  
\end{equation*}
and so $u_i(\sigma)\leq0$, and since $u_i(\emptyset,\sigma_{-i})=0$, we deduce that $u_i(\sigma)=0$,
\end{proof}
The proof of Lemma~\ref{lemma:Kzero} relies on a sequence of intermediate lemmas that we state and prove next, before we establish Lemma~\ref{lemma:Kzero}.
Define 
\begin{equation*}
\overline Z(t):=\;\inf\{z\in[t,1]:V_i(t,z)=0\}.
\end{equation*}
with the convention $\inf\emptyset := +\infty$.
By Lemma \ref{prop:a.c.} we deduce that $\overline{Z}$ is monotonic nondecreasing, and so has countably many discontinuities. Moreover, $z\mapsto V_i(t,z)$ is continuous, hence $V_i(t,\overline{Z}(t))=0$ whenever $\overline{Z}(t)\leq 1$.
The next lemma which is proved in the appendix rules out ``overshooting'' beyond the endpoint $\overline Z(\cdot)$ on the event $\overline{Z}(y)\leq 1$.

\begin{lemma}\label{prop:no-overshoot-degenerate}
Let $A:=\{y\in[c,1]:\overline{Z}(y)\leq 1\}$. Then, the set
\begin{equation*}
L:=\Bigl\{y\in A:\Pr\bigl[\Succ_{\sigma_i}(y)>\overline Z(y)\bigr]>0\Bigr\}
\end{equation*}
is a $\Lambda_i-$null set. 
\end{lemma}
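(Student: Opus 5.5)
The plan is to argue by contradiction. We show that if $L$ had positive $\Lambda_i$-mass, then with positive probability player $i$ would play a best reply $M$ containing consecutive points $y<y'$ with $y'>\overline Z(y)$. For such a reply, moving $y'$ (or inserting a point) at or just before $\overline Z(y)$ would be a profitable deviation. The intuition is that after $\overline Z(y)$ the opponents have, almost surely, already placed an action in $[y,\overline Z(y)]$. Any opportunity arriving in $[y,y']$ after the first opponent action is therefore lost to player $i$ whenever $y'>\overline Z(y)$.

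First step: translate $\Lambda_i(L)>0$ into a statement about pure best replies. By the Campbell--Mecke equation with $f(x,N)=1\{x\in L,\ \Succ(x,N)>\overline Z(x)\}$,
\begin{equation*}
\mathbb E_{N\sim\sigma_i}\Big[\sum_{x\in N}f(x,N)\Big]=\int_L \Pr[\Succ_{\sigma_i}(x)>\overline Z(x)]\,\Lambda_i(dx)>0.
\end{equation*}
So with positive probability $N_i$, which is a best reply a.s. since $\mathrm{supp}(\sigma_i)\subseteq\mathrm{BR}_i(\sigma_{-i})$, contains a point $y$ with $\overline Z(y)\le 1$ whose successor $y'$ in $N_i$ is $>\overline Z(y)$, possibly $+\infty$. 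Fix such a best reply $M$ and such a pair $(y,y')$, and write $z:=\overline Z(y)$.

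Second step: construct a deviation and compare payoffs via \eqref{eq:H}. Only the term $\int_y^{y'}V_i(t,y')\,dt$ (and the cost structure) depends on $y'$. I will treat two cases.

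\emph{Case $y'<+\infty$.} Replace $y'$ by $z$, which keeps the cardinality. The segment term changes from $\int_y^{y'}V_i(t,y')dt$ to $\int_y^{z}V_i(t,z)dt$. The segment from $y'$ to its successor changes from $\int_{y'}^{y''}V_i(t,y'')dt$ to $\int_{z}^{y''}V_i(t,y'')dt$, which is a gain of $\int_z^{y'}V_i(t,y'')dt\ge0$. Since $V_i(t,\cdot)$ is nonincreasing and $V_i(t,y')\le V_i(t,z)$, we get $\int_y^{y'}V_i(t,y')dt\le \int_y^z V_i(t,z)dt+\int_z^{y'}V_i(t,y')dt$. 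Combining these, the net change is at least
\begin{equation*}
\int_z^{y'}\big(V_i(t,y'')-V_i(t,y')\big)dt+\int_y^z\big(V_i(t,z)-V_i(t,y')\big)dt\ \ge\ \int_y^z\big(V_i(t,z)-V_i(t,y')\big)dt .
\end{equation*}
This uses $V_i(t,y'')\le V_i(t,y')$. That goes the wrong way for the first term, so I would instead regroup: the change equals $\int_y^{z}[V_i(t,z)-V_i(t,y')]dt+\int_z^{y'}[V_i(t,y'')-V_i(t,y')]dt+\int_{y'}^{y''}\cdots$ cancels. I expect to control it using that $V_i(y,z)=0$ forces $V_i(t,y')=0$ for $t\le y$ only, not for $t\in(y,z)$.

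This is the main obstacle. A cleaner route is to \emph{insert} $z$ rather than move $y'$. By Lemma~\ref{lemma:support} this is only allowed if the point spacings are at least $c$, and the spacing requirement does not constrain deviations anyway. Inserting $z$ yields a gain $\int_y^z V_i(t,z)dt$ on $[y,z]$, plus a weakly larger capture on $[z,y']$ since $V_i(t,y')$ is replaced by $V_i(t,y')$ for $t\ge z$. This gain is strictly positive because $V_i(t,z)>0$ for $t$ near $z$: by atomlessness $V_i(z,z)=1$, and continuity follows from Lemma~\ref{prop:a.c.}. But it costs $c$, so I would compare against the first-order conditions \eqref{eq:FOC-chain} instead.

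The concrete plan for this step is a local perturbation of $y'$ downward. Then $\partial_{y'}W_i^\sigma=1+\int_y^{y'}\partial_2V_i(t,y')dt-V_i(y',y'')$, and \eqref{eq:FOC-chain} must vanish at $y'$. For $t\in[0,y]$ we have $V_i(t,y')=0$, hence $\partial_2V_i(t,y')=0$ there. On the other hand $\int_y^{y'}\partial_2 V_i(t,y')dt$ is governed by opponents' intensity on $[y,y']$. I would try to show that the first-order condition at $y$ itself, where $V_i(y,y')=0$, together with the FOC at $y'$, forces $\Gamma$-type identities that are incompatible with $y'>z$ on a positive-mass set. For this I would use the monotonicity of $\overline Z$ (countably many discontinuities) to restrict to continuity points, which is a $\Lambda_i$-full set since $\Lambda_i\ll\mu$.

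Third step: conclude. The deviation strictly increases $u_i$ relative to the best reply $M$, contradicting $M\in\mathrm{BR}_i(\sigma_{-i})$. Hence $\Lambda_i(L)=0$. The case $y'=+\infty$ is excluded similarly: since $y\in K$-type points with $V_i(y,z)=0$ are non-terminal, there is no mass on this case beyond a null set, and adding the point $z$ gains $\int_y^zV_i(t,z)dt$, which must be compared with $c$ via $\Gamma$.
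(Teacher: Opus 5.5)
There is a genuine gap. You correctly locate the crux, but you never resolve it, and the deviation route you chose cannot resolve it.

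Your three attempts each stall:
\begin{itemize}
\item Moving $y'$ to $z=\overline Z(y)$ changes $W_i^\sigma$ by $\int_y^z[V_i(t,z)-V_i(t,y')]\,dt+\int_z^{y'}[V_i(t,y'')-V_i(t,y')]\,dt$. The second term is nonpositive, so the sign is undetermined.
\item Inserting $z$ changes $W_i^\sigma$ by $\int_y^z[V_i(t,z)-V_i(t,y')]\,dt$, not by $\int_y^z V_i(t,z)\,dt$ (nothing changes on $[z,y']$). You never compare this with $c$.
\item In the case that matters, the first-order condition at $y$ reads $1\{y'<+\infty\}V_i(y,y')=0$. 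Every $y'\in[\overline Z(y),1]\cup\{+\infty\}$ satisfies it. This is exactly the indeterminacy the lemma exists to remove (compare its use in Lemma~\ref{lemma:psi}), so first-order conditions cannot settle it.
\end{itemize}
More fundamentally, your argument never uses that the opponents are copies of player $i$. Best-reply optimality against a fixed $\sigma_{-i}$ does not by itself exclude overshooting. Suppose $N_{-i}$ almost surely has exactly one point in $[y,z]$, uniformly distributed, and none in $(z,w]$. Then $\overline Z(y)=z$. Yet a last point at $y'\in(z,w]$ earns $\tfrac{z-y}{2}+(y'-z)$, strictly more than a last point at $z$. So a deviation argument of the kind you sketch cannot succeed without additional input. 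The lemma is an equilibrium-consistency property, not a best-reply property.

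The missing ingredient is symmetry: $V_i(y,z)=\Pr_{\sigma_i}[N_i\cap[y,z]=\emptyset]^{n-1}$. Hence $V_i(y,\overline Z(y))=0$ forces player $i$'s \emph{own} process to meet $[y,\overline Z(y)]$ almost surely. Any $x\in N_i$ with $x<y$ and $\Succ(x,N_i)>\overline Z(y)$ leaves $[y,\overline Z(y)]$ empty. Campbell--Mecke together with $|N_i|\le 1/c$ therefore gives $\int 1\{x<y\}\Pr[\Succ_{\sigma_i}(x)>\overline Z(y)]\,\Lambda_i(dx)\le \tfrac1c\Pr[N_i\cap[y,\overline Z(y)]=\emptyset]=0$.

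This bound says nothing at $x=y$ itself, since under the Palm law $y\in N_i$. It must be transferred from the left:
\begin{itemize}
\item Write $L=\bigcup_{m,k}L_{m,k}$ with $L_{m,k}:=\{y\in L:\Pr[\Succ_{\sigma_i}(y)>\overline Z(y)+1/m]\ge 1/k\}$.
\item If $\Lambda_i(L_{m,k})>0$, pick $y\in L_{m,k}$ at which the nondecreasing map $\overline Z$ is continuous, with $\Lambda_i(L_{m,k}\cap(y-\varepsilon,y))>0$ for every $\varepsilon>0$.
\item For $x\in L_{m,k}$ close enough to $y$ on the left, $\overline Z(x)+1/m>\overline Z(y)$. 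The bound above at $y$ then gives $0\ge \tfrac1k\Lambda_i(L_{m,k}\cap(y-\delta,y))>0$, a contradiction.
\end{itemize}
Your idea of restricting to continuity points of $\overline Z$ is the right technical device, but it enters through this left-approximation, not through a deviation. The case $\Succ=+\infty$ is covered by the same bound. Your separate ``non-terminal'' argument for it is therefore unnecessary, and as written it is also unjustified.
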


Using a standard argument, we show that $\mathcal{H}$ has increasing differences, which implies that the minimal best reply is monotonically non-decreasing. 
This result is formalized in the following lemma, with the full proof deferred to the Appendix \ref{section:increasing-differences}.
\begin{lemma}\label{lemma:increasing-differences}
The function $\mathcal H$ has weakly increasing differences on its domain:
for any $0\le x_1<x_2\le y_1<y_2\le 1$,
\begin{equation*}
\mathcal H(x_2,y_2)+\mathcal H(x_1,y_1)\ \ge\ \mathcal H(x_2,y_1)+\mathcal H(x_1,y_2).
\end{equation*}
Consequently, $\psi(x)$ is nondecreasing in $x$.
\end{lemma}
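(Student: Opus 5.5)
The plan is to check the increasing-differences inequality by direct cancellation, and then obtain monotonicity of $\psi$ from a Topkis-type exchange argument. The natural decomposition is
\[
\mathcal H(x,y)=A(x)+B(x,y)+J(y),
\]
with
\[
A(x):=\int_0^x V_i(t,x)\,dt-c,\qquad B(x,y):=\Big(\int_x^y V_i(t,y)\,dt-c\Big)1\{y<+\infty\}.
\]
In the expression
\[
\mathcal H(x_2,y_2)+\mathcal H(x_1,y_1)-\mathcal H(x_2,y_1)-\mathcal H(x_1,y_2),
\]
each of $A(x_1),A(x_2),J(y_1),J(y_2)$ appears once with each sign, so these terms cancel. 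The constants $-c$ inside $B$ cancel for the same reason. What remains, for $0\le x_1<x_2\le y_1<y_2\le 1$, is
\[
\Big(\int_{x_2}^{y_2}-\int_{x_1}^{y_2}\Big)V_i(t,y_2)\,dt-\Big(\int_{x_2}^{y_1}-\int_{x_1}^{y_1}\Big)V_i(t,y_1)\,dt=\int_{x_1}^{x_2}\big[V_i(t,y_1)-V_i(t,y_2)\big]\,dt .
\]
This is nonnegative because $z\mapsto V_i(t,z)$ is nonincreasing. That monotonicity follows from the inclusion $[t,y_1]\subseteq[t,y_2]$ and is also recorded in Lemma~\ref{prop:a.c.}.

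For the monotonicity of $\psi$, the argument has to allow $y=+\infty$. I will check the case $y_2=+\infty$ separately. Both $B(x_1,+\infty)$ and $B(x_2,+\infty)$ vanish, so the difference reduces to
\[
B(x_1,y_1)-B(x_2,y_1)=\int_{x_1}^{x_2}V_i(t,y_1)\,dt\ge 0 .
\]
This is the same formula under the convention $V_i(t,+\infty)=0$, so increasing differences holds on the whole domain $\{y\ge x\}$, including $+\infty$.

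The exchange argument then goes as follows. Take $x_1<x_2$ with $y_1:=\psi(x_1)$ and $y_2:=\psi(x_2)$, and suppose for contradiction that $y_2<y_1$. Then $x_1<x_2\le y_2<y_1$, so the two feasible sets interact in the right way:
\begin{itemize}
\item $y_2$ is feasible for $x_1$, since $y_2\ge x_2>x_1$;
\item $y_1$ is feasible for $x_2$, since $y_1>y_2\ge x_2$.
\end{itemize}
Applying increasing differences with the roles of the $y$'s swapped gives
\[
\mathcal H(x_2,y_1)+\mathcal H(x_1,y_2)\ \ge\ \mathcal H(x_2,y_2)+\mathcal H(x_1,y_1).
\]
Optimality of $y_2$ at $x_2$ gives $\mathcal H(x_2,y_1)\le\mathcal H(x_2,y_2)$. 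Combined with the display, this forces $\mathcal H(x_1,y_2)\ge\mathcal H(x_1,y_1)$, so $y_2\in\operatorname{argmax}_y\mathcal H(x_1,y)$. But $y_2<y_1=\inf\operatorname{argmax}$, a contradiction. Hence $\psi(x_1)\le\psi(x_2)$.

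The main obstacle is making sure $\psi(x)$ is itself a maximizer, so that the infimum of the argmax is attained. This requires the argmax to be nonempty and closed, i.e.\ $y\mapsto\mathcal H(x,y)$ must be upper semicontinuous on the compact set $[x,1]\cup\{+\infty\}$, with $+\infty$ treated as an isolated point.
\begin{itemize}
\item For the $B$ term, continuity of $y\mapsto\int_x^yV_i(t,y)\,dt$ follows from Lemma~\ref{prop:a.c.}: $V_i(t,\cdot)$ is continuous and bounded by $1$, so dominated convergence applies.
\item For $J$, I would argue that $J$ is a supremum over finitely many points, at most $\beta$ of them, of functions that are continuous in all arguments jointly. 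The feasible set $y\le y_1\le\dots\le y_m\le 1$ is compact and varies continuously with $y$, so Berge's maximum theorem gives continuity of $J$. Coincident points, which arise as limits of the strict inequalities, only add a cost $-c$ and are never better, so passing to the closure is harmless.
\end{itemize}
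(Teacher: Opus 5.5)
Your proposal is correct and takes essentially the same route as the paper: the prefix term and $J$ cancel in the cross-difference, leaving $\int_{x_1}^{x_2}(V_i(t,y_1)1\{y_1<+\infty\}-V_i(t,y_2)1\{y_2<+\infty\})\,dt\ge 0$, and monotonicity of $\psi$ then follows from the same exchange argument against the smallest maximizer. You also check the case $y=+\infty$ explicitly and show that $\psi(x)$ is actually attained as a maximizer, via continuity of $y\mapsto\mathcal H(x,y)$ on the compact set $[x,1]\cup\{+\infty\}$; the paper uses both facts without comment.
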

The previous lemma shows that the smallest optimal continuation is monotone nondecreasing.
In the following lemma, we will show that this continuation coincides with $\text{Succ}_{\sigma_i}$ on $K$.
Intuitively, this means that if $N_i$ is sampled from $\sigma_i$ and $x=\min(N_i)$, then the next point is the minimal optimal continuation and is therefore chosen deterministically.
Since $\psi$ is monotone, this will also imply that $\text{Succ}_{\sigma_i}$ is monotone on $K$.

\begin{lemma}\label{lemma:psi} For $\Lambda_i-$a.e. $x\in K$ we have $\psi(x) = \Succ_{\sigma_i}(x)$ almost surely. In particular, $\text{Succ}_{\sigma_i}$ is almost surely $\Lambda_i-$a.e. monotonic in $K$. 
\end{lemma}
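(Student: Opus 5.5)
The plan is to show that, for $\Lambda_i$-a.e. $x\in K$, the conditional law of the next action after $x$ under the Palm distribution $\sigma_i^x$ is concentrated on best-reply continuations. Then the first-order conditions \eqref{eq:FOC-chain} and Lemma~\ref{prop:no-overshoot-degenerate} force that continuation to be the \emph{earliest} optimal one, namely $\psi(x)$.

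\textbf{Step 1: the Palm successor is a.s. an optimal continuation.} By the Campbell--Mecke equation applied to $f(x,N)=1\{N\notin \mathrm{BR}_i(\sigma_{-i})\}$, and using $\supp(\sigma_i)\subseteq \mathrm{BR}_i(\sigma_{-i})$, for $\Lambda_i$-a.e. $x$ the Palm sample $N\sim\sigma_i^x$ is a.s. a best reply containing $x$. Apply Campbell--Mecke again to $f(x,N)=1\{x=\min N\}$. Combined with the definition of $K$, this gives that for $\Lambda_i$-a.e. $x\in K$ with $x=\min N$, the payoff of $N$ decomposes by \eqref{eq:H} as $\mathcal H(x,\Succ(x,N))$ plus the tail value. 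Optimality of $N$ forces the tail to achieve $J(\Succ(x,N))$ and $\mathcal H(x,\Succ(x,N))=\max_y\mathcal H(x,y)$. Hence $\Succ_{\sigma_i}(x)\in\arg\max_y\mathcal H(x,y)$ a.s., and in particular $\Succ_{\sigma_i}(x)\ge\psi(x)$.

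\textbf{Step 2: uniqueness of the optimal successor on the non-degenerate region.} Fix $x$ with $y=\Succ_{\sigma_i}(x)<+\infty$. The FOC \eqref{eq:FOC-chain} at coordinate $x$ (interior a.e., as argued before the lemma) reads
\[
V_i(x,y)=1+\int_{0}^{x}\partial_2V_i(t,x)\,dt .
\]
The right-hand side depends only on $x$. Since $z\mapsto V_i(x,z)$ is strictly decreasing where positive (Lemma~\ref{prop:a.c.}), all optimal successors $y$ with $V_i(x,y)>0$ coincide. Moreover, $\psi(x)$ itself is the successor of a best reply, because the argmax set is closed by continuity of $\mathcal H$ and the infimum is attained. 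So whenever $V_i(x,\psi(x))>0$, every optimal $y$ with $V_i(x,y)>0$ equals $\psi(x)$. Any optimal $y>\psi(x)$ would then need $V_i(x,y)=0$, while $V_i(x,y)=V_i(x,\psi(x))>0$, which is a contradiction. In that case $\Succ_{\sigma_i}(x)=\psi(x)$ a.s.

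\textbf{Step 3: the degenerate region.} Suppose instead that $V_i(x,\psi(x))=0$, so $\psi(x)\ge\overline Z(x)$ and $x\in A$. By Lemma~\ref{prop:no-overshoot-degenerate}, for $\Lambda_i$-a.e. such $x$ we have $\Succ_{\sigma_i}(x)\le\overline Z(x)\le\psi(x)$ a.s. Combined with Step 1 this gives equality. The case $\psi(x)=+\infty$ is similar: no finite $y$ is optimal, so Step 1 forces $\Succ_{\sigma_i}(x)=+\infty$.

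\textbf{Conclusion and main obstacle.} Monotonicity of $\Succ_{\sigma_i}$ on $K$ then follows from Lemma~\ref{lemma:increasing-differences}, off a $\Lambda_i$-null set. The main obstacle I expect is the measure-theoretic bookkeeping in Step 1. There one must pass from "a.s. best reply under $\sigma_i$" to a statement under the Palm measure $\sigma_i^x$ for a.e. $x\in K$. This requires checking that the events involved are jointly measurable in $(x,N)$ so that Campbell--Mecke applies, and handling the case that $x$ is the first point of $N$ only with positive conditional probability. If needed, I would restrict to $f(x,N)=1\{x=\min N\}\,1\{\Succ(x,N)\neq\psi(x)\}$ and show that its Campbell integral vanishes.
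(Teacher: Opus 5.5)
Your overall plan matches the paper's proof: put the Palm successor in $\arg\max_y\mathcal H(x,y)$, use the FOC with the strict decrease of $z\mapsto V_i(x,z)$ in the non-degenerate case, and use Lemma~\ref{prop:no-overshoot-degenerate} in the degenerate case. The gap is in Step~1, which only treats Palm samples with $x=\min N$. Your fallback, $f(x,N)=1\{x=\min N\}1\{\Succ(x,N)\neq\psi(x)\}$, would likewise prove the lemma only on that event.

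The lemma, however, is a statement about the whole Palm law $\sigma_i^x$, and under it $x$ is typically a \emph{later} action. Campbell--Mecke gives $\int_B\sigma_i^x(\{x=\min N\})\,\Lambda_i(dx)=\Pr[\min N_i\in B]$. So for $\Lambda_i$-a.e.\ $x\in K$ outside the support of $\min N_i$ (e.g.\ $x>\overline c_1(\sigma_i)$; recall $K$ is eventually shown to be all of $[c,1]$), the event $\{x=\min N\}$ is $\sigma_i^x$-null and Step~1 yields nothing.

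The full statement is what the later arguments use. Lemma~\ref{lemma:Kzero} needs $\Succ(y,N_i)=\psi(y)$ for all $y\in N_i\cap K$, and Proposition~\ref{prop:det} iterates it along the orbit. Step~2 cannot be applied directly to such samples either. If $p$ is the predecessor of $x$ in $N$, the FOC \eqref{eq:FOC-chain} at $x$ reads $1\{y<+\infty\}V_i(x,y)=1+\int_p^x\partial_2V_i(t,x)\,dt$, and its right-hand side depends on the random $p$.

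The missing ingredient is that the optimal continuation after $x$ does not depend on the prefix. The paper gets this by splicing. Take $M$ a best reply with $\min M=x$ (which exists since $x\in K$) and $M'$ any best reply containing $x$. Then $\widetilde M=(M'\cap[0,x))\cup M$ is again a best reply. Comparing the FOCs of $M$ and $\widetilde M$ at $x$ gives $\int_0^p\partial_2V_i(t,x)\,dt=0$, so \eqref{equation:solution_succ} holds for every best reply containing $x$.

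Within your framework the repair is shorter. By \eqref{eq:prefix-cont-decomp}, the part of any best reply $N\ni x$ after $x$ must attain $J(x)=\max_y\{(\int_x^yV_i(t,y)\,dt-c)1\{y<+\infty\}+J(y)\}$. Since the prefix term of $\mathcal H(x,\cdot)$ does not depend on $y$, this puts $\Succ(x,N)$ in $\arg\max_y\mathcal H(x,y)$ whether or not $x=\min N$. With Step~1 replaced by this, Steps~2 and~3 go through for all Palm samples. Your direct comparison with $\psi(x)$ via $\Succ_{\sigma_i}(x)\ge\psi(x)$ is then, if anything, cleaner than the paper's argument with two independent Palm copies.
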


\begin{proof} To prove the statement it is enough to show that for $\Lambda_i$-a.e.\ $x\in K$, and every $M\in\mathrm{BR}(\sigma_{-i})$ with $x\in M$,
\begin{equation}\label{equation:solution_succ}
1\{y<+\infty\}\cdot V_i(x,y)=1+\int_{0}^{x}\partial_2 V_i(t,x)\,dt,
\end{equation}
where $y:=\Succ(x,M)\in[x,1]\cup\{+\infty\}$.

Indeed, fix such an $x\in K$ and let $\widetilde M,\widetilde M'$ be i.i.d.\ with law $\sigma_i^x$. Since $\sigma_i$ is supported on $\mathrm{BR}(\sigma_{-i})$, we have $\widetilde M,\widetilde M'\in\mathrm{BR}(\sigma_{-i})$ and $x\in \widetilde M\cap \widetilde M'$ a.s.  Write $y:=\Succ(x,\widetilde M)$ and $y':=\Succ(x,\widetilde M')$. Then by FOC $y$ and $y'$ both solve
\begin{equation*}
1\{z<+\infty\}\cdot V_i(x,z)=1+\int_{0}^{x}\partial_2 V_i(t,x)\,dt.
\end{equation*}
If the right-hand side is strictly positive, then necessarily $V_i(x,y)>0$ and $z\mapsto V_i(x,z)$ is strictly decreasing on $\{z\in[x,1]:V_i(x,z)>0\}$, so the solution is unique (and so the minimal) and hence $y=y'$ a.s.

If the right-hand side is $0$, then we are in the degenerate case
\begin{equation*}
1\{z<+\infty\}\cdot V_i(x,z)=0,
\end{equation*}
so either $z=+\infty$, or else $z<+\infty$ and $V_i(x,z)=0$. Since $z\mapsto V_i(x,z)$ is nonincreasing and strictly decreasing on $\{z\in[x,1]:V_i(x,z)>0\}$, the set of finite solutions to $V_i(x,z)=0$ is either empty (equivalently $\overline Z(x)=+\infty$), or else it is the interval $[\overline Z(x),1]$, whose minimal element is $\overline Z(x)$.

If $\overline Z(x)=+\infty$, then $V_i(x,z)>0$ for all $z\in[x,1]$, so the only solution to $1\{z<+\infty\}V_i(x,z)=0$ is $z=+\infty$, hence $y=y'=+\infty$ a.s.  If instead $\overline Z(x)\le 1$, then Lemma~\ref{prop:no-overshoot-degenerate} implies
\begin{equation*}
\Pr\bigl[\Succ_{\sigma_i}(x)>\overline Z(x)\bigr]=0 \text{ for }\Lambda_i-\text{a.e.}
\end{equation*}
Since $y,y'\sim\Succ_{\sigma_i}(x)$, we obtain
$y,y'\le \overline Z(x)$ a.s. On the other hand, whenever $y<+\infty$
and $V_i(x,y)=0$, by definition of $\overline Z(x)$ we must have $y\ge \overline Z(x)$, thus
$y=\overline Z(x)$ a.s., and similarly $y'=\overline Z(x)$ a.s. Therefore also in the degenerate
case we have $y=y'$ a.s.

Consequently, $\Succ_{\sigma_i}(x)$ is a.s. uniquely determined by the
solution of the above equation, and this value coincides with the smallest optimal continuation, which is the definition of $\psi(x)$. It follows that $\psi(x)=\Succ_{\sigma_i}(x)$ for $\Lambda_i$-a.e.\ $x\in K$. By Lemma~\ref{lemma:increasing-differences}, $\psi$ is nondecreasing, and so, $\Succ_{\sigma_i}$ is $\Lambda_i$-a.e.\ nondecreasing on $K$.

It remains to prove the identity \eqref{equation:solution_succ}. Now suppose there exists a best reply $M'$ with $x\in M'$ and $x>\min(M')$. Let
\begin{equation*}
p:=\max(M'\cap[0,x))\qquad\text{and}\qquad y':=\Succ(x,M').
\end{equation*}
We claim that
\begin{equation}\label{eq:0p}
\int_0^{p}\partial_2 V_i(t,x)\,dt=0.
\end{equation}
Indeed, once this is shown, applying \eqref{eq:FOC-chain}--\eqref{eq:FOC-last} at $x$ for $M'$ yields
\begin{equation*}
1\{y'<+\infty\}\cdot V_i(x,y')
=1+\int_{p}^{x}\partial_2 V_i(t,x)\,dt
=1+\int_{0}^{x}\partial_2 V_i(t,x)\,dt,
\end{equation*}
which is exactly \eqref{equation:solution_succ} for $M'$. Now, to show \eqref{eq:0p}, observe (see Appendix \ref{section:tildeM} for more details) that the set $\widetilde{M} = (M'\cap [0,x))\cup M$ is also a best reply. And so by FOC applied to $\widetilde{M}$
\begin{equation*}
    1\{y<+\infty\}\cdot V_i(x,y)=1+\int_{p}^{x}\partial_2 V_i(t,x)\,dt
\end{equation*}
On the other hand, applying FOC to $M$, we have
\begin{equation*}
     1\{y<+\infty\}\cdot V_i(x,y)=1+\int_{0}^{x}\partial_2 V_i(t,x)\,dt
\end{equation*}
so we deduce \eqref{eq:0p}.
\end{proof}

In Lemma \ref{prop:a.c.} we have already shown that $\Lambda_i\ll \mu$. However, it is not yet clear whether the two measures are equivalent on the support of $\Lambda_i$. In what follows, we will need statements that hold not only $\Lambda_i$-a.e. but also $\mu$-a.e.
The next lemma shows that $\Lambda_i$ and $\mu$ are equivalent on $K$. We defer the proof of the Lemma to the Appendix \ref{section:K_abs}.

\begin{lemma}\label{lemma:K_abs} For $i=1,\dots,n$, $\mu_{\mid K}\ll\Lambda_{i\mid K}$.
\end{lemma}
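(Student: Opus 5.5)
We have to show that if a Borel set $B\subseteq K$ has $\Lambda_i(B)=0$, then $\mu(B)=0$. We argue by contradiction: suppose $B\subseteq K=[c,\bar k]$ has $\mu(B)>0$ but $\Lambda_i(B)=0$. The idea is that a set of first points which the equilibrium almost never uses, yet which has positive length, creates a window in which the opponents' void probability behaves too regularly. This regularity is incompatible with the first-order conditions that hold at the points actually used.

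\textbf{Step 1 (void probability on $B$).} By symmetry, $\Lambda_j(B)=0$ for all $j$, so a.s.\ no opponent acts in $B$. It is convenient to pass to a Lebesgue density point of $B$ and work on a small interval $I=[a,b]\subseteq[c,\bar k]$ with $\mu(I\setminus B)<\varepsilon\,\mu(I)$.

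For $z,z'\in I$ with $z<z'$, the event that some opponent acts in $(z,z']$ is then almost contained in the event that an opponent acts in $(z,z']\setminus B$. Since $\Lambda_j\ll\mu$ by Lemma~\ref{prop:a.c.}, we have $V_i(t,z)-V_i(t,z')\le (n-1)\Lambda_1((z,z']\setminus B)$.

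Consequently $\partial_2V_i(t,z)=0$ for $\mu$-a.e.\ $z\in B$ and all $t$. Indeed, $-\partial_2V_i(t,z)$ is bounded by $(n-1)$ times the density of $\Lambda_1$ at $z$, and that density vanishes $\mu$-a.e.\ on the $\Lambda_1$-null set $B$.

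\textbf{Step 2 (derivative of $\Gamma$ on $B$).} Recall $\Gamma(x)=\int_0^xV_i(t,x)\,dt$, with $\Gamma'(x)=1+\int_0^x\partial_2V_i(t,x)\,dt$ by Lemma~\ref{lemma:abs_cont_H}. Step 1 gives $\Gamma'(x)=1$ for $\mu$-a.e.\ $x\in B$. So on $B$, a first action at $x$ has strictly increasing marginal gain, and delaying it is strictly profitable.

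\textbf{Step 3 (contradiction with optimality of first points in $K$).} Since $K$ is an interval and $B\subseteq K$ has positive measure, there are best replies $M$ with $\min M=x$ for $x$ arbitrarily close to $B$. We compare two cases.

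\emph{Case (a): $\Lambda_i(K\cap I)=0$.} No first point is placed in $I$. However, points $x\in K\cap I$ exist whose best replies have support arbitrarily close, obtained by the compactness argument following Lemma~\ref{lemma:support}. For such a best reply $M$ with $\min M=x\in I$, the first-order condition \eqref{eq:FOC-chain}/\eqref{eq:FOC-last} in the first coordinate reads $1+\int_0^x\partial_2V_i(t,x)\,dt=1\{y<\infty\}V_i(x,y)$, where $y=\psi(x)$.

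\emph{Case (b): first points are placed in $I$.} Then the left-hand side equals $1$ on $B$. This forces $V_i(x,\psi(x))=1$, hence no opponent acts in $(x,\psi(x)]$ a.s., while $\psi(x)>x+c$ by Lemma~\ref{lemma:support}. That contradicts Lemma~\ref{lemma:support}(3), which says every nondegenerate interval in $[c,1]$ is hit with positive probability.

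To make the first-order condition usable at $\mu$-a.e.\ $x\in B$, rather than only $\Lambda_i$-a.e., the plan is to use the $\mu$-a.e.\ differentiability of $W_i^\sigma$ directly. The first coordinate of a best reply is a local maximizer of $x\mapsto W_i^\sigma(x,x_2,\dots)$. The one-sided derivative from the right there is $\Gamma'(x^+)-V_i(x,x_2)\le0$. Approximating by points of $B$ where Step 1 holds and using continuity of $V_i$ (Lemma~\ref{prop:a.c.}) yields $V_i(x,\psi(x))\ge1$, which gives the contradiction.

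\textbf{Main obstacle.} The delicate step is transferring the first-order condition from $\Lambda_i$-a.e.\ points to $\mu$-a.e.\ points of $B$. I would handle it with one-sided derivatives at a genuine best-reply first point $x\in K$ adjacent to $B$, using that $\Gamma$ is absolutely continuous with $\Gamma'=1$ a.e.\ on the density-point interval. An interval of length $\delta$ with $\Gamma$ increasing at rate about $1$ and $V_i(x,\cdot)$ nearly constant means that shifting the first action right by $\delta$ gains about $\delta\,(1-V_i(x,x_2))+o(\delta)$. Optimality then forces $V_i(x,x_2)\approx1$, contradicting Lemma~\ref{lemma:support}(3).
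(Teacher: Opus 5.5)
Your Steps 1 and 2 are correct, and they reach the key inequality more directly than the paper does. The union bound $V_i(t,z)-V_i(t,z')\le (n-1)\Lambda_1((z,z'])$ gives $-\partial_2V_i(t,z)\le (n-1)\lambda_1(z)$ at once. The paper gets the same bound by combining the diagonal identity $\lambda_i(x)=-\tfrac{1}{n-1}\partial_2V_i(x,x)$ with the monotonicity in $t$ of $-\partial_2V_i(t,x)$, taken from the footnote in the proof of Lemma~\ref{prop:a.c.}. So $\Gamma'=1$ $\mu$-a.e.\ on $B$ is established.

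The gap is in Step 3. You treat the first-order condition as available only at $\Lambda_i$-a.e.\ points, and so you set up a ``transfer'' problem and a case split that do not close the argument. Case (a), as written, cannot even occur, since $\Lambda_i(I)>0$ by Lemma~\ref{lemma:support}(3). The approximation you propose also fails as stated. $\Gamma'=1$ holds a.e.\ on $B$, not on the density interval $I$. On $I\setminus B$ the derivative $\Gamma'$ can be very negative wherever $\lambda_1$ is large; you only know $\Gamma'\ge 1-(n-1)\lambda_1$. So $\mu(I\setminus B)<\varepsilon\mu(I)$ alone does not give a right-shift gain of about $\delta(1-V_i(x,x_2))$. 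You would need $\Lambda_1([x,x+\delta])=o(\delta)$, i.e.\ a base point that is a Lebesgue point of $\lambda_1$ with $\lambda_1=0$. Typically that is a point of $B$ itself, which sends you back to needing a best reply whose first point lies in $B$.

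The missing observation is that this comes for free from the definition of $K$. Every $x\in B\subseteq K$ equals $\min M$ for some $M=\{x,x_2,\dots\}\in\mathrm{BR}(\sigma_{-i})$. Since $x\ge c>0$ and $x_2\ge x+c$ (Lemma~\ref{lemma:support}(2)), you can move the first point within $(0,x_2)$ without changing the cardinality. The map $s\mapsto \Gamma(s)+\mathbf 1\{x_2<\infty\}\int_s^{x_2}V_i(t,x_2)\,dt$ is then maximized at $s=x$. It is differentiable wherever $\Gamma$ is, because $V_i(\cdot,x_2)$ is continuous by atomlessness. Hence at $\mu$-a.e.\ $x\in B$ you get $\Gamma'(x)=\mathbf 1\{x_2<\infty\}V_i(x,x_2)$, with no reference to $\Lambda_i$ at all. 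Together with $\Gamma'(x)=1$, this forces $x_2<\infty$ and $V_i(x,x_2)=\Pr[N_i\cap[x,x_2]=\emptyset]^{n-1}=1$, contradicting Lemma~\ref{lemma:support}(3). This is exactly how the paper concludes (``for $x\in K$, there exists a best-reply $M$ with $x=\min(M)$''), and with this step your argument is complete.
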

\begin{proof}[Proof of Lemma~\ref{lemma:Kzero}]
By Lemma \ref{lemma:K_abs} it suffices to show the statement with respect to the intensity measure $\Lambda_{i\mid K}$. 
By Lemma \ref{lemma:psi}, we may fix a $\Lambda_i$-null set $E\subseteq K$ such that
\begin{equation*}
\Succ_{\sigma_i}(x)=\psi(x)\qquad\text{a.s. for every }x\in K\setminus E.
\end{equation*}
Moreover, letting $A(x,N):=1\{\Succ(x,N)\neq \psi(x)\}$, we have $\mathbb E_{\sigma_i^x}[A(x,N)]=0$ for all $x\in K\setminus E$; hence, by the Campbell-Mecke identity,
\begin{equation*}
\mathbb E_{\sigma_i}\Big[\sum_{y\in N_i\cap K} A(y,N_i)\Big]
=\int_K \mathbb E_{\sigma_i^y}[A(y,N)]\,\Lambda_i(dy)=0,
\end{equation*}
and therefore
\begin{equation*}
\Succ(y,N_i)=\psi(y)\qquad\text{for all }y\in N_i\cap (K\setminus E)\ \text{a.s.}
\end{equation*}
Let $K'=K\cap\{x\in[c,1]:\psi(x)<+\infty\}$. We want to show that $\int_{K'} V_i(x,\psi(x))\Lambda_i(dx)=0$. By symmetry and Lemma \ref{lemma:psi} it is enough to prove
\begin{equation*}
\int_{K'} \Pr_{N_i\sim\sigma_i}\!\bigl[N_i\cap[x,\psi(x)]=\emptyset\bigr]\Lambda_i(dx)=0.
\end{equation*}
If $K'=\emptyset$ then there is nothing to prove. So we may assume that $K'\not=\emptyset$, and since  $\psi$ is monotonic nondecreasing and $K=[c,\bar{k}]$ we have that $c\in K'$. Fix $x\in K'\setminus E$ and sample $N_i\sim\sigma_i$. On the event $\{N_i\cap[x,\psi(x)]=\emptyset\}$, either $\min(N_i)>\psi(x)$, or there exists some $y\in N_i$ with $y<x$ and $\Succ(y,N_i)>\psi(x)$. In the second case, since  $y\in[c,x)$, $K=[c,\bar{k}]$  by Lemma~\ref{lemma:dc} and $\psi$ is monotonic nondecreasing by Lemma \ref{lemma:increasing-differences}, we have $y\in K'$, and thus (a.s.) $\Succ(y,N_i)=\psi(y)$; but $\psi$ is nondecreasing on $K$, so $y<x$ implies $\psi(y)\le \psi(x)$, a contradiction. Consequently,
\begin{equation*}
1\{N_i\cap[x,\psi(x)]=\emptyset\}\le 1\{\min(N_i)>\psi(x)\}\qquad\text{a.s.}
\end{equation*}
Taking expectations and integrating over $x\in K$ yields
\begin{equation*}
\int_{K'} \Pr_{N_i\sim\sigma_i}\left[N_i\cap[x,\psi(x)]=\emptyset\right]\Lambda_i(dx)
\le \int_{K'}\Pr_{N_i\sim\sigma_i}\left[\min(N_i)>\psi(x)\right]\Lambda_i(dx),
\end{equation*}
To prove $\int_{K'}\Pr_{N_i\sim\sigma_i}\!\bigl[\min(N_i)>\psi(x)\bigr]\Lambda_i(dx)=0$, first observe that $\Pr_{N_i\sim\sigma_i}\!\bigl[\min(N_i)>\psi(x)\bigr]\leq \Pr_{N_i\sim\sigma_i}\!\bigl[\min(N_i)>\psi(c)\bigr]$ since $\psi$ is nondecreasing. 
Now, set
\begin{equation*}
q:=\Pr_{N_i\sim\sigma_i}[\min(N_i)>\psi(c)].
\end{equation*}
Assume for contradiction that $q>0$. By definition of $\psi(c)$ there exists a best reply $M$ with $\min(M)=c$ and $\Succ(c,M)=\psi(c)$. Let $M':=M\setminus\{c\}$ and let $B$ be the event that every opponent $j\neq i$ satisfies $\min(N_j)>\psi(c)$. By independence and symmetry, $\Pr[B]=q^{\,n-1}>0$. Removing the action at time $c$ from $M$ adds at least a payoff $cq^{n-1}>0$ (since the point $c$ adds zero marginal payoff and the point $\psi(c)$ increases the winning region to $\{T\leq c\}\cap B$), leading to a contradiction since $M$ is a best reply. Therefore $q=0$, and so
\begin{equation*}
\Pr_{N_i\sim\sigma_i}\!\bigl[N_i\cap[x,\psi(x)]=\emptyset\bigr]=0
\qquad\text{for }\Lambda_i\text{-a.e.\ }x\in K'.
\end{equation*}
\end{proof}

\subsection{Structure}\label{sec:rep}
In the previous section, we have shown that every symmetric Nash equilibrium has zero payoff. In the process of proving it, we have, in fact, shown something stronger, namely by Lemmas~\ref{lemma:Kzero} and \ref{lemma:psi} we have the following properties of a symmetric Nash equilibrium:
\begin{enumerate}[label=(P\arabic*), ref=(P\arabic*)]
    \item \label{prop:succ_prob} For $\Lambda_i\text{-a.e.\ }x\in[c,1]$, $\Pr\left[\mathrm{Succ}_{\sigma_i}(x)=\psi(x)\right]=1$, with $\psi:[c,1]\cup\{+\infty\}\rightarrow[c,1]\cup\{+\infty\}$ monotone nondecreasing. Moreover, $i$ waits at least $c$ between actions $x$ and $\psi(x)$.
    \item \label{prop:value_zero} For $\Lambda_i\text{-a.e.\ }x\in[c,1]$, $V_i(x,\psi(x))=0$ or $\psi(x)=+\infty$.
\end{enumerate}
From this we can deduce the following proposition, which immediately leads to Theorem \ref{theorem:main:structure}.
\begin{proposition}\label{prop:det}
Let $(\sigma_1,\dots,\sigma_n)$ be a symmetric Nash equilibrium with zero expected payoff.
Then there exists a non-decreasing function $\psi:[c,1]\cup\{+\infty\}\to[c,1]\cup\{+\infty\}$ with $\psi(+\infty)=+\infty$ such that
\begin{propenum}
    \item \label{eq:orbit_rep:b} $\Pr_{N_i\sim\sigma_i}\left[\mathrm{Succ}(x,N_i)=\psi(x)\ \text{ for all }x\in N_i\right]=1$. In particular, 
\begin{equation}\label{eq:orbit_rep}
\Pr_{N_i\sim\sigma_i}\left[N_i=\{\psi^{(k)}(x_1(N_i)):k\in\mathbb{Z}_{\ge0}\}\cap[0,1]\right]=1.
\end{equation}
    \item \label{eq:orbit_rep:c} $\psi$ is almost everywhere strictly increasing on its finite part, i.e.
    \begin{equation}\label{eq:ess_strict}
    \Lambda_i\otimes\Lambda_i\!\left(\{(x,y):x<y,\ \psi(x)=\psi(y)<+\infty\}\right)=0.
    \end{equation}
\end{propenum}
In particular, the point process is of the form $\{\psi^{(k)}(x_1(N_i)):k\geq0\}\cap[0,1]$.
\end{proposition}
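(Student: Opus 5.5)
The plan is to lift the Palm-level properties \ref{prop:succ_prob} and \ref{prop:value_zero} to statements about the whole realized set $N_i$, and then to rule out flat pieces of $\psi$ by a void-probability argument. Take $\psi$ to be the minimal best-reply successor map from Section~\ref{sec:zero}, extended by $\psi(+\infty)=+\infty$. It is nondecreasing by Lemma~\ref{lemma:increasing-differences}. Since the zero-payoff proof gives $K=[c,1]$, Lemma~\ref{lemma:psi} holds on all of $[c,1]$, so there is a $\Lambda_i$-null set $E$ with $\Pr[\Succ_{\sigma_i}(x)=\psi(x)]=1$ for every $x\in[c,1]\setminus E$.

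For part (a), set $f(x,N):=1\{\Succ(x,N)\neq\psi(x)\}+1\{x\in E\}$ and apply the Campbell--Mecke equation:
\begin{equation*}
\mathbb E_{\sigma_i}\Big[\sum_{x\in N_i}f(x,N_i)\Big]=\int_{[c,1]}\mathbb E_{\sigma_i^x}[f(x,N)]\,\Lambda_i(dx)=0 .
\end{equation*}
The integrand vanishes off $E$, and $E$ is $\Lambda_i$-null. Hence almost surely every $x\in N_i$ satisfies $\Succ(x,N_i)=\psi(x)$. Because $N_i$ is finite, induction from $x_1(N_i)$ shows that the elements of $N_i$ are exactly $\psi^{(k)}(x_1(N_i))$ until the orbit leaves $[0,1]$ (reaching $+\infty$). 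This gives \eqref{eq:orbit_rep}. If the orbit ever lands on a value larger than $1$ but finite, we redefine $\psi$ there to be $+\infty$. This keeps $\psi$ nondecreasing, since such values can only occur at the top of the range.

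For part (b), suppose the set in \eqref{eq:ess_strict} has positive $\Lambda_i\otimes\Lambda_i$ measure. Since $\psi$ is monotone, it is then constant, equal to some $z\le 1$, on an interval $I=[a,b]$ with $\Lambda_i(I)>0$. Pick $x<y$ in $I$, both outside the null sets where \ref{prop:value_zero} fails, so that $V_i(x,z)=V_i(y,z)=0$. Every point of $N_i$ in $I$ has successor $z$. By Lemma~\ref{lemma:support}(3), with positive probability $N_i\cap[a,(a+b)/2]\neq\emptyset$, and we then try to show that a sampled set cannot contain two points of $I$.

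Two points of $I$ in the same set is indeed impossible: the successor of the smaller one would be $z$, but the next point is the larger one, which is at most $b<z$. So on the positive-probability event above, $N_i$ meets $I$ exactly once. The aim is to contradict the void condition $V_i(y,z)=0$ for $y$ near $b$. On the event $\{N_i\cap I\subseteq[a,(a+b)/2]\}$, $N_i$ has no point in $(y,z)$ for $y\in((a+b)/2,b)$, since its next point is $z$ itself. This event has positive probability by Lemma~\ref{lemma:support}(3) together with the one-point-in-$I$ fact. Independence and symmetry then give
\begin{equation*}
V_i(y,z)\ge \Pr[N_j\cap[y,z)=\emptyset]^{\,n-1}>0,
\end{equation*}
up to the endpoint $z$, which is handled by atomlessness (Lemma~\ref{lemma:no_mass}). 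This contradicts \ref{prop:value_zero} on a set of $y$ of positive $\Lambda_i$-measure, so \eqref{eq:ess_strict} follows.

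The main obstacle is this last step. We must check carefully that \ref{prop:value_zero} holds on a $\Lambda_i$-positive subset of $((a+b)/2,b)$, which uses $\Lambda_i$ having full support (Lemma~\ref{lemma:support}) together with Lemma~\ref{lemma:K_abs}. We must also handle the closed interval $[y,z]$ against the half-open one through atomlessness.
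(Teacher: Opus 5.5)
Your proposal is correct. Part (a) is the paper's argument: Campbell--Mecke applied to the indicator that the successor differs from $\psi(x)$. The remark about redefining $\psi$ at finite values above $1$ is vacuous. $\psi$ is an $\inf\,\text{argmax}$ over $[x,1]\cup\{+\infty\}$, so it already takes values in $[c,1]\cup\{+\infty\}$.

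Part (b) is also valid, but it takes a longer route than the paper.

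\textbf{The paper's route.} It takes the level set $B_z=\{\psi=z\}$ with $\Lambda_i(B_z)>0$. The bound $|N_i|\le\beta$ gives $\Pr[N_i\cap B_z\neq\emptyset]\ge\Lambda_i(B_z)/\beta>0$. By (a), any point of $N_i$ in $B_z$ has successor $z$, so $z\in N_i$ on that event. Hence $\Lambda_i(\{z\})>0$, contradicting Lemma~\ref{lemma:no_mass}.

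\textbf{Your route.} You use full support to get a positive-probability event, derive $V_i(y,z)>0$ for $y\in((a+b)/2,b)$, and contradict \ref{prop:value_zero}. One small correction: the event as you wrote it, $\{N_i\cap I\subseteq[a,(a+b)/2]\}$, includes $\{N_i\cap I=\emptyset\}$. On that piece $N_i$ may have points in $(b,z)$. The event you need is $E=\{N_i\cap[a,(a+b)/2]\neq\emptyset\}$.

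\textbf{Where the contradiction really lives.} On $E$, the successor $z$ of the point in $I$ belongs to $N_i$, so $E\subseteq\{z\in N_i\}$ almost surely. Thus $\Pr[z\in N_i]\ge\Pr[E]>0$. That already contradicts atomlessness, which is exactly the fact you invoke to pass from $[y,z)$ to $[y,z]$.

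As a result, several steps can be dropped:
\begin{itemize}
\item the detour through \ref{prop:value_zero};
\item the ``exactly one point in $I$'' observation;
\item the choice of $y$ near $b$;
\item the concern about Lemma~\ref{lemma:K_abs}. This is not needed in any case: \ref{prop:value_zero} holds $\Lambda_i$-a.e., and $\Lambda_i(((a+b)/2,b))>0$ by Lemma~\ref{lemma:support}.
\end{itemize}

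The paper's version needs only (a) plus atomlessness, not even full support. Your version uses strictly more input and buys nothing beyond checking consistency with \ref{prop:value_zero}.
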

\begin{proof}

(a) Let $A(x,N):=\mathbf{1}\{\mathrm{Succ}(x,N_i)\neq \psi(x)\}$. By (a),
for $\Lambda_i$-a.e.\ $x$ we have $\mathbb{E}_{\sigma_i^x}[A(x,N_i)]=0$.
Using the Campbell-Mecke identity,
\begin{align}
\mathbb{E}_{\sigma_i}\!\left[\sum_{x\in N}A(x,N_i)\right]
&=\int \mathbb{E}_{\sigma_i^x}[A(x,N_i)]\,\Lambda_i(\mathrm{d}x)\;=\;0. \label{eq:campbell_here}
\end{align}
Since the sum is nonnegative, \eqref{eq:campbell_here} implies $\sum_{x\in N}A(x,N)=0$ almost surely,
which is exactly the statement.
Now, order the points of $N$ as $x_1(N_i)<x_2(N_i)<\cdots$ (finite since $|N|\le\lfloor 1/c\rfloor$ a.s.).
By definition, $x_{k+1}(N)=\mathrm{Succ}(x_k(N_i),N_i)$, and by (a) this equals $\psi(x_k(N_i))$ a.s. Iterating yields \eqref{eq:orbit_rep}.

(b) Suppose \eqref{eq:ess_strict} fails. Then there exists $z<+\infty$ such that the level set
$B_z:=\{x\in[c,1]:\psi(x)=z\}$ satisfies $\Lambda_i(B_z)>0$.
Let $\beta=\lfloor 1/c\rfloor$. Since $|N|\le \beta$ a.s.,
\begin{equation}\label{eq:hit_level_set}
\Pr_{N_i\sim\sigma_i}[N_i\cap B_z\neq\emptyset]\;\ge\;\frac{\mathbb{E}[|N_i\cap B_z|]}{\beta}\;=\;\frac{\Lambda_i(B_z)}{\beta}\;>\;0.
\end{equation}
On the event $\{N_i\cap B_z\neq\emptyset\}$ pick $x\in N_i\cap B_z$; then (a) gives
$\mathrm{Succ}(x,N_i)=\psi(x)=z$, hence $z\in N_i$ a.s. Therefore $\Pr[z\in N_i]>0$, i.e.\ $\Lambda_i(\{z\})>0$, contradicting Lemma~\ref{lemma:no_mass}. This proves \eqref{eq:ess_strict}.
\end{proof}

\subsection{Uniqueness}\label{sec:unique}

We know that every symmetric Nash equilibrium has zero payoff and is of the form $(\sigma(X_i,\psi))_{i=1}^n$ with $X_1,\dots,X_n$ i.i.d. From the zero-payoff property, we can derive the following lemma, which gives necessary and sufficient conditions for a strategy profile to be a symmetric Nash equilibrium in terms of single-point deviations.

\begin{lemma}\label{lemma:equivalent_formulation}
A symmetric strategy profile of the form $\sigma=(\sigma_1,\dots,\sigma_n)$ is a Nash equilibrium if and only if every single-action-time deviation has zero payoff, i.e.
\begin{equation}\label{eq:single_point_value_condition}
W^\sigma_i(\{x\},\sigma_{-i}) \;=\; c  \qquad \forall x\in[c,1],
\end{equation}
Equivalently,
\begin{equation}\label{eq:single_point_value_condition_expanded}
\Pr_{N_{-i}\sim\sigma_{-i},T}\bigl[N_{-i}\cap[T,x]=\emptyset,\ T\le x\bigr]=c,\qquad \forall x\in[c,1].
\end{equation}
\end{lemma}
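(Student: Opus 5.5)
We prove the two directions separately. Throughout, the profile is taken to be of the form $(\sigma(X_i,\psi))_{i=1}^n$ with $X_i$ i.i.d.\ and atomless intensity; this is the setting of the lemma, since every symmetric equilibrium has this form by Proposition~\ref{prop:det} and Lemma~\ref{lemma:no_mass}. The expanded form \eqref{eq:single_point_value_condition_expanded} is just \eqref{eq:H} for $k=1$: $W_i^\sigma(\{x\})=\int_0^x V_i(t,x)\,dt$. So the whole task is to show that the condition $\Gamma(x):=\int_0^xV_i(t,x)\,dt=c$ on $[c,1]$ is equivalent to the profile being an equilibrium.

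\emph{Necessity.} Let $\sigma$ be a symmetric Nash equilibrium.
\begin{enumerate}
\item By Theorem~\ref{theorem:main:payoff}, every player's equilibrium payoff is zero.
\item In the proof of that theorem we showed $K=[c,1]$ and $\Gamma(x)=c$ for all $x\in K$. This is exactly \eqref{eq:single_point_value_condition}, so necessity is essentially already established.
\item For completeness, the direct argument runs as follows. Each $x\in[c,1]$ lies in $K$, so $\{x\}\cup(\text{continuation})$ is a best reply. Since the first point has zero marginal net gain, $u_i(\{x\},\sigma_{-i})=0$. The inequality $\le$ follows from the payoff being zero; the inequality $\ge$ follows from $x\in K$ together with the break-even argument in that proof.
\end{enumerate}

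\emph{Sufficiency.} Assume $\Gamma(x)=c$ for all $x\in[c,1]$. Take any pure deviation $M=\{y_1<\dots<y_k\}$.
\begin{enumerate}
\item By \eqref{eq:H},
\[
W_i^\sigma(M)=\sum_{\ell}\int_{y_{\ell-1}}^{y_\ell}V_i(t,y_\ell)\,dt\le\sum_\ell\int_0^{y_\ell}V_i(t,y_\ell)\,dt=\sum_\ell \Gamma(y_\ell),
\]
since each integrand is nonnegative and extending the lower limit to $0$ only adds mass.
\item Points $y_\ell<c$ contribute at most $y_\ell<c$. Points in $[c,1]$ contribute exactly $c$ by hypothesis. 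Hence $u_i(M,\sigma_{-i})\le 0$ for every pure $M$, and by linearity for every mixed deviation.
\item It remains to show $u_i(\sigma)=0$, i.e.\ that the equilibrium strategies themselves attain the upper bound. This holds iff the overlapping regions drop out for $\sigma_i$-a.e.\ realization:
\[
\int_0^{y_{\ell-1}}V_i(t,y_\ell)\,dt=0 \quad\text{for consecutive points } y_{\ell-1},\,y_\ell=\psi(y_{\ell-1}).
\]
Since $V_i(t,y_\ell)\le V_i(y_{\ell-1},y_\ell)$ for $t\le y_{\ell-1}$, it suffices that $V_i(x,\psi(x))=0$ for $\Lambda_i$-a.e.\ $x$ with $\psi(x)<\infty$.
\item If the profile is assumed to come from the $(X,\psi)$ representation with property \ref{prop:value_zero}, this is immediate.
\item Otherwise, derive it from the hypothesis. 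Differentiate $\Gamma\equiv c$ to get $1+\int_0^x\partial_2V_i(t,x)\,dt=0$ a.e. Then use the orbit structure: the opponents' successors are $\psi$, with $\psi$ nondecreasing and strictly increasing. Argue, as in the proof of Lemma~\ref{lemma:Kzero}, that on $[x,\psi(x)]$ an opponent either has a point or has its first point beyond $\psi(x)$. Finally, show that the latter event has zero probability, by comparing $\Gamma(\psi(c))=c$ with the $\{T\le c\}$ contribution that would otherwise be positive.
\end{enumerate}

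The main obstacle is step 5 of sufficiency: establishing the void condition $V_i(x,\psi(x))=0$ purely from the single-point condition, so that the equilibrium strategies attain the upper bound $\sum_\ell\Gamma(y_\ell)-ck=0$. I would first try to reuse the argument of Lemma~\ref{lemma:Kzero} verbatim, using only the representation and monotonicity of $\psi$. If that fails, I would restrict the ``symmetric strategy profile'' in the statement to profiles satisfying \ref{prop:succ_prob}--\ref{prop:value_zero}, which is how the lemma is used downstream.
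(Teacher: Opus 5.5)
\paragraph{Necessity.} Your necessity direction is correct but takes a different route from the paper. You extract $\Gamma\equiv c$ on $K=[c,1]$ from inside the proof of Theorem~\ref{theorem:main:payoff}. The paper instead uses only the zero-payoff statement and the fact that every $x\in[c,1]$ lies in some best reply $M$. Since $\{x\}$ is not a profitable deviation, $W_i(\{x\},\sigma_{-i})\le c$. Subadditivity then gives $0=u_i(M,\sigma_{-i})\le\sum_{y\in M}\bigl(W_i(\{y\},\sigma_{-i})-c\bigr)\le W_i(\{x\},\sigma_{-i})-c$. The paper's version is more modular. Your item~3 is circular as written, since ``the first point has zero marginal net gain'' is the claim itself, but items~1--2 already suffice. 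Your bound $u_i(M,\sigma_{-i})\le 0$ for every deviation is the paper's argument, and you handle points below $c$ more carefully than the paper does.

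\paragraph{Sufficiency.} The real gap is the step you flagged, showing $u_i(\sigma)=0$, and your step~5 cannot close it. In Lemma~\ref{lemma:Kzero}, the case $q=\Pr[\min N_i>\psi(c)]>0$ is excluded only because the set $M$ with $\min M=c$ and $\Succ(c,M)=\psi(c)$ is a \emph{best reply}. Deleting $c$ would gain $c\,q^{n-1}$, which a best reply rules out. With only $\Gamma\equiv c$ available, $\Gamma(\psi(c))=c$ reads $c\,q^{n-1}+\int_c^{\psi(c)}V_i(t,\psi(c))\,dt=c$. This is perfectly consistent with $q>0$; it only says the action at $\psi(c)$ does not pay for itself.

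More fundamentally, the single-point condition constrains $\Gamma$ but not the overlap terms $\int_0^{y_{\ell-1}}V_i(t,y_\ell)\,dt$, so without an extra hypothesis the sufficiency direction is false. Take $n=2$ and let both players use a Poisson process of rate $1/c$ on $[c,1]$. Then $\Gamma(x)=c\,e^{-(x-c)/c}+c\bigl(1-e^{-(x-c)/c}\bigr)=c$ on $[c,1]$. Yet the sum of the two payoffs is $-(1-c)+\tfrac c2\bigl(1-e^{-2(1-c)/c}\bigr)<0$.

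The same failure occurs inside the $(X,\psi)$ class when the second layer starts before the first ends. For $n=2$, the condition in derivative form is $x f_1(x)+\lambda_{\ge 2}(x)\bigl(x-\psi^{-1}(x)\bigr)=1$, where $f_1$ is the density of $X$ and $\lambda_{\ge 2}$ the intensity of non-initial points. One can therefore prescribe $\psi^{-1}$ on the overlap, solve for $f_1$ there, and continue beyond it with the layer ODE. This gives $\Gamma\equiv c$ together with $V_i(x,\psi(x))\ge\Pr[X>\psi(x)]>0$ for $x$ near $c$, hence $u_i(\sigma)<0$.

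So your fallback is the right fix. The lemma must be read for profiles satisfying \ref{prop:value_zero}, or directly for profiles with zero payoff, and then sufficiency is immediate from your steps~1--4. This is also what the paper tacitly assumes. Its sufficiency step concludes that $M_i\in\supp(\sigma_i)$ has zero payoff ``since $M_i$ is a best reply,'' which presupposes the conclusion, and the lemma is introduced as a consequence of the zero-payoff property. With that restriction, your proof is complete and agrees with the paper's.
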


Now we are ready to complete the proof of Theorem \ref{theorem:main} by showing that there is a unique strategy profile of the form $\sigma(X,\psi)$ and satisfies \eqref{eq:single_point_value_condition_expanded}. 
\begin{proof}[Proof of Theorem \ref{theorem:main:unique}]
The proof is mechanical, although it requires a careful recursive argument. 
Let $\sigma$ be a symmetric Nash equilibrium with zero expected payoff.
By Theorem~\ref{theorem:main:structure}, $\sigma$ is induced by some pair $(X,\psi)$ with $\psi$ strictly increasing on $\psi^{-1}([c,1])$ and satisfying
\eqref{eq:single_point_value_condition_expanded} for all $x\in[c,1]$. 
Write $c_1:=c$ and $c_{k+1}:=\psi(c_k)$ for the layer endpoints, and set
\begin{equation*}
Y_{i,k}\;:=\;\psi^{(k-1)}(X_i),\qquad k\ge 1.
\end{equation*}
In other words $Y_{i,k} = x_k(N_i)$ (with the convention $x_k(N_i)=+\infty$ if $|N_i|<k$). Sine $\psi$ is strictly increasing we have that the interior of $Y_{i,k}$ supports is disjoint. Moreover, on the event $\{Y_{i,k}<+\infty\}$ the finite support is an interval; we write $\mathrm{supp}(Y_{i,k}\mid Y_{i,k}<+\infty)=[c_k,c_{k+1}]$. 

\paragraph{Existence and uniqueness of $X_i$.} We begin by proving that, within the class of symmetric equilibria, the distribution of $X_i$ is uniquely determined. Let $F(x)=\Pr[X_i\leq x]$. 
Fix $x\in[c_1,c_2]$. If one player deviates to the pure action $\{x\}$, the player wins exactly when $T\le x$ and every opponent's point
lies outside $(T,x)$. By independence and symmetry this gives
\begin{equation*}
    u(\{x\},\sigma_{-i})=\int_0^x \Pr[X_j\not\in(t,x)\text{ for }j\neq i]\,dt \;-\; c=\int_0^x \bigl[1-F(x)+F(t)\bigr]^{n-1}\,dt \;-\; c.
\end{equation*}
Hence by Lemma \ref{lemma:equivalent_formulation}
\begin{equation} \label{eq:FX_first_layer}
    \int_0^x \bigl[1 - F(x) + F(t)\bigr]^{\,n-1}\,dt \;=\; c 
    \qquad \text{for all } x\in[c_1,c_2].
\end{equation}

Since \(\sigma\) is atomless, \(F\) is continuous, and in particular \(F(c)=0\).
For a candidate solution \(F\), define
\begin{equation*}
    \Phi(x,y) \;=\; \int_0^x \bigl(1 - y + F(t)\bigr)^{\,n-1}\,dt .
\end{equation*}
By continuity of \(F\), the map \(\Phi\) is differentiable, and the equilibrium condition reads \(\Phi(x,F(x))=c\).
By the Implicit Function Theorem,
\begin{equation*}
    F'(x) \;=\; 
    \frac{1}{(n-1)\displaystyle\int_0^x \bigl(1 - F(x) + F(t)\bigr)^{\,n-2}\,dt}
    \qquad \text{for } x\in[c_1,c_2].
\end{equation*}

Observe that the denominator is strictly positive: for \(c_1\le t\le x\le c_2\), we have \(\,1 - F(x) + F(t)\in[0,1]\), so
\begin{equation*}
    \int_0^x \bigl(1 - F(x) + F(t)\bigr)^{\,n-2}dt 
    \;\ge\;
    \int_0^x \bigl(1 - F(x) + F(t)\bigr)^{\,n-1}dt 
    \;=\; c \;>\; 0 .
\end{equation*}
In particular, $F$ is $C^1$ on $[c_1,c_2]$.

Thus $F$ solves the integral condition if and only if it satisfies the differential equation above with boundary condition $F(c)=0$, which is equivalently expressed by the system
\begin{align*}
    F'(x) \;&=\; \frac{1}{(n-1)\,I_{n-2}(x)},\\[2pt]
    I_k'(x) \;&=\; 1 \;-\; k\,F'(x)\,I_{k-1}(x), \qquad k=1,\dots,n-2,\\[2pt]
    I_0(x) \;&=\; x,
\end{align*}
where, 
\begin{equation*}
    I_k(x) \;=\; \int_0^x \bigl(1 - F(x) + F(t)\bigr)^{\,k} dt,
\end{equation*}
for each $k\geq 0$. The boundary condition is $(F(c),I_1(c),\dots,I_{n-2}(c))=(0,c,\dots,c)$.
By the Picard--Lindel\"of theorem, this ODE system admits a unique solution.
Thus there is a single candidate law for $X_i$, and the remaining task is to identify the corresponding $\psi$. 

\paragraph{Existence and uniqueness of $\psi$ by layer extension.} We construct $\psi$ iteratively, layer by layer. We claim that once the law of $Y_{i,k}$ is known, the restriction of $\psi$ to $[c_k,c_{k+1}]$ is uniquely determined.
Indeed, fix $k\ge 1$ and set $F_k$ to be the CDF of $Y_{i,k}$ on $[c_k,c_{k+1}]$.
Restrict the equilibrium condition \eqref{eq:single_point_value_condition_expanded} to $x\in[c_{k+1},\min\{c_{k+2},1\}]$.
On this layer, the event $\{N_{-i}\cap(t,x)=\emptyset\}$ depends only on the consecutive points $Y_{j,k}$ and $Y_{j,k+1}=\psi(Y_{j,k})$ for $j\not=i$. Since $x\in \bigcap_{j\not=i}[Y_{j,k},Y_{j,k+2}]$ almost surely, the restriction of \eqref{eq:single_point_value_condition_expanded} yields the following functional equation
\begin{align*}
\hspace{-1cm}
    \Pr\left[\left(\cup_{j\not=i}\{\psi^{(k)}(X_j):k\geq0\}\right)\cap [T,x]=\emptyset\text{ and }T\leq x\right] &= \Pr[\{Y_{j,k},\psi(Y_{j,k})\}_{j\not=i}\cap [T,x]=\emptyset\text{ for $j\not=i$ and }T\leq x] \\
    &= \int_0^x\Pr[\{Y_{i,k},\psi(Y_{i,k})\}\cap [t,x]=\emptyset]^{n-1} dt
\end{align*}
For $x\in \text{supp}(Y_{i,k})\cap [0,1]$, using strict monotonicity of $\psi$, and $\psi(x)>x+c$, the event $E:=\{\{Y_{i,k},\psi(Y_{i,k})\}\cap [t,x]=\emptyset\}$ is equal to $\{Y_{i,k}\in [0,\psi^{-1}(t)]\cup[\psi^{-1}(x),t]\}.$
And so,
\begin{equation*}
    \int_0^x\Pr[\{Y_{i,k},\psi(Y_{i,k})\}\cap [t,x]=\emptyset]^{n-1} dt = \int_0^x [F_{k}(\psi^{-1}(t))+(F_{k}(t)-F_{k}(\psi^{-1}(x)))1_{\{t>\psi^{-1}(x)\}}]^{n-1}dt
\end{equation*}
for $x\in\text{supp}(\psi(Y_k))\cap [0,1]$. Writing $g:=\psi^{-1}$ on $[c_{k+1},\min\{1,c_{k+2}\}]$, we have
\begin{equation}\label{eq:functional_eq_g}
    \int_0^x [F_{k}(g(t))+(F_{k}(t)-F_{k}(g(x)))1_{\{t>g(x)\}}]^{n-1}dt = c \text{ for all }x\in [c_{k+1},\min\{c_{k+2},1\}].  
\end{equation}
We prove uniqueness by induction on the layer index $k$. On layer $k$ we work with the inverse map $g=\psi^{-1}$, since $\psi$ is strictly increasing on its finite part and therefore invertible there. Assume inductively that the law of $Y_{i,k}$ satisfies the required regularity properties (in particular, sufficient smoothness and a strictly positive density on the interior of the layer). Under these conditions, the layer equilibrium condition can be differentiated and rewritten as an initial-value problem for a system of ordinary differential equations in $g$. The resulting vector field is locally Lipschitz, so the Picard--Lindelöf theorem yields existence and uniqueness of $g$ on a neighborhood of the layer’s left endpoint; moreover, the solution extends uniquely across the layer until the next layer boundary is reached. Since $g$ uniquely determines $\psi$ by inversion, the successor map on that layer is unique as well, which in turn uniquely determines the next-layer variable $Y_{i,k+1}$ and preserves the regularity assumptions needed to continue the argument. Because the base case $k=1$ has already been established by showing existence and uniqueness of $X_i$, the induction closes, yielding a unique map $\psi$ (equivalently, a unique $g$) on every layer, and therefore a unique symmetric equilibrium. A complete derivation of the ODE formulation (and the corresponding existence/uniqueness argument) is given in Appendix~\ref{section:uniqq}.

\end{proof} 
\begin{remark}[Algorithmic computation]
Proof of Theorem ~\ref{theorem:main:unique} yields an explicit procedure to compute the unique symmetric equilibrium:
(i) solve \eqref{eq:FX_first_layer} for $F_X$;
(ii) extend $\psi$ layer-by-layer by solving the layer functional equation \eqref{eq:functional_eq_g}  for $g=\psi^{-1}$, and set $Y_{i,k+1}=\psi(Y_{i,k})$,
until the iterates exit $[0,1]$. For example, for $n=2$, the functional equation \eqref{eq:functional_eq_g}  is equivalent to the following ODE
\begin{equation*}
    g'(x) = \frac{1}{F_{k}'(g(x))(x-g(x))}
\end{equation*}
with initial condition $g(c_{k+1})=c_k$. Moreover, $c_{k+2}$ is obtained as the (first) hitting time of $c_{k+1}$ by the solution $g$:
\begin{equation*}
    c_{k+2}:=\min\left\{\inf\{x\ge c_{k+1}:\ g(x)=c_{k+1}\}, 1\right\}.
\end{equation*}
Changing variables $y=g(x)$ we obtain
\begin{equation}\label{eq:recursive}
    \psi'(y) = F_{k}'(y)(\psi(y)-y) \text{ for }y\in \text{supp}(Y_{i,k})\cap\psi^{-1}([0,1]).
\end{equation}
with initial condition $\psi(c_k)=c_{k+1}$. We know that the initial point $Y_{i,1}=X_i$ is distributed according to CDF $F_{1}(x) = \log(x/c)$ with support $[c,\min\{ce,1\}]$, therefore by a recursive argument we can compute $Y_{i,k+1}:=\psi(Y_{i,k})$ for $k\geq1$ using equation \eqref{eq:recursive}. The resulting CDFs and PDFs are plotted in Figure \ref{figure:computation}.
\end{remark}

\section{Cost of Spam} \label{section:spam}

Now we derive bounds on the spam rate and the cost of spam for the unique symmetric Nash equilibrium.

\begin{proposition}\label{prop:spam_bound}
Let $\sigma^{(n)}=(\sigma_1,\dots,\sigma_n)$ be the unique symmetric Nash equilibrium with $n$ players. Then
\begin{enumerate}
    \item $\mathrm{Spam}(\sigma^{(n)})\;>\;\frac{1-c}{c}$.
    \item $\text{Spam}(\sigma^{(n)})\leq\frac{1-c^{\frac{n}{n-1}}}{c}$. 
\end{enumerate}
In particular, $\lim_{n\rightarrow+\infty}\text{Spam}(\sigma^{(n)})=\frac{1-c}{c}$, $\text{CoS}\geq\frac{1-c}{c}$ and $\text{Spam}(\sigma^{(n)})$ is not monotonic increasing.
\end{proposition}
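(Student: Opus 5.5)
The plan is to turn the zero-payoff property (Theorem~\ref{theorem:main:payoff}) into an identity linking spam to the probability that the opportunity is captured at all. Then I bound that probability using only the single-point condition of Lemma~\ref{lemma:equivalent_formulation}, evaluated at $x=1$.

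\textbf{Step 1: spam as a capture probability.} Summing the equilibrium payoffs over players gives $0=\sum_i u_i(\sigma)=\sum_i W_i(\sigma)-c\,\mathrm{Spam}(\sigma^{(n)})$. The intensity measures are atomless (Lemma~\ref{lemma:no_mass}), so ties occur with probability zero. Hence $\sum_i W_i(\sigma)$ is the probability that some player wins, namely $\Pr[N\cap[T,1]\neq\emptyset]$ with $N=\bigcup_j N_j$. Define
\begin{equation*}
q(t):=\Pr_{N_i\sim\sigma_i}\bigl[N_i\cap[t,1]=\emptyset\bigr].
\end{equation*}
By independence and symmetry, and with $T$ uniform,
\begin{equation*}
c\,\mathrm{Spam}(\sigma^{(n)})=1-\int_0^1 q(t)^n\,dt.
\end{equation*}

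\textbf{Step 2: the single-point condition at $x=1$.} Lemma~\ref{lemma:equivalent_formulation} at $x=1$ gives $\int_0^1 V_i(t,1)\,dt=c$. Since $V_i(t,1)=q(t)^{n-1}$, this reads
\begin{equation*}
\int_0^1 q(t)^{n-1}\,dt=c.
\end{equation*}
Both bounds now reduce to comparing $\int_0^1 q^n$ with $\int_0^1 q^{n-1}$, using $0\le q\le1$.

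\textbf{Step 3: the two bounds.} For the upper bound on spam, apply Hölder (or Jensen) on the unit interval with exponent $n/(n-1)$:
\begin{equation*}
c=\int_0^1 q^{n-1}\le\Bigl(\int_0^1 q^n\Bigr)^{(n-1)/n},
\end{equation*}
so $\int_0^1 q^n\ge c^{n/(n-1)}$. Substituting into Step 1 yields $\mathrm{Spam}\le(1-c^{n/(n-1)})/c$.

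For the strict lower bound, pointwise $q^n\le q^{n-1}$ gives $\int_0^1 q^n\le c$, hence $\mathrm{Spam}\ge(1-c)/c$. Strictness needs $\{t:0<q(t)<1\}$ to have positive measure. By Lemma~\ref{lemma:support}(3), every interval $[t,1]$ with $c\le t<1$ is hit with positive probability, so $q(t)<1$ on $[c,1)$; the same holds for $t<c$, since $q$ is nondecreasing in $t$. If $q$ vanished a.e. on that set, then $\int_0^1 q^{n-1}=0\neq c$, a contradiction. So $0<q<1$ on a positive-measure set, and there $q^n<q^{n-1}$ strictly.

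I expect this strictness argument to be the only delicate step, since it needs the full-support lemma rather than just the integral identities.

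\textbf{Step 4: consequences.} Because $c^{n/(n-1)}\to c$ as $n\to\infty$, the two bounds squeeze $\mathrm{Spam}(\sigma^{(n)})\to(1-c)/c$. The bound $\mathrm{CoS}\ge(1-c)/c$ holds because $\sigma^{(n)}$ is a symmetric equilibrium, so its spam is at most the maximum. Finally, every term of the sequence is strictly above its limit $(1-c)/c$. A monotone nondecreasing sequence cannot converge to a value strictly below its terms, so $\mathrm{Spam}(\sigma^{(n)})$ is not monotone increasing in $n$.
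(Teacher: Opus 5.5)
Your proposal is correct and follows essentially the same route as the paper: zero payoffs turn spam into the capture probability $1-\int_0^1 q(t)^n\,dt$, the single-point condition at $x=1$ gives $\int_0^1 q(t)^{n-1}\,dt=c$, and the upper bound comes from the same convexity (Jensen/H\"older) step. Your strict lower bound, via $q^n<q^{n-1}$ on the positive-measure set $\{0<q<1\}$, is a more carefully spelled-out version of the paper's claim that $\Pr[B\setminus A^c]=\int_0^1 q(t)^{n-1}\bigl(1-q(t)\bigr)\,dt>0$.
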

\begin{proof} Let $A :=\left\{\left (\bigcup_{j=1}^n N_j\right) \cap [T,1]\not=\emptyset\right\}$
be the event that some player captures the opportunity. Then,
\begin{equation*}
\sum_{i=1}^n u_i(\sigma) = \Pr[A]-c\cdot \mathbb E_\sigma[|N|].
\end{equation*}
Since the equilibrium has zero expected payoff, $\sum_{i=1}^n u_i(\sigma)=0$, we deduce
\begin{equation}\label{eq:PA_spam_short}
\Pr[A]=c\cdot\E_{\sigma}[|N|]=c\cdot\mathrm{Spam}(\sigma).
\end{equation}
Applying Lemma \ref{lemma:equivalent_formulation} at $x=1$, $\Pr_{N_{-i}\sim \sigma_{-i},T}[N_{-i}\cap [T,1]=\emptyset] = c$. Since $A^c\subseteq B:=\{N_{-i}\cap [T,1]=\emptyset\}$ and $B\setminus A^c=\{N_i\cap [T,1]\}$, by Lemma \ref{lemma:support}, $\Pr[B\setminus A^c]>0$ then $\Pr[A]> 1-c$. Combining with \eqref{eq:PA_spam_short} we obtain $\mathrm{Spam}(\sigma)> \frac{1-c}{c}$.

Now to obtain the upper bound, by symmetry, conditioning on $T=t$, $\int_0^1 \Pr[N_i\cap [t,1]=\emptyset]^{n-1}dt=c$. Since $x\mapsto x^{\frac{n}{n-1}}$ is convex, we have
\begin{equation*}
 \Pr[A^c]= \int_0^1 \Pr[N_i\cap [t,1]=\emptyset]^{n}dt\geq \left(\int_0^1 \Pr[N_i\cap [t,1]=\emptyset]^{n-1}dt\right)^{\frac{n}{n-1}}=c^{\frac{n}{n-1}}.
\end{equation*}
Therefore, by \eqref{eq:PA_spam_short} $\text{Spam}(\sigma)\leq \frac{1-c^{\frac{n}{n-1}}}{c}$.
\end{proof}
Using the algorithm from the previous section, we numerically approximate the equilibrium and verify the bounds for $c=0.05$ (see Figure \ref{figure:spam}). Interestingly—and counter-intuitively—the computational approximation suggests that total spam decreases as the number of players increases. Thus, in the blockchain context, the validator’s revenue can decrease as the number of players grows.

\input{plots/plot4}

\section{Conclusions}

We model probabilistic backrunning on blockchains as a timing game of costly actions and characterize its (almost surely) unique symmetric equilibrium. A key implication is that each player’s expected payoff in this equilibrium is zero, and this conclusion does not depend on how many players participate (as long as there are at least two competitors) or the distribution of the arrival time of the opportunity. Moreover, equilibrium play has an interesting structure: a player randomizes only the timing of the initial action (or the decision not to act), and conditional on that initial draw, all subsequent actions are pinned down by a deterministic ``successor” map that generates later action times in a monotone, recursive way. Moreover, we show that the total number of actions taken by the players is lower bounded by the expected value of the opportunity divided by the marginal cost per action minus one. In particular, in blockchains that order transactions by time of arrival, the sequencer of the blockchain captures almost all the value of the opportunity through fees as long as the cost per transaction is sufficiently small with respect to the expected value of the opportunity. This is at the cost of filling the blockspace with economically meaningless transactions. While the value per opportunity is almost fully captured by the sequencer, it seems plausible that the sequencer suffers directly and indirectly from the ``competition through spam" too, as handling spam increases infrastructure cost, and as spam might crowd out some users, for whom the user experience deteriorates through spam.

Our analysis focused on symmetric equilibria. While the qualitative features of equilibrium --zero expected profit and the spam bounds-- would probably carry over to asymmetric equilibria, equilibrium uniqueness no longer holds among asymmetric equilibria. For example, one can show that our construction already yields many such equilibria--for example, by selecting layers in which a subset of agents remains inactive. But a complete characterization is still missing and a natural open question to study. Another important extension is to introduce heterogeneity across players, allowing costs and the expected value of opportunities to differ by agent, and to analyze how this affects efficiency and the total expected amount of spam generated in equilibrium. 

%% file: plots/plottest.tex
\begin{figure}[!t]
\centering
\begin{tikzpicture}[scale=0.75, transform shape]
\begin{axis}[
  width=0.9\linewidth,
  height=6.0cm,
  xmin=0, xmax=1,
  ymin=-0.05, ymax=1.02,
  grid=both,
  xlabel={$x$},
  ylabel={$\sigma(x)$},
  unbounded coords=discard,
  legend style={draw=black, fill=white, at={(0.02,0.98)}, anchor=north west},
  every axis plot/.append style={
    line width=1pt,
    line join=round,
    line cap=round,
    mark=none,
  },
]
\pgfmathsetmacro{\cval}{1/exp(1)} 
\addplot[smooth, draw=blue, domain=0:1, samples=400]
  { (x<\cval) ? 0 : ln(x/\cval) };
\addlegendentry{$\sigma(x)$}

\addplot[draw=red, dashed] coordinates {(\cval,-0.05) (\cval,1.02)};
\addlegendentry{$x=c=1/e\approx 0.37$}
\end{axis}
\end{tikzpicture}
\caption{The equilibrium strategy for $c=1/e$.}\label{figure:example}
\end{figure}

%% file: plots/plot2.tex
\begin{figure}[t]
\centering
\begin{tikzpicture}[x=12cm,y=1cm,>=Latex]
  \usetikzlibrary{patterns,patterns.meta,calc}

  \def\cpoints{0.20,0.42,0.60,0.80,0.92} 
  \def\x{0.31}        
  \def\fx{0.51}       
  \def\fxx{0.70}      
  \def\stripH{0.28}   

  \def\cOne{0}\def\cTwo{0}
  \foreach \c [count=\i] in \cpoints {
    \fill[black] (\c,0) circle (1.8pt);
    \node[below=6pt] at (\c,0) {$c_{\i}$};
    \ifnum\i=1 \xdef\cOne{\c}\fi
    \ifnum\i=2 \xdef\cTwo{\c}\fi
  }

  \path[fill=red!18,draw=red!80] (\cOne,-0.5*\stripH) rectangle (\cTwo,0.5*\stripH);
  \path[
    pattern={Lines[angle=45,distance=2.2pt,line width=0.35pt]},
    pattern color=red,
    draw=none
  ] (\cOne,-0.5*\stripH) rectangle (\cTwo,0.5*\stripH);

  \draw[->] (0,0) -- (1.05,0);
  \draw (0,0) -- ++(0,0.08) node[above] {$0$};
  \draw (1,0) -- ++(0,0.08) node[above] {$1$};

  \fill[blue] (\x,0)   circle (2.2pt) node[below=6pt,black] {$x$};
  \fill[blue] (\fx,0)  circle (2.2pt) node[below=6pt,black] {$\psi(x)$};
  \fill[blue] (\fxx,0) circle (2.2pt) node[below=6pt,black] {$\psi^{(2)}(x)$};

  \pgfmathsetmacro{\yTop}{0.5*\stripH + 0.16}
  \draw[->,thick] (\x,\yTop)  .. controls +(.10,.85) and +(-.10,.85) .. (\fx,\yTop);
  \draw[->,thick] (\fx,\yTop) .. controls +(.12,.85) and +(-.12,.85) .. (\fxx,\yTop);

\end{tikzpicture}
\caption{Equilibrium point process $\sigma(X,\psi):=\{\psi^{(k)}(X):k\geq0\}\cap[0,1]$. The red-hatched area is the support of the distribution of the initial point.}\label{figure:point_rep}
\end{figure}

%% file: plots/plot3.tex
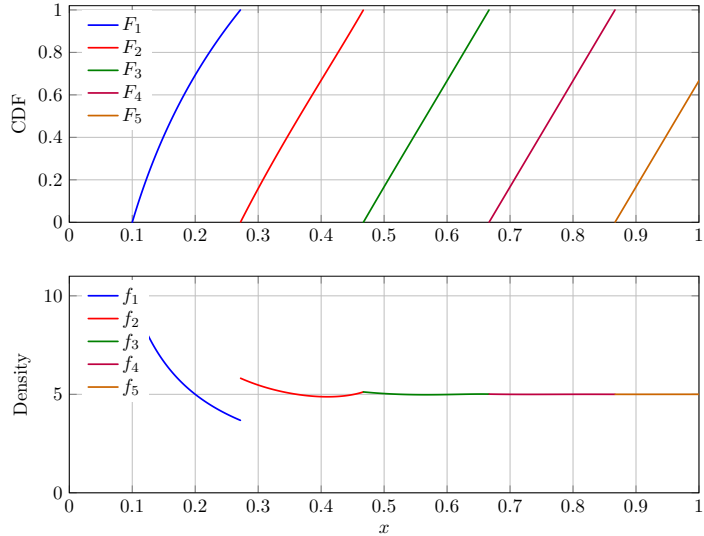
\begin{figure}[t]
\centering
\begin{tikzpicture}[scale=0.65, transform shape]
\begin{groupplot}[
  group style={group size=1 by 2, vertical sep=1.1cm},
  width=0.9\linewidth,
  height=6.0cm,
  xmin=0, xmax=1,
  grid=both,
  unbounded coords=discard,
  legend style={draw=none, fill=white, at={(0.02,0.98)}, anchor=north west},
  every axis plot/.append style={
    line width=1pt,
    line join=round,
    line cap=round,
    mark=none,
    mark size=0.15pt,
  },
]

\nextgroupplot[
  ylabel={CDF},
  ymin=0, ymax=1.02,
]
\addplot[smooth, draw=blue]              table[x=x, y=F1, col sep=space] {n2_c01_layers.dat}; \addlegendentry{$F_{1}$}
\addplot[smooth, draw=red]               table[x=x, y=F2, col sep=space] {n2_c01_layers.dat}; \addlegendentry{$F_{2}$}
\addplot[smooth, draw=green!50!black]    table[x=x, y=F3, col sep=space] {n2_c01_layers.dat}; \addlegendentry{$F_{3}$}
\addplot[smooth, draw=purple]            table[x=x, y=F4, col sep=space] {n2_c01_layers.dat}; \addlegendentry{$F_{4}$}
\addplot[smooth, draw=orange!80!black]   table[x=x, y=F5, col sep=space] {n2_c01_layers.dat}; \addlegendentry{$F_{5}$}

\nextgroupplot[
  xlabel={$x$},
  ylabel={Density},
  ymin=0,
]
\addplot[smooth, draw=blue]              table[x=x, y=f1, col sep=space] {n2_c01_layers.dat}; \addlegendentry{$f_{1}$}
\addplot[smooth, draw=red]               table[x=x, y=f2, col sep=space] {n2_c01_layers.dat}; \addlegendentry{$f_{2}$}
\addplot[smooth, draw=green!50!black]    table[x=x, y=f3, col sep=space] {n2_c01_layers.dat}; \addlegendentry{$f_{3}$}
\addplot[smooth, draw=purple]            table[x=x, y=f4, col sep=space] {n2_c01_layers.dat}; \addlegendentry{$f_{4}$}
\addplot[smooth, draw=orange!80!black]   table[x=x, y=f5, col sep=space] {n2_c01_layers.dat}; \addlegendentry{$f_{5}$}
\end{groupplot}
\end{tikzpicture}
\caption{\label{figure:computation} CDFs $F_k$ and densities of random variables $\psi^{(k)}(X_i)$ for $k=0,\ldots,4$ for $n=2$, $c=0.1$.}
\end{figure}

%% file: plots/plotmultn.tex
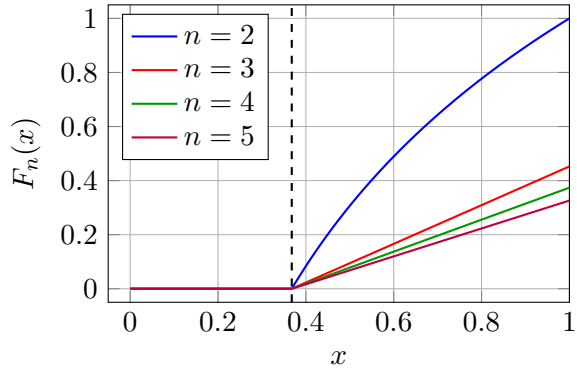
\begin{figure}[!t]
\centering
\begin{tikzpicture}
\begin{axis}[
    width=0.85*9cm,
    height=0.85*6.5cm,
    xmin=-0.05, xmax=1,
    ymin=-0.05, ymax=1.05,
    xlabel={$x$},
    ylabel={$F_n(x)$},
    grid=major,
    grid style={draw=gray!60},
    xtick={0,0.2,0.4,0.6,0.8,1.0},
    ytick={0,0.2,0.4,0.6,0.8,1.0},
    xticklabel style={/pgf/number format/fixed,/pgf/number format/precision=1},
    yticklabel style={/pgf/number format/fixed,/pgf/number format/precision=1},
    legend pos=north west,
    legend style={draw=black, fill=white},
]

\pgfmathsetmacro{\cval}{exp(-1)}

\pgfmathsetmacro{\FoneThree}{0.4523809523809524} 
\pgfmathsetmacro{\FoneFour} {0.3738095238095238} 
\pgfmathsetmacro{\FoneFive} {0.3261904761904762} 

\pgfmathsetmacro{\mThree}{\FoneThree/(1-\cval)}
\pgfmathsetmacro{\mFour} {\FoneFour /(1-\cval)}
\pgfmathsetmacro{\mFive} {\FoneFive /(1-\cval)}

\addplot[black, thick, dashed, forget plot] coordinates {(\cval,-0.05) (\cval,1.05)};

\addplot[blue, thick, solid, forget plot] coordinates {(0,0) (\cval,0)};
\addplot[blue, thick, solid, domain=\cval:1, samples=400] {ln(x/\cval)};
\addlegendentry{$n=2$}

\addplot[red, thick, solid, forget plot] coordinates {(0,0) (\cval,0)};
\addplot[red, thick, solid, domain=\cval:1, samples=2] {\mThree*(x-\cval)};
\addlegendentry{$n=3$}

\addplot[green!60!black, thick, solid, forget plot] coordinates {(0,0) (\cval,0)};
\addplot[green!60!black, thick, solid, domain=\cval:1, samples=2] {\mFour*(x-\cval)};
\addlegendentry{$n=4$}

\addplot[purple, thick, solid, forget plot] coordinates {(0,0) (\cval,0)};
\addplot[purple, thick, solid, domain=\cval:1, samples=2] {\mFive*(x-\cval)};
\addlegendentry{$n=5$}
\end{axis}
\end{tikzpicture}
\caption{Initial point distribution for $n=2,3,4,5$ players in the case $c=1/e$.}\label{fig:initial}
\end{figure}

%% file: plots/plot4.tex

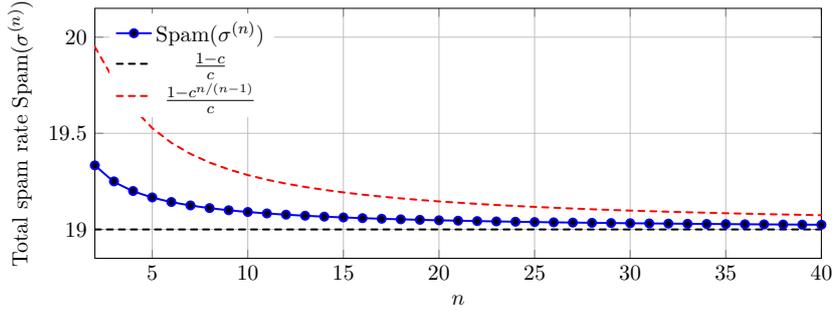
\begin{figure}[t]
\centering
\begin{tikzpicture}[scale=0.75, transform shape]
\begin{axis}[
  width=0.9\linewidth,
  height=6.0cm,
  xlabel={$n$},
  ylabel={Total spam rate $\text{Spam}(\sigma^{(n)})$},
  xmin=2, xmax=40,
  ymin=18.85, ymax=20.15, 
  grid=both,
  legend style={draw=none, fill=white, at={(0.02,0.98)}, anchor=north west},
  every axis plot/.append style={
    line width=1pt,
    line join=round,
    line cap=round,
    mark=none,
  },
]

\pgfmathsetmacro{\cost}{0.05}
\pgfmathsetmacro{\LB}{(1-\cost)/\cost} 

\addplot[smooth, draw=blue, mark=*]
  table[x=n, y=spam, col sep=space] {spam_total_c005_n2_40.dat};
\addlegendentry{$\mathrm{Spam}(\sigma^{(n)})$}

\addplot[draw=black, dashed, domain=2:40, samples=2] {\LB};
\addlegendentry{$\frac{1-c}{c}$}

\addplot[draw=red, dashed, domain=2:40, samples at={2,3,...,40}]
  {(1 - (\cost)^(x/(x-1)))/\cost};
\addlegendentry{$\frac{1-c^{n/(n-1)}}{c}$}

\end{axis}

\end{tikzpicture}
\caption{Total equilibrium spam for $c=0.05$ and $n=2,\dots,40$, with bounds given in Proposition \ref{prop:spam_bound}.}\label{figure:spam}
\end{figure}

%% file: appendix.tex
\section{Non-constructive proof of Existence of Nash equilibrium}

\begin{theorem}\label{theorem:existence} The game has at least one symmetric Nash equilibrium.
\end{theorem}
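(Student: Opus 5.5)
We plan to obtain existence from the general theory of discontinuous games, as the paper indicates: Reny's better-reply security theorem, in its symmetric form, checked through the sufficient conditions of Allison and Lambert. The game is symmetric. Each player has the same strategy space $\mathcal S'$, a compact metric subset of $([0,1]\cup\{+\infty\})^\beta$ (we compactify by identifying $+\infty$ with an isolated point). Payoffs are invariant under permutations of players.

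We first pass to the mixed extension. The space $\Delta(\mathcal S')$ is convex and, with the weak* topology, compact metrizable. Payoffs $u_i(\sigma)=\mathbb E_\sigma[u_i(N)]$ are multilinear, hence quasiconcave in the own strategy, and bounded. The payoff of a pure profile $N$ is measurable. It is continuous except on the set where two players' actions coincide, or where an action is at $+\infty$ versus $1$ (the second case is harmless because $T\le 1$). So it suffices to verify the two conditions of Reny's symmetric theorem:

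\begin{itemize}
\item[(a)] the sum of payoffs is upper semicontinuous along the diagonal;
\item[(b)] the game is diagonally better-reply secure.
\end{itemize}

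For (a), the sum of pure payoffs equals $\Pr_T[\bigcup_j N_j\cap[T,1]\neq\emptyset]-c\sum_j|N_j|$. Ties do not change this sum. The first term is continuous in the positions of the points, since it equals $\max(\bigcup N_j)$ or $0$. The cardinality term is lower semicontinuous because points can leave to $+\infty$ only discontinuously. Hence the sum is upper semicontinuous on $\mathcal S'^n$. Lifting this to mixed strategies under weak convergence gives (a).

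The main obstacle is (b), diagonal better-reply security. We would use the Allison--Lambert sufficient condition: for each $\sigma$ and $\varepsilon>0$, player $i$ must be able to secure, against nearby symmetric opponent profiles, a payoff within $\varepsilon$ of $u_i(\sigma_i',\sigma_{-i})$ for any deviation $\sigma_i'$. We take a deviation and shift each of its points slightly earlier by a small random amount $\delta$ drawn uniformly from $[0,\eta]$. This is the same device as in the proof of Lemma~\ref{lemma:no_mass}.

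Against opponents close to $\sigma$, the shifted deviation ties with positive probability only on a set of measure $O(\eta)$. Losing ties can cost at most the probability mass of the opportunity arriving in tiny windows, so the loss is $O(\eta)$ uniformly. Moving a point earlier by $\delta$ loses at most $\delta$ in winning probability for the opportunity arriving just before it. Winning is monotone in getting ahead of an opponent by a strict margin, so the secured payoff is lower semicontinuous in the opponents' strategy. Uniformity over a weak* neighbourhood of the opponents' profile is where most of the care goes. We would handle it by showing that, for fixed random shift, the map from opponents' pure profiles to the deviator's winning indicator is lower semicontinuous off a null set. Portmanteau then gives the lower bound.

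Once (a) and (b) hold, Reny's theorem (symmetric version) delivers a symmetric mixed Nash equilibrium.
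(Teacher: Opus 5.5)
Your proposal is correct and follows essentially the paper's argument. Both apply Reny's (symmetric, mixed-extension) better-reply-security theorem, verified Allison--Lepore style: the deviation is shifted slightly earlier, its strict-win payoff is lower semicontinuous in the opponents' profile, and it loses at most $|N_i|\delta$ relative to the original payoff against every opponent profile. The paper uses the deterministic shift $N_i-\frac1k$; your random shift is not needed, and neither is your remark that ties cost $O(\eta)$, which is inaccurate as stated. One addition in your favour: you explicitly check upper semicontinuity of the payoff sum on the diagonal, which Reny's sufficient condition for (diagonal) better-reply security requires alongside payoff security and which the paper's proof leaves unchecked.
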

\begin{proof}
First, no player will use more than $\beta = \lfloor 1/c \rfloor$ action times. Thus we can consider the subgame on \(N \subseteq [0,1]\) with \(|N| \le \beta\). The action space of this game is $A \;=\; \bigcup_{m=0}^\beta ([0,1]\cup\{+\infty\})^m$. Hence, the game is symmetric with a compact, Hausdorff action space. Unfortunately, the utility functions have discontinuities when there are ties, and so we cannot invoke Glicksberg's theorem. However, by Corollary 5.3 of \cite{reny1999existence}, it suffices for the existence of a mixed Nash equilibrium to show that the (mixed-strategy) extension game is better-reply secure. By Theorem 1 of \cite{allison2014verifying}, it is sufficient to verify the alternative condition of disjoint payoff matching. Thus, it remains to check that the game satisfies disjoint payoff matching. One verifies that the game satisfies disjoint payoff matching as stated in \cite{allison2014verifying} if for all $N_i$, there exists a sequence of deviations $\{N_i^k\}_{k\geq1}$ such that the following holds:
\begin{enumerate}
    \item $\lim\inf_k u_i(N_i^k,N_{-i})\geq u_i(N_i,N_{-i})$ for all $N_{-i}$.
    \item $\limsup_k D_i(N_i^k)=\emptyset$.
\end{enumerate}
Observe that $D_i(N)=\{N_{-i}: N_{-i}\cap N_i\cap[0,1]\not=\emptyset\}$. To show that for every $N_i$ such sequence exists we may assume w.l.o.g. that $0\not\in N_i$. Now, consider $N_i^k = N_i-\frac{1}{k}:=\{x-\frac{1}{k}: x\in N_i\}\subseteq [0,1]$ for $k\geq 1 / \min(\cup N_j)$. Since $N_{-i}$ are finite, and $D_i(N^i_k)$ consists of the sets $N_{-i}$ that intersect with $N^i_k$, there are only finitely many $k$ such that $N_{-i}\in D_i(N_i^k)$. Therefore, by definition of $\limsup_k$ of sets, $\limsup_k D_i(N_i^k)=\emptyset$. Now, fixing $N_{-i}$, for sufficiently big $k$, $N_i^k$ and $N_{-i}$ have no ties, and so 
\begin{align*}
    \lim_k u_i(N^k_i,N_{-i}) &= \lim_k \Pr[d_+(N_i-\frac{1}{k},T)<d_+(N_{-i},T)\text{ and }d_+(N_i-\frac{1}{k},T)<+\infty] - c\cdot|N_i-\frac{1}{k}| \\
                             &=\Pr[d_+(N_i,T)\leq d_+(N_{-i},T)\text{ and }d_+(N_i,T)<+\infty] - c \cdot|N_i| \geq u_i(N_i,N_{-i}).
\end{align*}
So, the game satisfies disjoint payoff matching, and therefore, there exists a symmetric mixed Nash equilibrium.
\end{proof}
\section{Omitted Proofs}

\subsection{Proposition \ref{prop:equivalence}}

First, let us introduce the transformation. For a pure strategy $N\subseteq[0,1]$ (a finite set), define
\begin{equation*}
  G(N) := \{\, G(t) : t\in N \,\}.  
\end{equation*}
For a mixed strategy $\sigma_i\in\Delta(\mathcal S)$, define $G(\sigma_i)$ as the pushforward measure
under $N\mapsto G(N)$; equivalently, if $N_i\sim\sigma_i$ then $G(N_i)\sim G(\sigma_i)$.
For a profile $\sigma=(\sigma_1,\dots,\sigma_n)$, set
\begin{equation*}
  G(\sigma) := (G(\sigma_1),\dots,G(\sigma_n)).  
\end{equation*}

Now let $T\sim G$. Since $G$ is continuous and strictly increasing on $[0,1]$,  $U:=G(T)\sim \mathcal U[0,1]$ and $G^{-1}$ exists. Costs are preserved since $|G(N_i\cap [0,1])|=|N_i\cap[0,1]|$. Strict monotonicity of $G$ implies that $d_+(x,t)\leq d_+(y,t)$ if and only if $d_+(G(x),G(t))\leq d_+(G(y),G(t))$ and so winners and ties are preserved under the transformation $G$, i.e. $\mathcal{W}_t(N) =\mathcal{W}_{G(t)}(G(N))$. Therefore,
\begin{equation*}
    \int^1_0 \frac{1\{i\in \mathcal{W}_t(N)\}}{|\mathcal{W}_t(N)|}g(t)dt = \int^1_0 \frac{1\{i\in \mathcal{W}_{G(t)}(G(N))\}}{|\mathcal{W}_{G(t)}(G(N))|}g(t)dt = \int^1_0 \frac{1\{i\in \mathcal{W}_{u}(G(N))\}}{|\mathcal{W}_{u}(G(N))|}du
\end{equation*}
where the last equality is obtained by change of variable $u=G(t)$. Therefore, the game $\mathcal G(n,c,G)$ is equivalent, under the transformation $G$, to the game $\mathcal G(n,c,\mathcal U[0,1])$ i.e. $\sigma$ is a Nash equilibrium under $T$ if and only if $G(\sigma)$ is a Nash equilibrium under $\mathcal U[0,1]$. \qed
\subsection{Proposition \ref{prop:no_pure}}
Suppose that $(N_1,\dots,N_n)$ is a pure Nash equilibrium. First notice that $0\not\in N_i$ since $x=0$ does not add profit but increases the cost by $c$.

Notice that if $x\in N_i\cap N_j$ for $i\neq j$, then player $i$ takes the action slightly earlier at time $x-\varepsilon$ for a sufficiently small $\varepsilon>0$ and strictly increases her gross profit without changing cost. Hence $N_i\cap N_j=\emptyset$ for all $i\neq j$.

Now, observe that a player $i$ can improve their payoff by taking actions later unless $N_i\in\{\emptyset,\{1\}\}$ or everyone else's sets are empty. To see it, suppose $\bigcup_{j\neq i}N_j\neq\emptyset$ and $N_i\notin\{\emptyset,\{1\}\}$. Since $\bigcup_k N_k$ is finite and pairwise disjoint in $[0,1]$, there exists $\varepsilon>0$ so that taking the action of $N_i\cap[0,1)$ later by $\varepsilon$ preserves disjointness and costs. This strictly enlarges $i$’s winning region, so $N_i$ is not a best reply.

Therefore, in any pure Nash equilibrium, for each $i$ either $\bigcup_{j\neq i}N_j=\emptyset$ or $N_i\in\{\emptyset,\{1\}\}$. Hence the only candidate pure Nash profiles have one player $i^\ast$ with $N_{i^\ast}=\{1\}$ and all others $\emptyset$. In particular, the payoff of a player $j\not=i^\ast$ is zero. However, sending a transaction at time $\frac{c+1}{2}$ leads to payoff $\frac{1-c}{2}>0$, leading to a contradiction. \qed

\subsection{Lemma \ref{lemma:support}}\label{section:lemma:support}
We will prove 1) then 3) and finally 2).

1) Clearly $c\leq\underline{c}_1(\sigma_i)$ because the winning probability of a point placed at $x<c$ is at most $\Pr[T\leq x]<c$, and so does not cover the cost. Now, suppose $c<\underline{c}_1(\sigma_i)$. First, we may assume that $M_i\in \text{supp}(\sigma_i)$ is non-empty almost surely, otherwise, the expected payoff of every player would be zero in equilibrium, and considering the $i$th player deviation $M_i' =\{\frac{\underline{c}_1(\sigma_i)+c}{2}\}$, leads to payoff $\frac{\underline{c}_1(\sigma_i)-c}{2}>0$, leading to a contradiction. Now, consider $M_i\in \text{supp}(\sigma_i)$ such that $x_1(M_i) = \overline{c}_1(\sigma_i)$. Consider player $i$'s unilateral deviation $M_i' = M_i\cup \{\underline{c}_1(\sigma_i)\}$. By symmetry, $\Pr_{\sigma_{-i}}[ N_{-i}\cap [\underline{c}_1(\sigma_i),\overline{c}_1(\sigma_i)]\not=\emptyset]=1$ and so
\begin{equation}\label{eq:sets1}
    \{d_+(M_i',T)<d_+(N_{-i},T)\} =\{d_+(\underline{c}_1(\sigma_i),T)<d_+(N_{-i},T)\}\cup\{d_+(M_i,T)<d_+(N_{-i},T)\}.
\end{equation}
with the two events in the union being disjoint almost surely.
Therefore,
\begin{align*}
    u_i(M_i',\sigma_{-i}) &= W_i(\{\underline{c}_1(\sigma_i)\},\sigma_{-i})  + W_i(M_i'\setminus\{\underline{c}_1(\sigma_i)\},\sigma_{-i})- c\cdot(|M_i|+1)\\
    &=u_i(M_i,\sigma_{-i})+ \Pr_{N_{-i}\sim\sigma_{-i},T}[d_+(\underline{c}_1(\sigma_i),T)<d_+(N_{-i},T)] -c \\
    &=u_i(\sigma_i,\sigma_{-i}) + (\underline{c}_1(\sigma_i) -c).
\end{align*}
Therefore, $M_i'$ is a profitable deviation, leading to a contradiction.

3) Suppose there exists a non-degenerate interval $I=[a,b]\subseteq [c,1]$ such that $\Pr_{N_i\sim \sigma_i}[I\cap N_i]=0$. W.l.o.g. we may assume that $I$ is maximal. Therefore, by maximality of $I$, we can consider $M_i\in\text{supp}(\sigma_i)$ such that $a\in M_i$. Now consider the unilateral deviation $M_i' = M_i\setminus\{a\}\cup \{b\}$. This is a profitable deviation since by assumption, under $N_{-i}$ almost surely nobody acts in the interval $[a,b]$ and so increases its payoff by $b-a>0$.

2) Towards contradiction suppose $y\leq x+c$. Then the marginal gain of acting at time $y$ is 
\begin{equation*}
\Pr_{N_{-i}\sim \sigma_{-i},T}[N_{-i}\cap [T,y]=\emptyset\text{ and }T\in [x,y]]=\int_x^y \Pr_{N_{-i}\sim \sigma_{-i}}[N_{-i}\cap [t,y]=\emptyset] dt<y-x\leq c
\end{equation*}
since $\Pr_{N_{-i}\sim \sigma_{-i}}[N_{-i}\cap [t,y]=\emptyset]<1$ for all $t<y$ by 3). Therefore, removing $y$ from $M$ increases the payoff contradicting $M$ being a best reply. 
\subsection{Lemma~\ref{lemma:dc}}
\begin{proof} Closedness of $K$ follows immediately from the closedness of best-reply strategies and the continuity of the $\min(\cdot)$ function over finite subsets of $[0,1]$.
First, by Lemma \ref{lemma:support}, $c\in K$. Now, to show that $K$ is a closed interval is sufficient to show that if $x\in K$ and $y\in [c,x)$, then $y\in K$. Let $x\in K$ and let $M$ be a best reply with $x=\min(M)$. Write
$M=\{x,x_2,\dots,x_k\}$ with $x<x_2<\cdots<x_k$ and set $x_0:=0$ and $x_1:=x$.
Let $y\in[c,x)$ and let $M'$ be a best reply such that $y\in M'$. If $y=\min(M')$ then $y\in K$ by definition, so assume $y>\min(M')$.
Let $z:=\max\{M'\cap[0,y)\}$ (so $z<y$ and $M'$ contains no point in the interval $(z,y)$), and consider
$M'':=\{z\}\cup M=\{z,x,x_2,\dots,x_k\}$.

The payoff of a player with strategy $M''$ is
\begin{equation*}
\int_{0}^{z} V_i(t,z)\,dt \;+\; \int_{z}^{x} V_i(t,x)\,dt \;+\; \sum_{\ell=2}^{k}\int_{x_{\ell-1}}^{x_\ell}V_i(t,x_\ell)\,dt \;-\;c\cdot(k+1).
\end{equation*}
Since the payoff of playing $M$ is
\begin{equation*}
\int_{0}^{x} V_i(t,x)\,dt \;+\; \sum_{\ell=2}^{k}\int_{x_{\ell-1}}^{x_\ell}V_i(t,x_\ell)\,dt \;-\;c\cdot k,
\end{equation*}
subtracting gives
\begin{equation*}
u_i(M'',\sigma_{-i})-u_i(M,\sigma_{-i})
=
\int_{0}^{z}\bigl(V_i(t,z)-V_i(t,x)\bigr)\,dt \;-\; c.
\end{equation*}

We now lower bound the integral using that $M'$ is a best reply. Consider the deviation that removes $z$ from $M'$. Because $z<y$ and  $M'$ contains no point in the interval $(z,y)$, removing $z$ changes the player’s next point of action from $z$ to $y$ on the interval $(\max\{M'\cap[0,z)\},\,z]$, and does not affect any other interval. Hence
\begin{equation*}
u_i(M',\sigma_{-i})-u_i(M'\setminus\{z\},\sigma_{-i})
=
\int_{\max\{M'\cap[0,z)\}}^{z}\bigl(V_i(t,z)-V_i(t,y)\bigr)\,dt \;-\;c.
\end{equation*}
Since $M'$ is a best reply, deleting $z$ cannot increase payoff, so the left-hand side is $\ge 0$ and therefore
\begin{equation*}
\int_{\max\{M'\cap[0,z)\}}^{z}\bigl(V_i(t,z)-V_i(t,y)\bigr)\,dt \;\ge\; c.
\end{equation*}
The integrand is nonnegative (because $z<y$ and $V_i(\cdot,\cdot)$ is nonincreasing in its second argument), so enlarging the domain yields
\begin{equation*}
\int_{0}^{z}\bigl(V_i(t,z)-V_i(t,y)\bigr)\,dt \;\ge\; c.
\end{equation*}
Finally, since $y\le x$ and $V_i(t,\cdot)$ is nonincreasing, we have $V_i(t,x)\le V_i(t,y)$ for all $t\le z$, hence
\begin{equation*}
\int_{0}^{z}\bigl(V_i(t,z)-V_i(t,x)\bigr)\,dt
\;\ge\;
\int_{0}^{z}\bigl(V_i(t,z)-V_i(t,y)\bigr)\,dt
\;\ge\; c.
\end{equation*}
Plugging this into $u_i(M'',\sigma_{-i})-u_i(M,\sigma_{-i})$ shows $u_i(M'',\sigma_{-i})\ge u_i(M,\sigma_{-i})$.
Since $M$ is a best reply, we must have $u_i(M'',\sigma_{-i})=u_i(M,\sigma_{-i})$, and therefore all inequalities above are equalities. In particular,
\begin{equation*}
\int_{\max\{M'\cap[0,z)\}}^{z}\bigl(V_i(t,z)-V_i(t,y)\bigr)\,dt \;=\; c.
\end{equation*}
so $u_i(M'\setminus\{z\},\sigma_{-i})=u_i(M',\sigma_{-i})$. Since $M'$ is finite, by considering $M'\leftarrow M'\setminus\{z\}$ and repeating the argument a sufficient number of times, we deduce that there exists $M^\star$ such that $M^\star$ is a best reply and $y=\min(M^\star)$.
\end{proof}

\subsection{Lemma \ref{prop:a.c.}}\label{section:prop:a.c.}
First observe that, by Lemma \ref{lemma:no_mass}, the intensity measure $\Lambda_i$ is atomless, hence $V_i(t,t)=1$ for every $t\in[0,1]$. By definition, $V_i(t,z)$ is nonincreasing in $z$. Moreover, again by Lemma \ref{lemma:support}, for each fixed $t$ the map $z\mapsto V_i(t,z)$ is strictly decreasing on $\{z\in[t,1]: V_i(t,z)>0\}$.
Thus, it remains to show that $z\mapsto V_i(t,z)$ is $\mu$-a.e. absolutely continuous and that $\Lambda_i\ll \mu$.

First let us show the following Lemma.
\begin{lemma}\label{lemma:singular} Let $\nu$ be a non-zero finite measure on $[t,1]$ that is singular with respect to the Lebesgue measure $\mu$, then
\begin{equation*}
    \lim_{\varepsilon\rightarrow 0+}\frac{\nu((z-\varepsilon,z])}{\varepsilon}=+\infty\quad\text{for }\nu\text{-a.e.\ }z\in [t,1].
\end{equation*}
\end{lemma}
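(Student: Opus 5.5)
The plan is to invoke the Lebesgue--Radon--Nikodym differentiation theory for the pair $(\nu,\mu)$. Since $\nu\perp\mu$, fix a Borel set $S\subseteq[t,1]$ with $\mu(S)=0$ and $\nu([t,1]\setminus S)=0$. It then suffices to show that the set
\begin{equation*}
E_R:=\Bigl\{z\in[t,1]:\liminf_{\varepsilon\to0+}\tfrac{\nu((z-\varepsilon,z])}{\varepsilon}<R\Bigr\}
\end{equation*}
satisfies $\nu(E_R)=0$ for every rational $R>0$. Once this holds, the union over $R\in\mathbb{Q}_+$ is $\nu$-null, and off this union the $\liminf$, and hence the limit, equals $+\infty$.

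To bound $\nu(E_R\cap S)$, I use a one-sided Vitali-type covering argument. Fix $\delta>0$ and pick an open set $U\supseteq S$ with $\mu(U)<\delta$. For every $z\in E_R\cap S$ there are arbitrarily small $\varepsilon$ with $(z-\varepsilon,z]\subseteq U$ and $\nu((z-\varepsilon,z])<R\varepsilon$.

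These left half-open intervals form a fine cover of $E_R\cap S$. I apply the Vitali covering lemma for a Radon measure on $\mathbb{R}$, which is available in the Besicovitch form; in one dimension it holds for arbitrary intervals containing the point, including one-sided ones. This gives a countable disjoint subfamily $\{(z_k-\varepsilon_k,z_k]\}$ covering $\nu$-almost all of $E_R\cap S$. Consequently
\begin{equation*}
\nu(E_R\cap S)\le\sum_k\nu((z_k-\varepsilon_k,z_k])<R\sum_k\varepsilon_k\le R\,\mu(U)<R\delta .
\end{equation*}
Letting $\delta\to0$ gives $\nu(E_R\cap S)=0$. Because $\nu$ is concentrated on $S$, it follows that $\nu(E_R)=0$.

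The main obstacle is justifying the covering step for one-sided intervals with respect to the non-doubling measure $\nu$. The standard Vitali lemma relies on enlarging balls, which fails for singular $\nu$. I would therefore use the one-dimensional Besicovitch covering theorem, which covers any family of intervals with the point as an endpoint and has bounded overlap (at most $2$ subfamilies in dimension one); it needs no doubling. If needed, an alternative is to take $F(z)=\nu([t,z])$ and use the classical fact that the derivative of the singular part of a monotone function is $+\infty$ almost everywhere with respect to that singular part, which is proven by the same Besicovitch/rising-sun argument applied to left difference quotients.
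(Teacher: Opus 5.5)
Your reduction to $\nu(E_R)=0$ for rational $R$, the choice of $S$ and $U$, and the final estimate are all fine. The gap is in the covering step, which you yourself flag as the crux: you justify it with a statement that is false. There is no Besicovitch-type theorem with bounded overlap for intervals having the point as an endpoint. Take $A=(0,1)$ and $I_z=(z-\tfrac12,z]$. A point $x$ lies in $I_z$ iff $x\le z<x+\tfrac12$, so any subfamily covering $A$ must contain $I_{z_k}$ with $z_k\uparrow 1$. All but finitely many of these contain $0.6$, so the overlap is unbounded. Besicovitch's proof genuinely uses that each ball is centred at its point. The disjoint Vitali-type selection you want can in fact be obtained in $\mathbb{R}$ for a fine one-sided cover, but it needs its own proof. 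Your fallback, the classical fact that the singular part of a monotone function has derivative $+\infty$ almost everywhere with respect to that part, is just the lemma being proved.

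The repair is short and needs no disjointness. Your inequality is strict, and $\nu((z-\varepsilon_z,z+\eta))\downarrow\nu((z-\varepsilon_z,z])$ as $\eta\downarrow 0$. So each $z\in E_R\cap S$ has an open interval $O_z=(z-\varepsilon_z,z+\eta_z)\subseteq U$ with $\nu(O_z)<R\varepsilon_z\le R\,\mu(O_z)$. For a compact $K\subseteq E_R\cap S$, take a finite subcover. Thin it, by the elementary one-dimensional lemma, to a subfamily with the same union in which every point lies in at most two intervals. Then $\nu(K)<2R\,\mu(U)<2R\delta$. The set $E_R$ is Borel: $\varepsilon\mapsto\nu((z-\varepsilon,z])$ is left-continuous, so the $\liminf$ may be taken over rational $\varepsilon$. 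Inner regularity then gives $\nu(E_R\cap S)=0$.

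With this fix your argument is a correct direct covering proof, and it differs from the paper's. The paper sets $\rho=\mu+\nu$, notes $d\nu/d\rho=1_A$, and applies Lebesgue differentiation for $\rho$ along $(z-\varepsilon,z]$. This gives $\nu((z-\varepsilon,z])/(\varepsilon+\nu((z-\varepsilon,z]))\to 1$ for $\nu$-a.e. $z$. That route is shorter, but differentiation along one-sided intervals for the non-doubling measure $\rho$ rests on the same one-dimensional covering fact, which the paper takes as known. Your repaired version makes that fact explicit.
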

\begin{proof} If $\nu$ is singular with respect to the Lebesgue measure in $[0,1]$, it means that there exists a measurable set $A$ such that $\nu(A) =\nu([t,1])$ and $\mu(A)=0$. Now, we define $\rho = \mu +\nu$. By Radon–Nikodym there exists an almost everywhere unique $f\in L^1(\rho)$ such that $\nu(B) =\int_B f d\rho$ for every measurable set $B$. By the choice of $A$, we have $f=1_A$ $\rho-$a.e.. Indeed, since $\mu(A) =0$ and $\nu(A)=\rho(A)$, we have 
\begin{equation*}
    \nu(B) = \nu (B\cap A) = \rho(B\cap A) = \int_B 1_A d\rho.
\end{equation*}
By Lebesgue's differentiation theorem, we have
\begin{equation*}
    f(z) = \lim_{\varepsilon \rightarrow 0^+}\frac{1}{\rho((z-\varepsilon,z])}\int_{z-\varepsilon}^z fd\rho.
\end{equation*}
But $\int_{z-\varepsilon}^z fd\rho =\nu((z-\varepsilon,z])$ and $\rho((z-\varepsilon,z])= \mu((z-\varepsilon,z])+\nu((z-\varepsilon,z])=\varepsilon+\nu((z-\varepsilon,z])$. Therefore,
\begin{equation*}
    f(z) = \lim_{\varepsilon \rightarrow 0^+}\frac{\nu((z-\varepsilon,z])}{\varepsilon+\nu((z-\varepsilon,z])}
\end{equation*}
Since $f(z)=1$ $\nu$-a.e., we have
\begin{equation*}
     \lim_{\varepsilon \rightarrow 0^+}\frac{\nu((z-\varepsilon,z])}{\varepsilon+\nu((z-\varepsilon,z])}=1\text{ for }\nu\text{-a.e. }z\in [t,1]
\end{equation*}
This forces
\begin{equation*}
    \lim_{\varepsilon \rightarrow 0^+}\frac{\nu((z-\varepsilon,z])}{\varepsilon}=+\infty\text{ for }\nu\text{-a.e. }z\in [t,1].
\end{equation*}

\end{proof}

\begin{proof}[Proof of Lemma \ref{prop:a.c.}] Let $M=\{x_1,\dots,x_k\}\in \text{supp}(\sigma_i)$ and $M^j_\varepsilon = M\cup \{x_j-\varepsilon\}\setminus \{x_j\}$. Since $M$ is a best reply and $|M| = |M^j_\varepsilon|$, we have $W_i(M,\sigma_{-i})\geq W_i(M^j_\varepsilon,\sigma_{-i})$ for every $\varepsilon>0$ and $j=1,\dots,k$. 
Assume $\varepsilon>0$ is sufficiently small and $1\le j\le k-1$. Then, by \eqref{eq:H}
\begin{equation*}
\begin{aligned}
W_i(M,\sigma_{-i})-W_i(M^{j}_\varepsilon,\sigma_{-i})
&=\int_{x_{j-1}}^{x_j-\varepsilon}\Bigl(V_i(t,x_j)-V_i(t,x_j-\varepsilon)\Bigr)\,dt \quad+\int_{x_j-\varepsilon}^{x_j}\Bigl(V_i(t,x_j)-V_i(t,x_{j+1})\Bigr)\,dt .
\end{aligned}
\end{equation*}
If $j=k$, then
\begin{equation*}
W_i(M,\sigma_{-i})-W_i(M^{k}_\varepsilon,\sigma_{-i})
=\int_{x_{k-1}}^{x_k-\varepsilon}\Bigl(V_i(t,x_k)-V_i(t,x_k-\varepsilon)\Bigr)\,dt
+\int_{x_k-\varepsilon}^{x_k}V_i(t,x_k)\,dt.
\end{equation*}
Observe that $\int_{x_j-\varepsilon}^{x_j}\Bigl(V_i(t,x_j)-V_i(t,x_{j+1})\Bigr)\,dt \leq \varepsilon$ and $\int_{x_k-\varepsilon}^{x_k}V_i(t,x_k)\,dt\leq \varepsilon,$ therefore, for $j=1,\dots,k$,
\begin{equation*}
    \int_{x_{j-1}}^{x_j-\varepsilon}\Bigl(V_i(t,x_j-\varepsilon)-V_i(t,x_j)\Bigr)\,dt\leq \varepsilon
\end{equation*}
Let $f_\varepsilon(t) = \frac{V_i(t,x-\varepsilon)-V_i(t,x)}{\varepsilon}$ for $x:=x_j$. Observe that $f_\varepsilon$ is strictly increasing\footnote{Observe that $f_\varepsilon(t) = \Pr[A_t\cap B]$ with $A_t=\{N_{-i}\cap [t,x-\varepsilon]=\emptyset\}$ and $B=\{N_{-i}\cap[x-\varepsilon,x]\not=\emptyset\}$. Clearly $A_{t_1}\subseteq A_{t_2}$ for $t_1\leq t_2$ and so $\Pr[A_{t_1}\cap B]\leq \Pr[A_{t_2}\cap B]$. Moreover, the set $A_{t_2}\setminus A_{t_1}=\{N_{-i}\cap [t_2,x-\varepsilon]=\emptyset\text{ and }N_{-i}\cap [t_1,t_2)\not=\emptyset\}$ has non-zero measure for $t_1<t_2$ and $t_1,t_2\in (\max\{c,x-c\},x]$ by Lemma \ref{lemma:support}, and so $f_\varepsilon(t)$ is strictly increasing.} on $t\in [x_{j-1},x_j-\varepsilon]$. Therefore, for $0<\varepsilon< x_j-t$,
\begin{equation*}
f_\varepsilon(t) (x_j-t-\varepsilon) \leq \int_t^{x_j-\varepsilon}f_\varepsilon(s)ds\leq \int_{x_{j-1}}^{x_j-\varepsilon}f_\varepsilon(s)ds\leq1.
\end{equation*}
Therefore, for all $t\in [c,1)$, we have
\begin{equation*}
    \overline{-\partial_2 V_i}(t,x_j) := \text{lim sup}_{\varepsilon\rightarrow 0^+}\frac{V_i(t,x-\varepsilon)-V_i(t,x)}{\varepsilon}< +\infty\text{ for all }x_j>t.
\end{equation*}
By Lemma \ref{lemma:support}, for every  $t\in [c,1)$, there exists $M\in\text{supp}(\sigma_i)$ such that $x\in M$ and $t<x$, therefore 
\begin{equation*}
    \overline{-\partial_2 V_i}(t,z)< +\infty\text{ for all }z>t.
\end{equation*}
Then, $\Pr_{\sigma_{-i},T}[\overline{-\partial_2 V_i}(T,T+d_+(N_{-i},T))=+\infty\mid d_+(N_{-i},T)<+\infty]=0$.

Now, define the Lebesgue--Stieltjes measure (see \cite{kallenberg1997foundations} Chapter 2) $\nu_t$ of the monotone map $z\mapsto V_i(t,z)$ on $[t,1]$ by 
\begin{equation*}
    \nu_t([a,b]) := V_i(t,a)-V_i(t,b),\qquad t\le a<b\le 1.
\end{equation*}
Observe that $\nu_t(B) = \Pr[t+d_+(N_{-i},t)\in B]$ for $B\subseteq [t,1]$. Moreover, $\overline{-\partial_2 V_i}(t,z) = \text{lim sup}_{\varepsilon\rightarrow 0^+}\frac{\nu_t((z-\varepsilon,z])}{\varepsilon}$. Let $B_t=\{z\in (t,1]:\overline{-\partial_2 V_i}(t,z)=+\infty\}$. Then,
\begin{equation*}
\hspace{-0.3cm}
    \int_0^1 \nu_t(B_t)dt = \int_0^1 \Pr[t+d_+(N_{-i},t)\in B_t]dt = \Pr\left[\overline{-\partial_2 V_i}(T,T+d_+(N_{-i},T))=+\infty\right]=0.
\end{equation*}
Since $B_t\subseteq[t,1]$ by previous equation, $\nu_t(B_t)=0$ for $\mu$-a.e. $t\in [c,1]$.
Write the Lebesgue decomposition $\nu_t = \nu^{ac}_t + \nu^s_t$ with $\nu_t^{ac}\ll\mu$ and $\nu_t^s\perp\mu$ (see \cite{kallenberg1997foundations} Chapter 2). If $\nu^s_t\not=0$, then
\begin{equation*}
    \limsup_{\varepsilon\rightarrow 0+}\frac{\nu_t^s((z-\varepsilon,z])}{\varepsilon}=+\infty\quad\text{for }\nu_t^s\text{-a.e.\ }z\in [t,1] \text{ (see Lemma  \ref{lemma:singular})}
\end{equation*}
In other words, $\nu_t^s(B_t)=\nu_t^s([t,1])$. However, $\nu_t(B_t)\geq \nu_t^s(B_t)>0$. Therefore, since $\nu_t(B_t)=0$ $\mu$-a.e., we have that $\nu^s_t=0$ $\mu$-a.e.. So, $\nu_t$ is absolutely continuous $\mu$-a.e.. Since $\nu_t([t,z])=1- V_i(t,z)$, we deduce that $V_i(t,\cdot)$ is $\mu-$a.e. absolutely continuous. 

Now, let us show that $\Lambda_i\ll \mu$. First, let $C$ be a measurable set with $C\subseteq [a,a+c)$ for some $a\in[0,1]$. Set $p:=\mathbb E[|N_i\cap C|]$. By Lemma \ref{lemma:support}, $N_i\cap C$ has at most one point and so $p=\Pr[N_i\cap C\not=\emptyset]$. By symmetry, we deduce that $\Pr[\forall j\not=i:N_{j}\cap C \not=\emptyset]=p^{n-1}$. On the event $\{\forall j\not=i:N_{j}\cap C \not=\emptyset\}$, the union $N_{-i}$ has its first point after $a$ in $C$, so
\begin{equation*}
    \nu_a(C) = \Pr[a+d_+(N_{-i},a)\in C]\geq p^{n-1}.
\end{equation*}
Therefore, if $\nu_a(C)=0$, then $\mathbb E[|N_i\cap C|]=0$. Now, since $\nu_t$ is $\mu-$a.e. absolutely continuous, there exists $a\in [0,c]$ such that $\nu_{a+kc}$ is  absolutely continuous  for $k=0,\dots, \lfloor\frac{1}{c}\rfloor$. And so, for a measurable set $B\subseteq [0,1]$ with $\mu(B)=0$ we have that $B=\bigcup_{k=0}^{\lfloor \frac{1}{c}\rfloor}B\cap [a+kc,a+(k+1)c)$, then, since $ B\cap [a+kc,a+(k+1)c)$ has measure zero, and $\nu_{a+kc}$ is absolutely continuous, $\mathbb E[|N_i\cap B\cap [a+kc,a+(k+1)c)|] =0$ by previous argument, and so, summing over $k=1,\dots, \lfloor\frac{1}{c}\rfloor$ we deduce $\mathbb E[|N_i\cap B|]=0$.
\end{proof}
\subsection{Lemma Absolute continuity of $W_i^{\sigma}$}\label{section:abs_cont_H}
For $0\le z\le x\le 1$ define $\Phi(z,x):=\int_z^x V_i(t,x)\,dt$.
To show that $W_i^{\sigma}$ is absolutely continuous, it is sufficient to show the following result: 
\begin{lemma}\label{lemma:abs_cont_H}
$\Phi$ is absolutely continuous in each variable on $\{(z,x):0\le z\le x\le 1\}$, and for $\mu-$a.e. $0<z<x<1$,
\begin{equation*}
\partial_1 \Phi(z,x)=-V_i(z,x),
\qquad
\partial_2 \Phi(z,x)=1+\int_z^x \partial_2 V_i(t,x)\,dt .
\end{equation*}
\end{lemma}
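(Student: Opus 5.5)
We treat the two variables of $\Phi(z,x)=\int_z^x V_i(t,x)\,dt$ separately. Throughout we use two facts. First, $0\le V_i\le 1$. Second, $V_i(t,x)$ is nondecreasing in $t$ for fixed $x$ (a shorter interval is more likely to be void), and nonincreasing in $x$ for fixed $t$.

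\emph{The first variable.} For fixed $x$, the map $z\mapsto\Phi(z,x)$ is the integral of the bounded measurable function $t\mapsto V_i(t,x)$. It is therefore Lipschitz with constant $1$, hence absolutely continuous. By the Lebesgue differentiation theorem, $\partial_1\Phi(z,x)=-V_i(z,x)$ at every Lebesgue point. Since $V_i(\cdot,x)$ is monotone, every point is a Lebesgue point except its countably many discontinuities, so the formula holds $\mu$-a.e.

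\emph{The second variable: Lipschitz bound.} For $x<x'$ we write
\begin{equation*}
\Phi(z,x')-\Phi(z,x)=\int_x^{x'}V_i(t,x')\,dt-\int_z^{x}\bigl(V_i(t,x)-V_i(t,x')\bigr)\,dt .
\end{equation*}
The first term lies in $[0,x'-x]$. The second term is nonnegative, so it only remains to bound it above by a multiple of $x'-x$; this step is the main obstacle. Using the Lebesgue--Stieltjes measures $\nu_t$ from the proof of Lemma \ref{prop:a.c.}, the second integral equals $\int_z^x\nu_t((x,x'])\,dt$. By Fubini this is the expected Lebesgue measure of the set of $t\in[z,x]$ for which the first opponent point after $t$ lands in $(x,x']$. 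That event forces $N_{-i}\cap[t,x]=\emptyset$ and $N_{-i}\cap(x,x']\neq\emptyset$. I would bound it by
\begin{equation*}
\E\bigl[(x-\ell)\,1\{N_{-i}\cap(x,x']\ne\emptyset\}\bigr]\le \Pr\bigl[N_{-i}\cap(x,x']\ne\emptyset\bigr]\le (n-1)\Lambda_i((x,x']),
\end{equation*}
where $\ell$ is the last opponent point before $x$.

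This is not Lipschitz in general. It does, however, give absolute continuity, because $\Lambda_i\ll\mu$ by Lemma \ref{prop:a.c.}, so $\Lambda_i$ has an $L^1$ density $\lambda$. Then for any finite disjoint family of intervals $(x_k,x_k']$,
\begin{equation*}
\sum_k\bigl|\Phi(z,x_k')-\Phi(z,x_k)\bigr|\le\sum_k (x_k'-x_k)+(n-1)\sum_k\int_{x_k}^{x_k'}\lambda\,d\mu .
\end{equation*}
The right-hand side is small whenever $\sum_k(x_k'-x_k)$ is small, by absolute continuity of the integral. Hence $x\mapsto\Phi(z,x)$ is absolutely continuous on $[z,1]$.

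\emph{The second variable: derivative formula.} Since $\Phi(z,\cdot)$ is absolutely continuous, it is differentiable a.e., and it remains to identify the derivative. I would divide the decomposition above by $h=x'-x$ and let $h\downarrow0$. The first term tends to $V_i(x,x)=1$, using Lemma \ref{lemma:no_mass}. The second term is $\int_z^x h^{-1}\bigl(V_i(t,x)-V_i(t,x+h)\bigr)\,dt$. By Lemma \ref{prop:a.c.} its integrand converges to $-\partial_2V_i(t,x)$ for $\mu$-a.e. $(t,x)$.

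To pass to the limit I would use Fubini together with Fatou or dominated convergence. The natural dominating function comes from the integrated form: for a.e. $t$, $V_i(t,\cdot)$ is absolutely continuous, so
\begin{equation*}
V_i(t,x)-V_i(t,x+h)=\int_x^{x+h}\bigl(-\partial_2V_i(t,s)\bigr)\,ds .
\end{equation*}
Substituting this and applying Fubini gives
\begin{equation*}
\int_z^x\bigl(V_i(t,x)-V_i(t,x')\bigr)\,dt=\int_x^{x'}\int_z^{x}\bigl(-\partial_2V_i(t,s)\bigr)\,dt\,ds .
\end{equation*}
Two details need care here. One has to handle the moving lower limit $z\le t\le x\le s$, so that the inner integrand is evaluated on the correct domain. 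One also has to check joint measurability of $\partial_2V_i$.

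Granting these, $\Phi(z,x)-\Phi(z,z)$ equals the integral over $s\in[z,x]$ of $1+\int_z^s\partial_2V_i(t,s)\,dt$. Lebesgue differentiation in $s$ then yields $\partial_2\Phi(z,x)=1+\int_z^x\partial_2V_i(t,x)\,dt$ for a.e. $x$, and applying Fubini once more makes this hold for $\mu$-a.e. pair $(z,x)$.
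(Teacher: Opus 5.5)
Your proposal is correct, and its decisive step is the paper's own argument. The first variable is handled by the same Lipschitz bound. For the second variable, the paper likewise writes $V_i(t,x)=1-\int_t^x h(t,s)\,ds$ with $h(t,\cdot)=-\partial_2V_i(t,\cdot)$ for $\mu$-a.e.\ $t$, and applies Fubini on the triangle $z\le t\le s\le x$ to get $\Phi(z,x)=(x-z)-\int_z^x\int_z^s h(t,s)\,dt\,ds$.

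Your separate absolute-continuity argument via $\Lambda_i\ll\mu$ is valid but unnecessary, and so are the abandoned Lipschitz and dominated-convergence attempts. That representation already gives absolute continuity, since by Tonelli the nonnegative inner integral has total mass at most $1-z$. The details you ``grant'' are routine: $V_i$ is jointly continuous because the intensity is atomless, so $\partial_2V_i$ is Borel.
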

\begin{proof}
(1) For fixed $x$, $0\le V_i(\cdot,x)\le 1$, hence $|\Phi(z_1,x)-\Phi(z_2,x)|\le |z_1-z_2|$; thus
$\Phi(\cdot,x)$ is Lipschitz (hence absolutely continuous) and $\partial_z\Phi(z,x)=-V_i(z,x)$ a.e.

(2) In the proof of Lemma~\ref{prop:a.c.}, for $\mu-$a.e. $t$ the Lebesgue--Stieltjes measure
\begin{equation*}
\nu_t([a,b]) := V_i(t,a)-V_i(t,b), \qquad t\le a<b\le 1,
\end{equation*}
satisfies $\nu_t\ll\mu$. By Radon-Nikodym let $h(t,\cdot)=d\nu_t/d\mu$. Then, for $x\ge t$,
\begin{equation*}
V_i(t,x)=1-\nu_t([t,x])=1-\int_t^x h(t,s)\,ds,
\end{equation*}
and $\int_t^1 h(t,s)\,ds=\nu_t([t,1])\le 1$. For $z\le x$,
\begin{equation*}
\begin{aligned}
\Phi(z,x)
&=\int_z^x\Bigl(1-\int_t^x h(t,s)\,ds\Bigr)\,dt \\
&=(x-z)-\int_z^x\int_z^s h(t,s)\,dt\,ds
=(x-z)-\int_z^x w_z(s)\,ds,
\end{aligned}
\end{equation*}
where $w_z(s):=\int_z^s h(t,s)\,dt\ge 0$. Moreover,
\begin{equation*}
\int_z^1 w_z(s)\,ds
=\int_z^1\int_t^1 h(t,s)\,ds\,dt
=\int_z^1 \nu_t([t,1])\,dt
\le 1-z,
\end{equation*}
so $w_z\in L^1([z,1])$. Hence $x\mapsto\Phi(z,x)$ is absolutely continuous and for $\mu-$a.e. $x$,
\begin{equation*}
\partial_x\Phi(z,x)=1-w_z(x)=1-\int_z^x h(t,x)\,dt
=1+\int_z^x \partial_2V_i(t,x)\,dt,
\end{equation*}
using $h(t,x)=-\partial_2V_i(t,x)$ $\mu-$a.e.
\end{proof}
\subsection{Lemma~\ref{prop:no-overshoot-degenerate}}

\begin{proof} Let $y_1\in A$ and $y_2:=\overline{Z}(y_1)\leq1$. First, we will show that
\begin{equation}\label{eq:int0}
\int_c^{1}\mathbf 1\{x<y_1\}\Pr[\Succ_{\sigma_i}(x)>y_2]\Lambda_i(dx)=0.
\end{equation}
By the Campbell-Mecke identity and the definition of $\Succ_{\sigma_i}(x)$, the left-hand side of
\eqref{eq:int0} equals
\begin{equation}\label{eq:campbell}
\mathbb E\left[\sum_{x\in N_i}\mathbf 1\{x<y_1\}\mathbf 1\{\Succ(x,N_i)>y_2\}\right],
\end{equation}
where $N_i\sim\sigma_i$. For a fixed realization $N_i$, if $x<y_1$ and $\Succ(x,N_i)>y_2$, then $N_i\cap[y_1,y_2]=\emptyset$. Therefore
\begin{equation*}
\sum_{x\in N_i}\mathbf 1\{x<y_1\}\mathbf 1\{\Succ(x,N_i)>y_2\}
\le |N_i|\cdot \mathbf 1\{N_i\cap [y_1,y_2]=\emptyset\}.
\end{equation*}
Using $|N_i|\le 1/c$ almost surely and taking expectations in
\eqref{eq:campbell}, we obtain
\begin{equation}\label{eq:bound}
\mathbb E\left[\sum_{x\in N_i}\mathbf 1\{x<y_1\}\mathbf 1\{\Succ(x,N_i)>y_2\}\right]
\le \frac{1}{c}\Pr_{\sigma_i}\big[N_i\cap [y_1,y_2]=\emptyset\big].
\end{equation}

It remains to show that the probability on the right-hand side of \eqref{eq:bound} is $0$ whenever
$V_i(y_1,y_2)=0$. By symmetry, $N_{-i}=\bigcup_{j\ne i}N_j$ is the union of $n-1$ independent
copies of $N_i$, so for every $z$,
\begin{equation}\label{eq:prod}
V_i(y_1,z)=\Pr_{\sigma_i}\big[N_i\cap [y_1,z]=\emptyset\big]^{\,n-1}.
\end{equation}
If $V_i(y_1,y_2)=0$, then \eqref{eq:prod} implies
$\Pr_{\sigma_i}[N_i\cap [y_1,y_2]=\emptyset]=0$. Plugging this into \eqref{eq:bound} yields that
\eqref{eq:campbell} is $0$, hence \eqref{eq:int0} holds.

Now we are ready to prove that $\Lambda_i(L)=0$. Since $L\subseteq A$, the equation \eqref{eq:int0} applies for almost all
$y\in L$ (with $y_2=\overline{Z}(y)$). Assume towards contradiction that $\Lambda_i(L)>0$. For $m,k\ge 1$ define
\begin{equation*}
L_{m,k}
:=\Bigl\{y\in L:\ \Pr\bigl[\mathrm{Succ}_{\sigma_i}(y)>\overline{Z}(y)+\tfrac1m\bigr]\ge \tfrac1k\Bigr\}.
\end{equation*}
Then $L=\bigcup_{m,k\ge 1}L_{m,k}$, hence $\Lambda_i(L_{m,k})>0$ for some $(m,k)$.

The map $y\mapsto \overline{Z}(y)$ is nondecreasing, hence it has at most countably many discontinuities; denote
their set by $D$. By Lemma~\ref{lemma:no_mass}, $\Lambda_i(D)=0$. Choose $y\in L_{m,k}\setminus D$ such that for all $\varepsilon>0$, $\Lambda_i(L_{m,k}\cap (y-\varepsilon,y])>0$\footnote{Such $y$ exists. Assume otherwise towards contradiction, then for every $y\in  L_{m,k}\setminus D$, there exists $\varepsilon_y>0$ such that $\Lambda_i(L_{m,k}\cap(y-\varepsilon_y,y])=0$.  Then $\{L_{m,k}\cap(y-\varepsilon_y,y]\}_{y\in L_{m,k}\setminus D}$ forms a cover of $L_{m,k}\setminus D$. By the Lindelöf theorem, there exists a countable subcover of $L_{m,k}
\setminus D$, but this would imply that $\Lambda_i(L_{m,k})=0$, leading to a contradiction.}. Then, by continuity of $\overline{Z}$ on $y$, there exists $\delta>0$ such that
\begin{equation*}
\Lambda_i\bigl(L_{m,k}\cap(y-\delta,y]\bigr)>0
\qquad\text{and}\qquad
\sup_{x\in(y-\delta,y]}|\overline{Z}(x)-\overline{Z}(y)|<\tfrac1{2m}.
\end{equation*}
In particular, for every $x\in L_{m,k}\cap(y-\delta,y]$ we have $\overline{Z}(x)+\frac{1}{m}>\overline{Z}(y)$ and thus
\begin{equation*}
\Pr\bigl[\mathrm{Succ}_{\sigma_i}(x)>\overline{Z}(y)\bigr]
\ \ge\
\Pr\bigl[\mathrm{Succ}_{\sigma_i}(x)>\overline{Z}(x)+\tfrac1m\bigr]
\ \ge\ \tfrac1k.
\end{equation*}
Applying \eqref{eq:int0} at $y_1=y$ and $y_2=\overline{Z}(y)$ yields
\begin{equation*}
0=\int_c^1 1\{x<y\}\Pr\bigl[\mathrm{Succ}_{\sigma_i}(x)>\overline Z(y)\bigr]\Lambda_i(dx)
\ \ge\ \frac1k\,\Lambda_i\bigl(L_{m,k}\cap(y-\delta,y)\bigr)>0,
\end{equation*}
a contradiction. Therefore $\Lambda_i(L)=0$.
\end{proof}

\subsection{Lemma \ref{lemma:increasing-differences}} \label{section:increasing-differences}

Fix $0\le x_1<x_2<1$ and $y_1<y_2$ with $x_2\leq y_1$. In the cross-difference, the prefix term $\int_0^x V_i(t,x)\,dt-c$ depends only on $x$, and the continuation term $J(y)$ depends only on $y$, so both cancel. Thus
\begin{equation*}
\begin{split}
&\mathcal H(x_2,y_2)+\mathcal H(x_1,y_1)-\mathcal H(x_2,y_1)-\mathcal H(x_1,y_2) \\
&\qquad=\int_{x_1}^{x_2}\big(V_i(t,y_1)1\{y_1<+\infty\}-V_i(t,y_2)1\{y_2<+\infty\}\big)\,dt \ \ge\ 0,
\end{split}
\end{equation*}
since $y_1<y_2$ and $z\mapsto V_i(t,z)$ is weakly decreasing. This proves that $\mathcal H$ is increasing differences on its domain.

To prove monotonicity of $\psi$, fix $x_1<x_2$ and set $y_j:=\psi(x_j)$. If $y_1<x_2$ then
$y_1\le y_2$ holds because $y_2\ge x_2$. Otherwise $y_1\ge x_2$. If (towards a contradiction)
$y_1>y_2$, then optimality gives
\begin{equation*}
\mathcal H(x_1,y_1)\ge \mathcal H(x_1,y_2)
\qquad\text{and}\qquad
\mathcal H(x_2,y_2)\ge \mathcal H(x_2,y_1),
\end{equation*}
hence
\begin{equation*}
\mathcal H(x_1,y_1)+\mathcal H(x_2,y_2)\ \ge\ \mathcal H(x_1,y_2)+\mathcal H(x_2,y_1).
\end{equation*}
But the increasing differences condition applied with $x_1<x_2$ and $y_2<y_1$ yields the reverse inequality, so
equality must hold. In particular, $y_2$ is also a maximizer at $x_1$, and since $\psi(x_1)$ is the
smallest maximizer, we must have $y_1\le y_2$, contradicting $y_1>y_2$. Therefore $y_1\le y_2$
and $\psi$ is nondecreasing. 
\subsection{Proof $\widetilde{M}$ is a best reply}\label{section:tildeM}
\begin{lemma}
Let $M\in \mathrm{BR}(\sigma_{-i})$ satisfy $x=\min(M)$ and let $M'\in \mathrm{BR}(\sigma_{-i})$
satisfy $x\in M'$. Define
\begin{equation*}
\widetilde M := (M'\cap[0,x))\cup M .
\end{equation*}
Then $\widetilde M\in \mathrm{BR}(\sigma_{-i})$.
\end{lemma}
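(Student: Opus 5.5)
The plan is an exchange argument based on the additive structure of the winning probability in \eqref{eq:H}. Write
\[
P:=M'\cap[0,x),\qquad S':=M'\cap(x,1],\qquad S:=M\cap(x,1].
\]
Then $M'=P\cup\{x\}\cup S'$, $M=\{x\}\cup S$ (because $x=\min(M)$), and $\widetilde M=P\cup\{x\}\cup S$.

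The first step is to check that the payoff of any finite set containing $x$ splits into a part depending only on the points up to $x$ and a part depending only on $x$ and the points after it. By Lemma~\ref{lemma:no_mass} the opponents' processes are atomless, so \eqref{eq:H} holds for every ordered vector, ties included. For a finite $Q\subseteq[0,x)$ and a finite $R\subseteq(x,1]$ it gives
\[
u_i(Q\cup\{x\}\cup R,\sigma_{-i})=\mathcal A(Q)+\mathcal B(R).
\]
Here $\mathcal A(Q)$ is the sum of $\int_{y_{\ell}}^{y_{\ell+1}}V_i(t,y_{\ell+1})\,dt$ over consecutive points $y_\ell<y_{\ell+1}$ of $\{0\}\cup Q\cup\{x\}$, minus $c(|Q|+1)$. $\mathcal B(R)$ is the same sum over consecutive points of $\{x\}\cup R$, minus $c|R|$. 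This works because every interval term in \eqref{eq:H} is attached to a pair of consecutive points, and the point $x$ separates the two groups. So no term involves both a point of $Q$ and a point of $R$.

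With this decomposition, let $u^\ast:=\max_{N}u_i(N,\sigma_{-i})$ and compute:
\[
u_i(M')=\mathcal A(P)+\mathcal B(S')=u^\ast,\qquad u_i(M)=\mathcal A(\emptyset)+\mathcal B(S)=u^\ast,\qquad u_i(\widetilde M)=\mathcal A(P)+\mathcal B(S).
\]
Now compare $M'$ with the set $\{x\}\cup S'$, which drops the prefix $P$. Its payoff is $\mathcal A(\emptyset)+\mathcal B(S')\le u^\ast=\mathcal A(P)+\mathcal B(S')$, hence $\mathcal A(P)\ge\mathcal A(\emptyset)$. Therefore
\[
u_i(\widetilde M)=\mathcal A(P)+\mathcal B(S)\ge\mathcal A(\emptyset)+\mathcal B(S)=u_i(M)=u^\ast .
\]
Since $u^\ast$ is the maximum, equality holds, so $\widetilde M\in\mathrm{BR}(\sigma_{-i})$.

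The only delicate point is the decomposition itself. One has to confirm that \eqref{eq:H} applies to arbitrary finite sets, including $\{x\}\cup S'$, which need not lie in the support of $\sigma_i$. This relies only on atomlessness of the opponents' intensities (Lemma~\ref{lemma:no_mass}), so that $V_i(x,x)=1$ and the tie-breaking term vanishes. No first-order conditions or properties of the support are needed.
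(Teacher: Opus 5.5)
Your proof is correct and uses essentially the same exchange argument as the paper: both split, via \eqref{eq:H}, the payoff of a set containing $x$ into a prefix part and a continuation part with no cross terms. The only difference is cosmetic. The paper shows that both continuations $M\cap(x,1]$ and $M'\cap(x,1]$ attain the optimal continuation value $J(x)$, which gives $u_i(\widetilde M,\sigma_{-i})=u_i(M',\sigma_{-i})$. You instead compare prefixes, $\mathcal A(P)\ge\mathcal A(\emptyset)$, and get $u_i(\widetilde M,\sigma_{-i})\ge u_i(M,\sigma_{-i})$.
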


\begin{proof}
Let $P:=M'\cap[0,x]$. For any finite set
$S\subseteq(x,1]$, write $S=\{y_1<\cdots<y_m\}$ (allowing $m=0$) and set $y_0:=x$.
By \eqref{eq:H}, we have the decomposition
\begin{equation}\label{eq:prefix-cont-decomp}
u_i(P\cup S,\sigma_{-i})=u_i(P,\sigma_{-i})+\sum_{\ell=0}^{m-1}\left(\int_{y_\ell}^{y_{\ell+1}} V_i(t,y_{\ell+1})\,dt - c\right).
\end{equation}
The second term in \eqref{eq:prefix-cont-decomp} depends only on the points chosen strictly after $x$, and its maximum over all $S\subseteq(x,1]$ is $J(x)$ by definition. Hence, for fixed prefix $P$, the best achievable payoff among strategies that coincide with $P$ on $[0,x]$ equals $u_i(P,\sigma_{-i})+J(x)$.

Since $M'$ is a best reply and $M'=P\cup (M'\cap(x,1])$, its continuation after $x$ must attain this maximum, so
\begin{equation*}
u_i(M',\sigma_{-i}) = u_i(P,\sigma_{-i})+J(x).
\end{equation*}
Likewise, since $M$ is a best reply with $\min(M)=x$, its continuation after $x$ attains $J(x)$. Therefore, using \eqref{eq:prefix-cont-decomp} again with the same prefix $P$ and continuation $M\cap(x,1]$, we obtain
\begin{equation*}
u_i(\widetilde M,\sigma_{-i}) = u_i(P,\sigma_{-i})+J(x) = u_i(M',\sigma_{-i}).
\end{equation*}
Since $M'$ is a best reply, $\widetilde M$ is a best reply. 
\end{proof}
\subsection{Lemma \ref{lemma:K_abs}}\label{section:K_abs}
Recall from proof of Lemma \ref{prop:a.c.} that $f_\varepsilon(t) = \frac{V_i(t,x-\varepsilon)-V_i(t,x)}{\varepsilon}$ is strictly increasing with respect $t$ for $t\in (\max\{c,x-c\},x]$. Therefore, taking $\varepsilon\rightarrow 0$, we deduce that $\partial_2 V_i(t,x)$ is $\mu-$a.e. strictly decreasing for $t\in (\max\{c,x-c\},x]$. By Lemma \ref{prop:a.c.}, since $\Lambda_i$ is absolutely continuous, there exists an $\mu$-a.e. unique $\lambda_i$ such that $\Lambda_i(B) = \int_B \lambda_i(x) dx$ for every measurable set $B\subseteq K$. To show the Lemma is sufficient to show that $\lambda_i(x)>0$ $\mu-$a.e. in $K$. Now,  observe that $\lim_{\varepsilon\rightarrow 0}\frac{\Lambda_i([x,x+\varepsilon])}{\varepsilon}=\lambda_i(x)$.  
On the other hand, for $\varepsilon\in (0,c)$, $(1-\Lambda_i([x,x+\varepsilon]))^{n-1} = V_i(x,x+\varepsilon)$, and so $\lambda_i(x) = -\frac{1}{(n-1)}\partial_2 V_i(x,x)$ $\mu-$a.e. Therefore to show the statement is sufficient to show that $\partial_2 V_i(x,x)<0$ $\mu-$a.e. Now, for $x\in K$, there exists a best-reply $M$ with $x=\min(M)$. Taking FOC \eqref{eq:FOC-chain}, $V_i(x,x_2) = 1+\int_0^x \partial_2 V_i(t,x)dt$, and so since $V_i(t,x)\leq1$, and $\partial_2 V_i(t,x)$ is $\mu-$a.e non-increasing in $t\leq x$ and $\mu-$a.e strictly decreasing for $t$ sufficiently close to $x$, 
\begin{equation*}
    x\partial_2V_i(x,x)=\int_0^x \partial_2 V_i(x,x)dt<\int_0^x \partial_2 V_i(t,x)dt\leq 0 \text{ for }\mu-\text{a.e. } x\in K.
\end{equation*} 
\subsection{Lemma \ref{lemma:equivalent_formulation}}

For $M_i\in\text{supp}(\sigma_i)$,
\begin{equation*}
   W_i(M_i,\sigma_{-i}) =\sum_{\ell=0}^{k-1}\int_{y_\ell}^{y_{\ell+1}}V_i(t,y_{\ell+1})\,dt\leq\sum_{\ell=0}^{k-1}\int_{0}^{y_{\ell+1}}V_i(t,y_{\ell+1})\,dt=\sum_{x\in M_i}W_i(\{x\},\sigma_{-i}) = c\cdot|M_i|,
\end{equation*}
where the inequality is deduced from monotonicity of $V_i$. Therefore, the payoff of agent $i$ playing $M_i$ is at most zero. Since $\emptyset$ has zero payoff and $M_i$ is a best reply, the payoff of agent $i$ playing $M_i$ is zero. On the other hand, any unilateral deviation $M_i'$ holds 
\begin{equation*}
   u_i(M_i',\sigma_{-i})= W_i(M_i',\sigma_{-i})-c\cdot|M_i'|\leq \sum_{x\in M_i'}W_i(\{x\},\sigma_{-i})-c\cdot|M_i'|=0=u_i(N_i,\sigma_{-i}).
\end{equation*}
Therefore $\sigma=(\sigma_1,\dots,\sigma_n)$ forms a symmetric Nash equilibrium. Conversely, if it forms a symmetric Nash equilibrium, by Theorem \ref{theorem:main:payoff} it has zero payoff. So, any unilateral single-action-time unilateral deviation $\{x\}$ has at most payoff $0$. In particular $W_i(\{x\},\sigma_{-i})\leq c$. Now, for $x\in[c,1]$, there exists a best-reply $M$ such that $x\in M$. By subadditivity of $W_i$ and zero-payoff condition,
\begin{equation*}
    0=W_i(M,\sigma_{-i})-c\cdot |M| \leq \sum_{y\in M}W_i(\{y\},\sigma_{-i}) -c\cdot |M| \leq W_i(\{x\},\sigma_{-i})-c.
\end{equation*}
So $W_i(\{x\},\sigma_{-i})\geq c$, deducing the result.

\subsection{Proof of Uniqueness}\label{section:uniqq}

\begin{lemma}\label{lemma:induction}
Fix $k\geq 1$. Let $F:=F_{Y_k}$ be the CDF of $Y_{k}$, and assume:
\begin{enumerate}[label=(A\arabic*)]
    \item $F$ is $C^\infty$ on an open interval containing $[c_k,c_{k+1}]$, and its density $f:=F'$ satisfies $f(y)>0$ for all $y\in (c_k,c_{k+1})$, $F(c_k)=0$ and $F(c_{k+1})=1$.
    \item A function $g$ is already defined on $[0,c_{k+1}]$, continuous and nondecreasing, and satisfies
    \begin{equation*}
        g(c_{k+1})=c_k,\quad g(x)<x-c\,\forall x\in [c,c_{k+1}],\quad\text{and}\quad F(g(x))=0\,\forall x\in [0,c_{k+1}].
    \end{equation*}
    \item The functional equation
    \begin{equation}\label{eq:appendix}
          \int_0^x [F_{Y_k}(g(t))+(F_{Y_k}(t)-F_{Y_k}(g(x)))1_{\{t>g(x)\}}]^{n-1}dt = c \quad \text{for all }x\in [c_{k+1},c_{k+2}]
    \end{equation}
    holds at the left point $x=c_{k+1}$, i.e. $\int_0^{c_{k+1}} F(t)^{n-1}dt = c$.
\end{enumerate}
Then, there exists a unique extension of $g$ in $[c_{k+1},\min\{c_{k+2},1\}]$ such that $g(c_{k+1})=c_k$, and $g(c_{k+2})=c_{k+1}$ if $c_{k+2}\leq 1$ and $g$ satisfies \eqref{eq:appendix}.
Moreover, the function $g$ is strictly increasing and $C^\infty$ in $[c_{k+1},\min\{c_{k+2},1\}]$, and $g(x)<x-c$ for $x\in[c_{k+1},\min\{c_{k+2},1\}]$.
In particular the cdf of $\psi(Y_k)$ holds (A1) and the continuation of $g$ holds (A2).
\end{lemma}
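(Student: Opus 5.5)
Starting from the functional equation \eqref{eq:appendix}, I would differentiate it in $x$ and reduce it to an explicit ODE for $g$ on the new layer. Fix $x\in[c_{k+1},\min\{c_{k+2},1\}]$ and write $u:=g(x)\in[c_k,c_{k+1}]$. By (A2), $F(g(t))=0$ for $t\le c_{k+1}$. Since $g(x)<x-c<c_{k+1}$, the indicator term is active on $(u,c_{k+1}]$, where the integrand equals $(F(t)-F(u))^{n-1}$. On $(c_{k+1},x]$ we have $F(t)=1$, so the integrand there is $(F(g(t))+1-F(u))^{n-1}$. Hence \eqref{eq:appendix} becomes
\begin{equation*}
\Psi(x,u):=\int_u^{c_{k+1}}(F(t)-F(u))^{n-1}\,dt+\int_{c_{k+1}}^{x}\bigl(1-F(u)+F(g(t))\bigr)^{n-1}\,dt=c .
\end{equation*}
This is a Volterra-type equation whose unknown $g$ enters through the value $u=g(x)$ and through $F\circ g$ on $[c_{k+1},x]$. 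At $x=c_{k+1}$, (A3) states $\int_0^{c_{k+1}}F^{n-1}=c$, which is consistent with $u=c_k$ because $F(c_k)=0$.

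Differentiating in $x$ gives
\begin{equation*}
\bigl(1-F(u)+F(g(x))\bigr)^{n-1}=\bigl(1-F(u)+F(u)\bigr)^{n-1}=1
\end{equation*}
from the upper limit. The $u$-derivative is $-(n-1)f(u)\,I(x,u)$, where
\begin{equation*}
I(x,u):=\int_u^{c_{k+1}}(F(t)-F(u))^{n-2}dt+\int_{c_{k+1}}^x(1-F(u)+F(g(t)))^{n-2}dt .
\end{equation*}
The lower-limit boundary term vanishes. This yields
\begin{equation*}
g'(x)=\frac{1}{(n-1)f(g(x))\,I(x,g(x))}.
\end{equation*}
For $n=2$ this matches the remark $g'=1/(f(g)(x-g))$. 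To turn it into a finite-dimensional ODE, I would introduce auxiliary variables as in the first-layer argument: $G_m(x):=\int_{c_{k+1}}^x(1-F(g(x))+F(g(t)))^m dt$ for $m\le n-2$. Their derivatives are $G_m'=1-m f(g)g'\,G_{m-1}$, obtained exactly as the recursion for the $I_k$. The first integral in $I$ is a smooth function of $u$ alone. The resulting system for $(g,G_1,\dots,G_{n-2})$ has a right-hand side that is $C^\infty$ wherever $f(g)>0$ and $I>0$. Picard--Lindel\"of then gives local existence and uniqueness from $g(c_{k+1})=c_k$, and $C^\infty$ regularity follows from (A1). Conversely, integrating the ODE recovers $\Psi(x,g(x))=c$ because $\Psi(c_{k+1},c_k)=c$, so every continuous solution of \eqref{eq:appendix} solves the ODE and vice versa. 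This equivalence gives uniqueness.

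The main obstacle is the left endpoint. There $u=c_k$, and $f(c_k)$ may vanish, since (A1) only gives positivity on the open interval; for $n=2$, moreover, $I(c_{k+1},c_k)=c_{k+1}-c_k$ is harmless. So the vector field can blow up at the start. I would handle this by passing to the inverse $\psi=g^{-1}$, with $\psi'(y)=(n-1)f(y)I$, which is regular even where $f=0$. I would solve that equation forward from $y=c_k$ and then invert. This requires $\psi'>0$ on $(c_k,c_{k+1})$, which holds because $f>0$ there and $I\ge\Psi=c>0$, using that the integrands lie in $[0,1]$. Strict monotonicity of $g$ and its invertibility then follow.

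Finally, I would continue the solution until $g$ hits $c_{k+1}$, defining $c_{k+2}$ that way, or until $x=1$. The bound $g(x)<x-c$ on the layer should come from a contradiction argument: if $x-g(x)\le c$, then $\Psi(x,g(x))<x-g(x)\le c$, since the integrand is strictly less than $1$ on a set of positive measure. The same argument rules out $I$ degenerating, so the maximal solution reaches the layer end. Then $\psi(Y_k)$ has CDF $F\circ g$, which is $C^\infty$ with positive density $f(g)g'$. This gives (A1) for the next layer; (A2) follows from the construction, and (A3) at $c_{k+2}$ is the equation itself evaluated there.
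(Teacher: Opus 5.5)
Your proposal is correct and shares the paper's architecture. You drop the contribution of $[0,g(x)]$ via (A2) and differentiate the layer equation to get $g'=1/\bigl((n-1)f(g)I\bigr)$. You close the system with auxiliary moment integrals and apply Picard--Lindel\"of. You define $c_{k+2}$ as the hitting time of $c_{k+1}$, get $g(x)<x-c$ from the integrand being $<1$, and push $F$ forward through $g$. Your $I(x,g(x))$ is exactly the paper's $I_{n-2}(x)$, split at $c_{k+1}$.

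Where you genuinely differ is in how the ODE is started. The paper applies Picard--Lindel\"of on $U=\{g\in(c_k,c_{k+1}),\,I_{n-2}>0\}$, but the initial value $g(c_{k+1})=c_k$ lies on $\partial U$, where $1/f(g)$ may blow up. (A1) only gives $f>0$ on the open interval, and read literally (a CDF that is $C^\infty$ across $c_k$) it even forces $f(c_k)=0$. Your passage to $\psi=g^{-1}$ fixes this cleanly, provided you also reparametrize the auxiliary variables. With $H_m(y):=\int_y^{c_{k+1}}(F(t)-F(y))^m\,dt$ and $\tilde G_m(y):=G_m(\psi(y))$, one gets $\psi'=(n-1)f(y)\bigl(H_{n-2}(y)+\tilde G_{n-2}\bigr)$, $\tilde G_m'=\psi'-m f(y)\tilde G_{m-1}$, and $\tilde G_0=\psi-c_{k+1}$. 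This is a \emph{linear} system with $C^\infty$ coefficients on $[c_k,c_{k+1}]$. Existence on the whole layer and uniqueness are then immediate, and $c_{k+2}=\psi(c_{k+1})$ needs no maximal-extension argument. For $n=2$ it reduces to the paper's $\psi'=F_k'(y)(\psi-y)$.

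The paper's direct route is adequate whenever $f(c_k)>0$. This does hold along the actual construction, since $f_{Y_{k+1}}=f(g)g'=1/((n-1)I_{n-2})$ and $f_{Y_1}(c)=1/((n-1)c)$. However, the lemma as stated does not assume it.

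Two small repairs are needed.

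First, the justification ``$g(x)<x-c<c_{k+1}$'' is circular, since $g(x)<x-c$ is a conclusion. It is also false in general, because layers can be longer than $c$ (e.g.\ $c_2-c_1>c$). All you need is $g(x)\le c_{k+1}$ on the layer, which holds by the definition of $c_{k+2}$.

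Second, ``every continuous solution solves the ODE'' does not follow from integrating the ODE; it needs a regularity step. For example, a continuous nondecreasing solution cannot be constant on an interval, because there $\partial_x\Psi(x,u)=1$. So it is strictly increasing. The implicit function theorem in the $x$-variable (again $\partial_x\Psi=1$ at solutions) then makes $\psi$ of class $C^1$. The paper skips this step too.
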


\begin{proof}
We consider the equation \eqref{eq:appendix} and construct an extension of $g$ to the right of $c_{k+1}$ on an open interval where $g(x)<c_{k+1}$, and then define $c_{k+2}$ as the first time $g$ reaches $c_{k+1}$.

By reducing the equation \eqref{eq:appendix} on an interval where $g(x)<c_{k+1}$, we have
\begin{equation*}
\begin{aligned}
    &\int_0^x\Big(F(g(t))+\big(F(t)-F(g(x))\big)\mathbf 1_{\{t>g(x)\}}\Big)^{n-1}dt\\
    &\qquad=
\int_0^{g(x)} \big(F(g(t))\big)^{n-1}dt
+\int_{g(x)}^{x}\Big(F(g(t))+F(t)-F(g(x))\Big)^{n-1}dt.
\end{aligned}
\end{equation*}
For $t\in[0,g(x)]$ we have $t<c_{k+1}$, so by assumption (A2) $F(g(t))=0$. Hence the first integral is $0$, and equation \eqref{eq:appendix} is equivalent to
\begin{equation}\label{eq:layer_red}
\int_{g(x)}^{x}\Big(F(g(t))+F(t)-F(g(x))\Big)^{n-1}dt=c.
\end{equation}
Moreover, for any $x>c_{k+1}$ satisfying \eqref{eq:layer_red} we must have $g(x)>c_k$:
if $g(x)=c_k$, then $F(g(x))=0$ and, since $x>c_{k+1}$ implies $F(t)=1$ for all $t\in[c_{k+1},x]$, we get
\begin{equation*}
    \int_{g(x)}^{x}\Big(F(g(t))+F(t)-F(g(x))\Big)^{n-1}dt
\ge \int_{c_{k+1}}^{x} 1\,dt + \int_0^{c_{k+1}} F(t)^{n-1}dt > c.
\end{equation*}

Now, we will follow a similar strategy that we used to show existence and uniqueness of $X$. For $m=0,1,\dots,n-1$ and $x$ with $g(x)<c_{k+1}$ define
\begin{equation*}
    I_m(x):=\int_{g(x)}^{x} B(t,x)^m\,dt
\end{equation*}
with $B(t,x):=F(g(t))+F(t)-F(g(x))$. Then \eqref{eq:layer_red} reads $I_{n-1}(x)\equiv c$. Fix $m\geq1$, by Leibniz' rule,
\begin{equation*}
    I_m'(x)=B(x,x)^m+\int_{g(x)}^{x}\partial_x\big(B(t,x)^m\big)\,dt - B(g(x),x)^m\,g'(x).
\end{equation*}
Using the assumptions, for $m=1,\dots,n-1$, we deduce 
\begin{equation}\label{eq:Imprime}
    I_m'(x) = 1-m f(g(x))g'(x)\,I_{m-1}(x),
\end{equation}
Since $I_{n-1}\equiv c$, we have $I_{n-1}'\equiv 0$, and \eqref{eq:Imprime} with $m=n-1$ yields
\begin{equation}\label{eq:gprime}
g'(x)=\frac{1}{(n-1)f(g(x))\,I_{n-2}(x)}.
\end{equation}
Substituting \eqref{eq:gprime} into \eqref{eq:Imprime} gives for $m=1,\dots,n-2$:
\begin{equation}\label{eq:Imsystem}
I_m'(x)=1-\frac{m}{n-1}\frac{I_{m-1}(x)}{I_{n-2}(x)}.
\end{equation}

At $x=c_{k+1}$ we have $g(c_{k+1})=c_k$ by assumption~(A2). Moreover, since $F(g(t))=0$ for $t\le c_{k+1}$
and $F(g(c_{k+1}))=F(c_k)=0$, we obtain for $m=0,1,\dots,n-2$,
\begin{equation*}
    I_m(c_{k+1})=\int_{c_k}^{c_{k+1}}F(t)^m\,dt,
\end{equation*}
and in particular $I_{n-2}(c_{k+1})>0$. Also, $I_{n-1}(c_{k+1})=\int_0^{c_{k+1}}F(t)^{n-1}dt=c$ by~(A3).

Let $Y(x):=(g(x),I_0(x),\dots,I_{n-2}(x))$. On the open set
\begin{equation*}
U:=\{(g,I_0,\dots,I_{n-2}): g\in(c_k,c_{k+1}),\ I_{n-2}>0\},
\end{equation*}
the right-hand side of the system \eqref{eq:gprime}--\eqref{eq:Imsystem} is $C^\infty$ (hence locally
Lipschitz), since $f>0$ on $(c_k,c_{k+1})$. Moreover $I_{n-2}(c_{k+1})=\int_{c_k}^{c_{k+1}}F(t)^{n-2}dt>0$.
Therefore, by the Picard--Lindel\"of theorem, there exists $\varepsilon>0$ and a unique solution on
$(c_{k+1},c_{k+1}+\varepsilon)$, which is $C^\infty$ there. We extend it maximally while it stays in $U$,
and define $c_{k+2}$ as the first time where $g$ reaches $c_{k+1}$ (or $c_{k+2}=1$ if this does not occur). This proves the existence and uniqueness of $g$.

By \eqref{eq:gprime}, for $x\in(c_{k+1},c_{k+2})$ we have $g(x)\in(c_k,c_{k+1})$, hence $f(g(x))>0$ and $I_{n-2}(x)>0$,
so $g'(x)>0$. Thus $g$ is strictly increasing on $(c_{k+1},c_{k+2})$. Since the vector field is $C^\infty$, the solution is $C^\infty$ on $(c_{k+1},c_{k+2})$.

Finally, let's show that $g(x)<x-c$. Since $g$ is increasing, for $t\in[g(x),x]$ we have $g(t)\le g(x)$ and thus $F(g(t))\le F(g(x))$, implying
$0\le B(t,x)\le F(t)\le 1$. Moreover, $g(x)<c_{k+1}$ implies $F(t)<1$ on $(g(x),c_{k+1})$,
hence $B(t,x)^{n-1}<1$ for all $t\in(g(x),c_{k+1})$. Using \eqref{eq:layer_red},
\begin{equation*}
    c=\int_{g(x)}^{x}B(t,x)^{n-1}dt \;<\; \int_{g(x)}^{x}1\,dt \;=\; x-g(x),
\end{equation*}
so $g(x)<x-c$ for all $x\in(c_{k+1},c_{k+2})$.

For the last claim, define $\psi:=g^{-1}$ on $(c_k,c_{k+1})$ and $Y_{k+1}:=\psi(Y_k)$.
Then for $x\in(c_{k+1},c_{k+2})$ we have
\begin{equation*}
F_{Y_{k+1}}(x)=\Pr(\psi(Y_k)\le x)=\Pr(Y_k\le g(x))=F(g(x)),
\qquad
f_{Y_{k+1}}(x)=f(g(x))\,g'(x)>0,
\end{equation*}
so $F_{Y_{k+1}}$ satisfies (A1) on the next layer, and the extended $g$ satisfies (A2) on $[0,c_{k+2}]$ by construction.
\end{proof}

To complete the proof of Theorem~\ref{theorem:main:unique}, let $X=Y_1$ denote the unique first-layer random variable
constructed above, with CDF $F:=F_{Y_1}$. If $\Pr[X=+\infty]>0$ (equivalently $F(1)<1$), then the lower bound $V_i(t,z)\ge \Pr[N_{-i}=\emptyset]>0$ for all $t\le z$ forces $\psi\equiv +\infty$ on the finite part by Property~\ref{prop:value_zero}, so there is no layer extension to perform. Hence assume $F(1)=1$ and $\text{supp}(X)=[c_1,c_2]$ in what follows.

Evaluating \eqref{eq:FX_first_layer} at $x=c_2$ (so that $F(c_2)\leq1$) gives
\begin{equation*}
c=\int_{0}^{c_2}F(t)^{n-1}\,dt=\int_{c_1}^{c_2}F(t)^{n-1}\,dt < \int_{c_1}^{c_2}1\,dt=c_2-c_1,
\end{equation*}
hence $c_2>2c_1(=2c)$.

Define on $[0,c_2]$ the auxiliary function
\begin{equation*}
g(x):=x-(c_2-c_1).
\end{equation*}
Then $g$ is continuous and nondecreasing, satisfies $g(c_2)=c_1$, and for all $x\in[c_1,c_2]$ we have
$g(x)<x-c_1$ (since $c_2>2c_1$). Moreover, for all $x\in[0,c_2]$ we have $g(x)\le c_1$, hence
$F(g(x))=0$. Finally, the layer condition holds at $x=c_2$ because $\int_0^{c_2}F(t)^{n-1}dt=c$.

Therefore Lemma~\ref{lemma:induction} applies with $k=1$ and yields a unique extension of $g$ to the next layer
$[c_2,c_3]$, solving the functional equation \eqref{eq:appendix} and satisfying $g(c_3)=c_2$. On $(c_2,c_3)$ the extension is strictly increasing and $C^\infty$, hence invertible. Defining
$\psi:=g^{-1}$ on $(c_1,c_2)$ and $Y_2:=\psi(Y_1)$, the lemma also implies that $F_{Y_2}$ satisfies (A1), so we may iterate the same argument layer-by-layer. This shows that $g$ (equivalently $\psi$) is uniquely determined on every layer, hence the symmetric equilibrium is unique.